\documentclass[letterpaper]{article}
\usepackage[margin=1in]{geometry}
\usepackage[T1]{fontenc}
\usepackage{amsmath}
\usepackage{amssymb}
\usepackage{amsthm}
\usepackage{xcolor}
\definecolor{darkgreen}{rgb}{0,0.5,0}
\usepackage{hyperref}
\hypersetup{
    unicode=false,
    colorlinks=true,
    linkcolor=red,
    citecolor=darkgreen,
    filecolor=magenta,
    urlcolor=blue
}
\usepackage{authblk}
\usepackage{enumerate}
\usepackage{aliascnt}

\theoremstyle{plain}
\newtheorem{theorem}{Theorem}[section]

\newaliascnt{proposition}{theorem}
\newtheorem{proposition}[proposition]{Proposition}
\aliascntresetthe{proposition}

\newaliascnt{lemma}{theorem}
\newtheorem{lemma}[lemma]{Lemma}
\aliascntresetthe{lemma}

\newaliascnt{corollary}{theorem}
\newtheorem{corollary}[corollary]{Corollary}
\aliascntresetthe{corollary}

\newaliascnt{claim}{theorem}

\aliascntresetthe{claim}

\theoremstyle{definition}
\newaliascnt{definition}{theorem}
\newtheorem{definition}[definition]{Definition}
\aliascntresetthe{definition}

\newaliascnt{assumption}{theorem}

\aliascntresetthe{assumption}

\theoremstyle{remark}
\newaliascnt{remark}{theorem}
\newtheorem{remark}[remark]{Remark}
\aliascntresetthe{remark}

\newaliascnt{conjecture}{theorem}

\aliascntresetthe{conjecture}

\usepackage[capitalize,nameinlink]{cleveref}

\newcommand{\eps}{\varepsilon}
\newcommand{\mms}{\mathrm{MMS}}
\newcommand{\Like}{\textnormal{\textsc{Like}}}
\newcommand{\Rand}{\textnormal{\textsc{Rand}}}
\newcommand{\Env}{\mathrm{Env}}

\title{\bf Closing Gaps in Online Fair Division}
\author[1]{Tzeh Yuan Neoh}
\author[2]{Nicholas Teh}
\affil[1]{Harvard University, USA}
\affil[2]{University of Oxford, UK}
\date{}

\begin{document}
\maketitle

\begin{abstract}
We study the online fair division of indivisible items, where items arrive one at a time and must be allocated immediately and irrevocably. We address three central open questions in the literature.

First, we show that for every $n\ge 2$ agents, every fixed $k\ge 1$, and every $\alpha\in(0,1]$, no online algorithm can guarantee $\alpha$-PROP$k$ against an adaptive adversary. This remains true even when the total number of goods is known in advance, all values lie in $[0,1]$, and every good is positively valued by at most two agents. The impossibility extends to a broad range of standard envy-based, proportionality-based, and share-based fairness notions. We also establish an analogous impossibility for chores.

Second, in the setting with predictions, a lightweight form of future information, the maximum item value, was previously known to guarantee only $1/n$-PROP1, leaving open whether the dependence on $n$ is necessary. We give a deterministic $19/30$-PROP1 algorithm against adaptive adversaries that does not require knowing the total number of goods. Given an additional upper bound $\kappa\in[2,n]$ on the number of agents who value any good positively, the guarantee improves to $\max \{ 19/30, n/(n+\kappa) \}$. The guarantee remains a positive constant under any fixed one-sided prediction error below one. When predictions are exact and the total number of goods $m\ge n\log n$ is known in advance, a deterministic algorithm achieves the same $19/30$-PROP1 factor together with $O(\sqrt{m\log n/n})$ maximum additive envy after normalizing each agent's values by their maximum item value.

Third, against a non-adaptive adversary, we determine the tight high-probability PROP1 guarantee of the classical \textsc{Like} rule, which assigns each good uniformly among the agents who value it positively. Its guarantee improves when fewer agents value the same good, unlike uniform random allocation.
\end{abstract}

\section{Introduction}

Fair division of indivisible goods is a fundamental problem dealing with how to allocate discrete resources among agents with different preferences in a fair manner \cite{brams1996fairdivision,moulin2003fairdivision}. 
Applications of this model are widespread in modern ML systems.
Examples include assigning ad impressions or recommendation exposure, scheduling compute jobs in shared infrastructure, and distributing content moderation tasks or service requests as they arrive \cite{aleksandrov2015onlinefoodbank,aleksandrov2020online,neoh2026online,choo2026approxproponline}. In these applications, delaying every decision until the entire sequence is known is often impossible, and revising past decisions may be infeasible.

These applications can be studied through the \emph{online fair division} model, where indivisible goods arrive sequentially and must be
allocated immediately and irrevocably to an agent based on their reported values for the current good.\footnote{For our main impossibility result, we also establish a corresponding result for \emph{chores}, modeled as items with \emph{nonnegative costs}.}
A key challenge is to provide
ex post fairness guarantees on the final allocation, even though the algorithm made every decision online, without knowing the future.
Key fairness properties studied in (both the offline and online) fair division literature include \emph{envy-freeness up to one good} (EF1), \emph{proportionality up to one good} (PROP1), and \emph{maximin share fairness} (MMS).
However, while these fairness notions are often possible to achieve exactly (or approximately) offline, they can become impossible to even approximate online.

For instance, for MMS, Zhou et al.~\cite{zhou2023icml_mms_chores} showed that in the goods setting, even if we have information about the sum of agents' values over all incoming goods (otherwise known as \emph{normalization} information), no algorithm can achieve any approximation to MMS for three or more agents, whereas a $1/2$-approximation is possible in the case of two agents.
Subsequent work studies MMS under further assumptions.
Kulkarni et al.~\cite{KulkarniMehtaShahkar2025} consider sequentially
arriving agents with known valuation types, whereas Wang and
Wei~\cite{wang2026onlinefairBinary} retain sequentially arriving items
and restrict valuations to binary and personalized two-value classes.

Another line of work asks whether information about future goods changes what can be achieved online.
Neoh et al.~\cite{neoh2026online} study normalization information, which gives each agent's total value $v_i(G)$, and frequency predictions, which give each agent the multiset of values that will appear but not their arrival order.
Without future information, they show that no deterministic online algorithm guarantees any positive approximation to EF1. With normalization information, they obtain PROP1 for every number of agents $n$ and EF1 for $n=2$, but no positive EF1 approximation is possible for $n\ge3$. With frequency predictions, they match the best-known offline guarantees for a broad class of share-based fairness notions.

In a different direction, recent work studies restricted valuation domains
such as binary or personalized two-value instances~\cite{amanatidis2025personalized2value,wang2026onlinefairBinary}.
Kahana et al.~\cite{kahana2026perpetual} study prefix-wise fairness with an additive
error that may grow over time, while Adams and Segal-Halevi~\cite{adams2026perpetuallyfairassignmentsbalanced} study ordinal PROP1 and PROP2 in a repeated assignment model based on sequences of permutations.

Benadè et al.~\cite{benade2018envyvanish} and Neoh et al.~\cite{neoh2026online} show that exact PROP1 need not exist online.
PROP1 has a long history in the offline fair division literature as a weak benchmark that is implied by EF1 and MMS, so it remains a reasonable fairness target even when stronger notions are unattainable.

This leaves a basic and conceptually important question: are the widespread negative results in the online fair division literature caused mainly by the strength of notions such as EF1, EFX, or MMS, or does the standard online model already rule out every positive approximation to PROP1 \cite{neoh2026online,choo2026approxproponline}?
PROP1 is weaker than EF1 and many share-based guarantees, and asks only that each agent reach their proportional share after adding a single good outside their bundle. Thus, a positive approximation would identify a meaningful fairness guarantee under irrevocable online allocation. Conversely, an impossibility for PROP1 would show that the difficulty comes from the standard online model itself and would help explain why positive results change the information, valuation, or adversarial assumptions.

Choo et al.~\cite{choo2026approxproponline} initiated a systematic study of this question. They showed that several natural greedy rules do not guarantee any positive multiplicative approximation to PROP1 against an adaptive adversary. Against a non-adaptive adversary, they proved that the uniform random allocation rule \Rand{}, which assigns each good uniformly among all agents, has a PROP1 approximation factor of $\Theta(1/\log(n/\delta))$ with probability at least $1-\delta$. With exact maximum item value (MIV) predictions, they obtained a deterministic $1/n$-PROP1 guarantee against an adaptive adversary.

The classical \Like{} rule \cite{aleksandrov2015onlinefoodbank} is a natural comparison with \Rand{}: unlike \Rand{}, it assigns a good only among agents who value it positively. When few agents value a good, each such agent therefore receives it with greater probability. It was not known whether this improves the ex post PROP1 guarantee.

The current literature therefore leaves three central open questions:
\begin{enumerate}
    \item Can any online algorithm guarantee a positive approximation to PROP1, or even to PROP$k$, against an adaptive adversary \cite{neoh2026online,choo2026approxproponline}?
    \item Can exact MIV predictions give a constant PROP1 approximation independent of $n$, improving on $1/n$ \cite{choo2026approxproponline}?
    \item Against a non-adaptive adversary, can the classical \Like{} rule obtain a better PROP1 guarantee than \Rand{} when few agents value the same good positively \cite{aleksandrov2015onlinefoodbank,choo2026approxproponline}?
\end{enumerate}
We answer all three questions in this paper.

\subsection{Our Contributions}
Our results distinguish the fully online setting from two settings in which positive guarantees become possible: (i) limited information about future goods and (ii) randomization against a non-adaptive adversary.

\paragraph{1. No information about future goods: a strong negative result (\Cref{sec:inapprox}).}
For every $n\ge2$, $k\ge1$, and $\alpha\in(0,1]$, no deterministic online algorithm can guarantee $\alpha$-PROP$k$ against an adaptive adversary. This remains true even when the final number of goods is known in advance, all valuations lie in $[0,1]$, and each good is positively valued by at most two agents. The same impossibility also extends to randomized algorithms.

The proof first shows that $\alpha$-PROP$k$ implies $(\alpha/k)$-PROP1 and then reduces the problem to two agents. For two agents, we track the slack in each agent's current PROP1 inequality, normalized by the value of their best good outside their bundle. The adversary repeatedly returns one agent's normalized slack to a controlled range while decreasing the other's by a fixed amount. Eventually both normalized slacks are below a fixed threshold, so one final good valued by both agents makes the agent who does not receive it violate $\alpha$-PROP1. Appendix~\Cref{app:fairness-implications} proves the implications needed to extend the negative result to standard multiplicative versions of EF, EFX, EF1, EEFX, MXS, MMS, PMMS, GMMS, PROPX, PROPm, PROPavg, Avg-EFX, and every fixed EF$c$ notion considered there.

Appendix~\ref{app:chores} proves the corresponding impossibility for chores: for every fixed $k$, no online algorithm guarantees $\lambda$-PROP$k$ for any $\lambda<n$. The appendix also derives impossibilities for standard proportionality-based, envy-based, and share-based chore notions.

This impossibility also clarifies the role of assumptions used in positive results. Such results provide information about future goods \cite{zhou2023icml_mms_chores,neoh2026online,choo2026approxproponline}, restrict the valuation domain \cite{amanatidis2025personalized2value,wang2026onlinefairBinary}, use randomization against a non-adaptive adversary \cite{choo2026approxproponline}, allow limited buffering or reordering \cite{amanatidis2026buffers}, require fairness after every prefix \cite{adams2026perpetuallyfairassignmentsbalanced,kahana2026perpetual}, or compare online fairness with the best fairness attainable offline on the same instance \cite{chen2026competitiveanalysis}.

\paragraph{2. Constant approximation to PROP1 with maximum item value predictions (\Cref{sec:miv}).}
For each agent $i$, a perfect maximum item value (MIV) prediction gives the single number $p_i=\max_{g\in G}v_i(g)$; it reveals neither $v_i(G)$ nor the individual values or arrival order of future goods. Thus it provides only one scalar per agent. Choo et al. \cite{choo2026approxproponline} obtained $1/n$-PROP1 from these predictions and showed that stronger notions such as EF1, MMS, and PROPX still have no positive approximation. We give a deterministic algorithm that guarantees $19/30$-PROP1 against adaptive adversaries, without knowing the horizon. If the algorithm also knows that at most $\kappa$ agents value any one good positively, we obtain $\max\{19/30,n/(n+\kappa)\}$-PROP1 by choosing between this algorithm and a reciprocal-potential rule.

For each agent, the prediction accounts for the one outside good allowed by PROP1. We measure the resulting slack relative to the target proportional value and normalize it by the prediction. The key to the $19/30$ factor is to compare the total potential change across possible recipients: one recipient's decrease must compensate for the increases from all agents who do not receive the good. We bound these increases both individually and relative to the current total potential. With a different potential function, this comparison gives $19/30$ without assigning a separate receiving probability to every agent. A separate reciprocal-potential argument gives $n/(n+\kappa)$.

Using the error transformation of Choo et al.~\cite{choo2026approxproponline}, the $19/30$ guarantee becomes $\frac{19n(1-\eps)}{30n-19\eps}$-PROP1 under one-sided error $\eps$. When $\kappa$ is given, we can instead use $\frac{n(1-\eps)}{n+\kappa-\eps}$ whenever it is larger. The resulting factor decreases continuously with $\eps$ and remains a positive constant for every fixed $\eps<1$.

With exact MIV predictions and a known horizon $m$, we can also control envy without reducing the PROP1 factor. For $m\ge n\log n$, a deterministic algorithm simultaneously guarantees $19/30$-PROP1 and $O(\sqrt{m\log n/n})$ maximum additive envy after each agent's values are divided by their predicted maximum item value. The hidden constant is absolute.

The proof combines the same PROP1 potential with the exponential potential method for pairwise envy of Benad\`e et al.~\cite{benade2018envyvanish}. Each envy term is multiplied by a term depending on the envied agent's PROP1 slack. The key is to compare the combined change across recipients rather than require both objectives to be preserved separately: either uniform averaging controls both potentials, or one recipient's decrease compensates for their combined increase. This allows the $19/30$ factor and the envy bound to hold for the same allocation.

Appendix~\ref{app:miv-combined} gives a shorter combined potential argument whose PROP1 factor is $\Theta(1/\log n)$, together with the explicit normalized envy bound $11\sqrt{m\log n/n}$. We include it for its simpler proof idea.

\paragraph{3. Random allocation against a non-adaptive adversary: matching PROP1 bounds and an envy comparison (\Cref{sec:rand}).}
We compare two basic independent randomized rules to analyze the effect of assigning a good only among agents who value it positively. \Rand{} assigns every good uniformly among all $n$ agents \cite{choo2026approxproponline}, whereas \Like{} assigns it uniformly among the agents who value it positively \cite{aleksandrov2015onlinefoodbank}. Let $\kappa:=\max_{g\in G}|\{i\in N:v_i(g)>0\}|$.
For every $\delta\in(0,1/2]$, the worst-case factor guaranteed by \Like{} with probability at least $1-\delta$ is $\Theta(\min \{1,\frac{n}{\kappa\log(n/\delta)} \})$.
Both the upper and lower bounds hold for every $\kappa$, and the lower bound uses binary valuations. In contrast, \Rand{} has worst-case factor $\Theta(1/\log(n/\delta))$ even when each good is positively valued by exactly one agent. Thus, assigning a good only among agents who value it can improve proportionality by a factor of order $n/\kappa$.

For additive envy, we prove instance-specific concentration bounds for both rules. The bound for \Like{} depends on $\sum_g v_i(g)^2/|N(g)|$, while the corresponding quantity for \Rand{} is $n^{-1}\sum_g v_i(g)^2$. The latter is never larger. We also give binary instances on which the expected maximum envy of \Like{} is $\Omega(\sqrt{m/\kappa})$. These bounds show a tradeoff between the two rules: restricting recipients improves \Like{}'s proportionality when $\kappa$ is small, but it also increases the variance relevant to additive envy, and our binary lower bound shows that this increase can occur in the actual envy.

\subsection{Related Work}

\paragraph{Online fair division.}
Aleksandrov et al.~\cite{aleksandrov2015onlinefoodbank} introduced the food bank model and studied the fairness and incentive properties of simple online rules, including \Like{}. Under truthful reports, \Like{} is envy-free ex ante, but its ex-post envy is unbounded even for binary valuations and two agents. Benad\`e et al.~\cite{benade2018envyvanish} studied additive envy over time and designed an online allocation rule with vanishing envy per good.
Aleksandrov and Walsh~\cite{aleksandrov2020online} survey the earlier literature.

More recent work studies fairness of the final allocation. Zhou et al. \cite{zhou2023icml_mms_chores} study MMS under normalization information. Neoh et al. \cite{neoh2026online} compare several kinds of information about future goods and give positive and negative results for EF, proportionality, and MMS. Choo et al. \cite{choo2026approxproponline} focus on approximate PROP1, including greedy lower bounds, uniform random allocation against a non-adaptive adversary, and maximum item value predictions. 
Melissourgos and Protopapas~\cite{melissourgos2026onlineefx} study online EFX with predictions of the agents' full valuation vectors and measure prediction error by total variation distance. Their prediction input and fairness objective differ from the MIV prediction input and PROP1 objective studied here.

Positive results are also known for binary and two-value valuation classes \cite{amanatidis2025personalized2value,wang2026onlinefairBinary}. For chores, Seddighin and Seddighin~\cite{seddighin2025lowerbound} and Song et al.~\cite{song2025onlinemmschores} give contemporaneous $n$-factor lower bounds for online MMS. Their constructions use recursion over the number of agents and rapidly increasing costs.
Appendix~\ref{app:chores} strengthens the inductive statement to control the cost left after removing any fixed number $k$ of chores.
This gives the PROP$k$ lower bound, applies to randomized algorithms against adaptive adversaries, and gives consequences for several other notions.

Our first result differs from the earlier negative results in two ways. It applies to every online algorithm rather than selected greedy rules, and it rules out every positive approximation even after each agent may add any fixed number of goods. Our maximum item value result improves the previous $1/n$ PROP1 factor to $19/30$ without asking for more information.

\paragraph{Other online models.}
Some papers ask for fairness after every prefix, with an additive error that may grow with time \cite{kahana2026perpetual}. Adams and Segal-Halevi~\cite{adams2026perpetuallyfairassignmentsbalanced} study repeated assignments generated by sequences of permutations and show that certain balance conditions imply ordinal PROP1 or PROP2 after every day. Amanatidis et al.~\cite{amanatidis2026buffers} allow a bounded number of goods to be held temporarily and reordered before allocation. These models use different arrival or fairness requirements and
are therefore complementary to our negative result. Cohen et al. \cite{cohen2026budgetconstraints} study a different extension in which each assignment must satisfy a budget constraint and some goods may be left unallocated.

A related model is \emph{temporal fair division}, which evaluates the cumulative allocation after every prefix. Elkind et al. \cite{elkind2025temporal} assume that the complete arrival sequence is known in advance and study temporal EF1 for goods and chores. Choi et al. \cite{choi2026tfdmm} extend this model to mixed manna and identify restricted settings in which temporal fairness can be achieved by online rules. Goldberg et al.~\cite{Goldberg2026} study computational questions related to minimizing the sum of envy in each round. These works require fairness throughout the sequence, whereas our guarantees concern the final allocation and, unless stated otherwise, use no information about future goods.

Chen and Tan \cite{chen2026competitiveanalysis} give a broad competitive analysis for several fairness notions, goods and chores, and several valuation assumptions. Their ratio compares the fairness achieved online with the best fairness level attainable offline on the same instance. Our approximation factor instead compares each agent's final value directly with the PROP$k$ requirement, as most other prior works in the area do. The two comparisons answer different questions.

For envy minimization, Halpern et al. \cite{halpern2025onlineenvy} determine the optimal order for general algorithms against a non-adaptive adversary through a connection with online multicolor discrepancy. Our \Cref{sec:rand} instead compares \Like{} and \Rand{} on the same fixed instance to determine the effect of assigning each good only among agents who value it positively. The comparison shows that this choice can improve proportionality while increasing additive envy, a question not answered by the optimal worst-case envy rate for general algorithms.

\section{Preliminaries}
\label{sec:preliminaries}

For a positive integer $z$, let $[z]:=\{1,\dots,z\}$. There are $n\ge2$ \emph{agents} $N=[n]$ and a finite sequence of $m\ge1$ indivisible \emph{goods} $G=\{g_1,\dots,g_m\}$, indexed by arrival order. 
The algorithm need not know $m$ (which we sometimes call the \emph{horizon}) in advance unless stated otherwise. 
Each agent $i\in N$ has a nonnegative additive valuation $v_i:2^G\to\mathbb{R}_{\ge0}$, so $v_i(S)=\sum_{g\in S}v_i(g)$ for every $S\subseteq G$. 
We call an instance \emph{binary} if $v_i(g)\in\{0,1\}$ for every $i\in N$ and $g\in G$.
A \emph{(complete) allocation} is a tuple $A=(A_1,\dots,A_n)$ that partitions $G$.

For a good $g$, let $N(g):=\{i\in N:v_i(g)>0\}$ and $\kappa:=\max_{g\in G}|N(g)|$.
Thus, $\kappa$ is the largest number of agents who value the same good positively. We use this parameter only in results that state it explicitly.

\begin{definition}[$\alpha$-PROP$k$]
    Fix $\alpha\in[0,1]$ and an integer $k\ge1$. An allocation $A$ is $\alpha$-\emph{proportional up to $k$ goods} ($\alpha$-PROP$k$) if, for every agent $i$, there is a set $S_i\subseteq G\setminus A_i$ with $|S_i|\le k$ such that $v_i(A_i\cup S_i)\ge \alpha \cdot \frac{v_i(G)}{n}$.
\end{definition}
When $k=1$, we write $\alpha$-PROP1; when $\alpha=1$, we omit $\alpha$. In particular, PROP1 is the usual exact notion. For $k=1$, the definition is equivalent to $v_i(A_i)+\max_{g\in G\setminus A_i}v_i(g) \ge \alpha \cdot \frac{v_i(G)}{n}$,
where, throughout, the maximum over an empty set is understood to be zero.

For an allocation $A$, its \emph{maximum additive envy} is $\Env(A):=\max_{i,j\in N}(v_i(A_j)-v_i(A_i))$.
Appendix~\ref{app:fairness-implications} defines the other goods fairness notions mentioned in the paper and proves the implications used for our goods lower bounds. Appendix~\ref{app:chores} gives the corresponding definitions and lower bounds for chores.

\paragraph{Online algorithms and adversaries.}
At time $t$, good $g_t$ arrives and the vector $(v_1(g_t),\dots,v_n(g_t))$ is revealed. The algorithm must assign $g_t$ immediately to one agent. Every good is assigned exactly once, and earlier assignments cannot be changed. Fairness is evaluated on the final allocation.
A deterministic algorithm has no internal randomness. A randomized algorithm may use private random choices. An \emph{adaptive adversary} chooses each next valuation vector after seeing the earlier goods and their realized recipients; it does not see future private random choices. Unless the horizon is known in advance, it may also decide when the finite sequence ends. A \emph{non-adaptive adversary} fixes the complete valuation sequence before the algorithm makes any random choice.
Against an adaptive adversary, a randomized algorithm guarantees a property with probability at least $1-\delta$ if the property holds with that probability against every allowed adversarial strategy. On a fixed instance, probability is only over the algorithm's random choices.

\paragraph{Maximum item value (MIV) predictions.}
For each agent $i$, let $p_i^{\max}:=\max_{g\in G}v_i(g)$.
An exact MIV prediction gives the algorithm $p_i=p_i^{\max}$ before any good arrives. It gives only one number per agent and reveals neither $v_i(G)$ nor the individual values or arrival order of future goods. Following Choo et al.~\cite{choo2026approxproponline}, predictions have one-sided error $\eps\in[0,1)$ if $p_i^{\max}\in[(1-\eps)p_i,p_i]$ for every agent $i$. Thus each prediction is an upper bound on the true maximum and overestimates it by at most the factor $1/(1-\eps)$. The upper-bound direction ensures that $v_i(g)/p_i\le1$ for every agent with $p_i>0$ and every good $g$. Choo et al. also observe that a multiplicative two-sided prediction can be converted to this form by scaling it upward.

\section{Inapproximability of \texorpdfstring{PROP$k$}{PROPk} in the Fully Online Setting}
\label{sec:inapprox}

We now show that no positive multiplicative approximation to PROP$k$ is possible in the standard fully online model.
We state the result for deterministic algorithms; against an adaptive adversary, randomization offers no additional power, so the same impossibility extends to randomized algorithms.

\begin{theorem}
\label{thm:online-propk-impossibility}
Fix $n\ge2$, $k\ge1$, and $\alpha\in(0,1]$. For every deterministic online algorithm, there is an adaptive adversary under which the final allocation is not $\alpha$-PROP$k$. The statement remains true even when all of the following hold:
\begin{enumerate}[(i)]
    \item the final number of goods is known before the first good arrives;
    \item every value lies in $[0,1]$; and
    \item every good is positively valued by at most two agents.
\end{enumerate}
\end{theorem}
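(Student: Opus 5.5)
The plan has three stages: reduce PROP$k$ to PROP1, reduce $n$ agents to two agents, and then run an adaptive two-agent adversary built around a normalized-slack potential.

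\textbf{Reductions.} First we show that an $\alpha$-PROP$k$ allocation is $(\alpha/k)$-PROP1. If $S_i$ has $|S_i|\le k$ and $v_i(A_i\cup S_i)\ge\alpha v_i(G)/n$, then
\[
v_i(A_i)+k\max_{g\notin A_i}v_i(g)\ge\alpha v_i(G)/n .
\]
Dividing by $k$ gives $v_i(A_i)+\max_{g\notin A_i}v_i(g)\ge(\alpha/k)v_i(G)/n$. So it suffices to defeat $\beta$-PROP1 for every $\beta:=\alpha/k\in(0,1]$.

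Next we reduce to two agents. Agents $3,\dots,n$ value every good at $0$. Every good is then positively valued by at most two agents (agents $1$ and $2$), which gives (iii). Any good handed to agents $3,\dots,n$ only hurts agents $1,2$. The PROP1 threshold for agents $1,2$ is now $\beta v_i(G)/n\ge\beta' v_i(G)/2$ with $\beta'=2\beta/n>0$. So it suffices to show that for two agents, with effective threshold $\gamma v_i(G)$ for any fixed $\gamma>0$, the adversary can force some $i\in\{1,2\}$ to have
\[
v_i(A_i)+\max_{g\notin A_i}v_i(g)<\gamma v_i(G).
\]
Goods given to agents $\ge3$ count as ``not in $A_i$'' for both agents, which can only help the adversary. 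So we may assume that giving a good to agent $1$ or $2$ is the algorithm's choice, and giving it to anyone else is no better for the algorithm than giving it to the non-valuing agent.

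\textbf{Two-agent adversary.} For agent $i$, define the normalized slack
\[
s_i=\frac{v_i(A_i)+M_i-\gamma v_i(G_t)}{M_i},
\qquad M_i=\max_{g\notin A_i}v_i(g).
\]
The idea is to use geometrically scaled goods so that values stay in $[0,1]$; it suffices to scale everything down at the end. We work backward with decreasing magnitudes.

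Phase structure: offer a good valued $x$ by agent $1$ only. If agent $1$ keeps taking such goods, $v_1(G)$ grows while agent $1$'s received value grows too. Instead, the adversary offers goods valued positively by one agent at a time so that the algorithm cannot simply give each agent its own goods. The real lever is a good valued by both agents: whoever does not receive it loses slack relative to $M$.

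The concrete plan is to offer a big good $B$ valued $1$ by both agents. Suppose agent $2$ gets it, so $M_1=1$. Then feed agent $1$ many small goods of value $\epsilon$, valued only by agent $1$. If the algorithm gives them to agent $2$, agent $1$'s slack $1+v_1(A_1)-\gamma v_1(G)$ decreases by $\gamma\epsilon$ each time. Eventually it goes negative after about $1/\gamma$ total value, which is a violation, so the algorithm must give some of them to agent $1$. Each such good given to agent $1$ raises $s_1$ by $(1-\gamma)\epsilon$.

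The potential argument tracks the pair $(s_1,s_2)$. Each phase returns one agent's slack to a bounded window $[c,C]$ while lowering the other's by a fixed amount $\Delta(\gamma)>0$. The lowering is achieved by introducing a new larger doubly-valued good, which rescales $M$ and thereby shrinks normalized slack. After $O(C/\Delta)$ phases both slacks are below a threshold $\tau<1$. A final good valued at $M$-scale by both agents then makes the non-recipient $i$ have
\[
v_i(A_i)+M'<\gamma v_i(G),
\]
since adding a good of value $M'$ to $G$ increases the right side by $\gamma M'$ while the left side can only gain that good's max contribution, and the normalized slack was below $\gamma$-adjusted threshold. The number of phases is bounded in terms of $\gamma$, so the horizon is a fixed function of $(n,k,\alpha)$. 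Padding with zero-valued goods fixes $m$ in advance, which gives (i).

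\textbf{Main obstacle.} The hard part is designing the phase so that the slack decrease is uniform, not shrinking geometrically to zero. Each new doubly-valued good rescales normalization and could also restore slack for the agent who receives it. I would first try values growing by a factor $1/\gamma$ per phase with normalized bookkeeping, choosing window constants $c,C$ depending only on $\gamma$. I would verify that, whichever agent receives the doubly-valued good, either a violation arises immediately or the recipient's normalized slack drops by a constant once single-valued compensation goods are exhausted.
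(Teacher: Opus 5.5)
Your Stages 1 and 2 are sound and match the paper's first two lemmas: PROP$k$ reduces to $(\alpha/k)$-PROP1, and $n$ agents reduce to two at factor $2\alpha/n$. One wording fix: a good given to an agent $\ge3$ should be compared with giving it to agent $1$, not to a ``non-valuing'' agent, because the goods that matter are valued by both. The paper maps every such recipient to agent $1$, runs the two-agent strategy on the mapped history, and uses $v_1(D)\ge\max_{g\in D}v_1(g)$.

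The genuine gap is Stage 3, which is the whole difficulty. You never construct the two-agent adversary, and the ``main obstacle'' you flag is exactly what has to be proved. The concrete pieces you do propose do not work:

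\begin{enumerate}[(i)]
    \item Goods valued by only one agent give no leverage. The algorithm hands each to its valuer at no cost to the other agent; in your plan it simply gives every $\epsilon$-good to agent $1$.
    \item A larger good valued by both agents does not lower the non-recipient's slack by a fixed amount. If its normalized value is $u>1$, her slack becomes $(1-\gamma)+(s-1)/u$. This is a convex combination of $s-\gamma$ and $1-\gamma$, so it can be an increase. Meanwhile the recipient's slack jumps by $(1-\gamma)$ times her normalized value.
    \item The last good needs both slacks strictly below $\gamma$, not merely below some $\tau<1$. A good of normalized value one lowers the non-recipient's slack by exactly $\gamma$.
\end{enumerate}

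The missing idea is an asymmetric round. After $(1,1)$ goods make $M_1,M_2>0$, the adversary never shows agent $1$ a good worth more than $M_1$. Her normalization is therefore frozen, and $s_1$ can fall by a uniform amount; only $M_2$ grows. This, not values growing by $1/\gamma$, is what keeps the decreases from vanishing. Set $q:=\lfloor(1-\gamma)/\gamma\rfloor$. Each round has two steps.

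\emph{Reset.} Offer goods with normalized values $(t,U)$, with $t$ tiny and $U>1$ chosen from the current $s_2$. Each such good that agent $2$ takes costs $s_1$ exactly $\gamma t$. So unless agent $1$'s condition fails first, agent $1$ eventually takes one. That rescales $M_2$ and gives $s_2\le\eta$ for a fixed $\eta<(q+1)\gamma$, while adding at most $(1-\gamma)t$ to $s_1$.

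\emph{Decrease.} Offer $(\sigma_1,1),\dots,(\sigma_q,1)$, with geometrically growing $\sigma_j<1$ satisfying $\gamma\sigma_j>(1-\gamma)(\sum_{h<j}\sigma_h+t)$ and $\gamma>(1-\gamma)(\sum_{h\le q}\sigma_h+t)$. Stop at the first one agent $2$ receives. If some $\sigma_j$ goes to agent $2$, the round lowers $s_1$ by a fixed $d>0$. If all go to agent $1$, then $s_2<\gamma$, and a $(1,1)$ good either violates agent $2$'s condition or again lowers $s_1$ by at least $d$.

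Since $s_1\ge0$ until a violation occurs, some round must end in one.

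Finally, values are revealed online, so you cannot ``scale at the end'', and your claim that the number of phases is bounded in terms of $\gamma$ is unproved. The paper instead shows that every branch of the algorithm-independent, finitely branching strategy tree is finite. K\"onig's lemma then gives a uniform horizon and finitely many values. These are scaled into $[0,1]$ and padded with zero goods before the first good arrives.
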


The proof has three steps. First, allowing $k$ added goods can improve the factor by at most $k$. Second, any guarantee for $n$ agents gives a guarantee for two agents. We then construct the two-agent adversary.

\begin{lemma}
\label{lem:propk-implies-prop1}
Under nonnegative additive valuations, every $\alpha$-PROP$k$ allocation is also $(\alpha/k)$-PROP1.
\end{lemma}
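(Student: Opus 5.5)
Fix an agent $i$ and let $S_i\subseteq G\setminus A_i$ with $|S_i|\le k$ be the witness set from the $\alpha$-PROP$k$ definition, so that $v_i(A_i)+v_i(S_i)\ge \alpha\, v_i(G)/n$. The plan is to replace the (at most) $k$ outside goods by the single most valuable outside good, losing at most a factor $k$. Write $M_i:=\max_{g\in G\setminus A_i}v_i(g)$, with the convention that $M_i=0$ if $G\setminus A_i=\emptyset$. Since every element of $S_i$ lies outside $A_i$ and values are nonnegative, additivity gives $v_i(S_i)\le |S_i|\cdot M_i\le k M_i$.

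The key step is then the chain
\[
\alpha\,\frac{v_i(G)}{n}\;\le\; v_i(A_i)+v_i(S_i)\;\le\; v_i(A_i)+kM_i\;\le\; k\bigl(v_i(A_i)+M_i\bigr),
\]
where the last inequality uses $k\ge1$ and $v_i(A_i)\ge0$. Dividing by $k$ yields $v_i(A_i)+M_i\ge (\alpha/k)\,v_i(G)/n$. By the equivalent form of the PROP1 definition stated in the preliminaries, this is exactly the $(\alpha/k)$-PROP1 condition for agent $i$. Since $i$ was arbitrary, the allocation is $(\alpha/k)$-PROP1.

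The only point needing care is the degenerate case $S_i=\emptyset$ (or $G\setminus A_i=\emptyset$). There $v_i(S_i)=0$, the bound $v_i(S_i)\le kM_i$ still holds trivially, and the chain goes through unchanged. No step is a real obstacle. The mild subtlety is using nonnegativity twice: once to bound each good in $S_i$ by $M_i$, and once to absorb $v_i(A_i)$ into $k\,v_i(A_i)$.
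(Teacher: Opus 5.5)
Your proof is correct and is essentially the paper's argument: you bound $v_i(S_i)\le kM_i$ using the best outside good, absorb $v_i(A_i)$ into $k\,v_i(A_i)$ via $k\ge1$ and nonnegativity, and divide by $k$ to obtain the equivalent PROP1 condition. Your handling of the degenerate cases ($S_i=\emptyset$ or $A_i=G$) is also fine.
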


\begin{proof}
Fix an agent $i$, and define $h_i:=\max_{g\in G\setminus A_i}v_i(g)$,
where the maximum is zero if $A_i=G$. Since the allocation is $\alpha$-PROP$k$, there is a set $S_i\subseteq G\setminus A_i$ with $|S_i|\le k$ such that $v_i(A_i)+v_i(S_i)
    \ge
    \alpha\frac{v_i(G)}{n}$.
Every good in $S_i$ has value at most $h_i$, so $v_i(S_i)\le kh_i$. Thus,
\[
    k(v_i(A_i)+h_i)
    \ge
    v_i(A_i)+kh_i
    \ge
    v_i(A_i)+v_i(S_i)
    \ge
    \alpha\frac{v_i(G)}{n},
\]
where the first inequality uses $k\ge1$ and $v_i(A_i)\ge0$. Therefore,
\[
    v_i(A_i)+h_i
    \ge
    \frac{\alpha}{k}\frac{v_i(G)}{n}.
\]
By the equivalent condition for PROP1 stated in \Cref{sec:preliminaries}, agent $i$ satisfies $(\alpha/k)$-PROP1. Since $i$ was arbitrary, the result follows.
\end{proof}

\begin{lemma}
\label{lem:two-agents-suffice}
Fix $n\ge3$ and $\alpha\in(0,1]$. If a deterministic online algorithm guarantees $\alpha$-PROP1 for $n$ agents, then a deterministic two-agent online algorithm guarantees $(2\alpha/n)$-PROP1 against the same class of adversaries.
\end{lemma}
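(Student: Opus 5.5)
Given the $n$-agent algorithm $\mathcal A$, we build a two-agent algorithm $\mathcal B$ by simulation. Agents $1$ and $2$ of the two-agent instance are embedded as agents $1$ and $2$ of an $n$-agent instance. Agents $3,\dots,n$ are dummies whose valuations we choose so that $\mathcal A$ cannot profitably hand real goods to them.

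The natural choice is to give every dummy agent zero value for every real good. In addition, each dummy agent $j\ge3$ gets one private good $d_j$ of value $1$ to $j$ and $0$ to agents $1,2$. These private goods can be fed to $\mathcal A$ at the very start, so that the dummies' PROP1 constraints are trivially satisfiable; they are irrelevant anyway, since we only care about agents $1,2$.

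When a real good $g$ arrives with values $(v_1(g),v_2(g))$, $\mathcal B$ feeds $\mathcal A$ the vector $(v_1(g),v_2(g),0,\dots,0)$ and follows its choice. If $\mathcal A$ assigns $g$ to agent $1$ or $2$, $\mathcal B$ does the same. If $\mathcal A$ assigns $g$ to a dummy, $\mathcal B$ gives it to a fixed agent, say agent $1$, or to either agent. For the argument below this choice does not matter, because giving a good to one of the two real agents never decreases either agent's final-allocation PROP1 left-hand side relative to the $n$-agent outcome.

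To check this, fix $i\in\{1,2\}$ and let $A$ be the $n$-agent allocation and $B$ the two-agent one. We have $B_i\supseteq A_i$, and every real good outside $B_i$ lies outside $A_i$. The issue is the term $\max_{g\notin B_i}v_i(g)$, which could drop when a good that $\mathcal A$ gave to a dummy is moved into $B_i$. We avoid this with the estimate $v_i(B_i)+\max_{g\notin B_i}v_i(g)\ge v_i(A_i)+\max_{g\notin A_i}v_i(g)$. It holds because if the maximizer $g^*$ for $A$ lands in $B_i$, then $v_i(B_i)\ge v_i(A_i)+v_i(g^*)$. Since $\mathcal A$ is $\alpha$-PROP1, the right side is at least $\alpha v_i(G)/n=(2\alpha/n)\cdot v_i(G)/2$, where $v_i(G)$ is the same in both instances because $v_i(d_j)=0$. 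Hence $B$ is $(2\alpha/n)$-PROP1.

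Next we check the adversary classes. Any adaptive adversary against $\mathcal B$ induces an adaptive adversary against $\mathcal A$: it prepends the $n-2$ dummy goods, pads each real vector with zeros, and sees the realized recipients through the simulation. If the two-agent horizon $m$ is known, the $n$-agent horizon $m+n-2$ is known. Values stay in $[0,1]$, and each good is still positively valued by at most two agents. Thus the restrictions (i)–(iii) of \Cref{thm:online-propk-impossibility} are preserved, which is the sense of ``same class''.

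The only delicate point is the max-term issue in the third paragraph, that is, making sure that reassigning goods from dummies to real agents cannot hurt PROP1. The inequality above handles it. A fallback would be to always give dummy-assigned goods to the agent who values them less, but that should not be needed.
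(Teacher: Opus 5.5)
Your proposal is correct and is essentially the paper's argument. The paper also runs $\mathcal A$ on $(v_1(g),v_2(g),0,\dots,0)$, sends every good that $\mathcal A$ gives to agents $3,\dots,n$ to agent $1$, and uses your observation that a good moved into $B_i$ from outside $A_i$ contributes its full value (there phrased as $v_1(D)\ge\max_{g\in D}v_1(g)$). The prepended private goods $d_j$ are unnecessary, since the dummies' PROP1 constraints are never used and hold trivially with all-zero values anyway, but they do no harm to the adversary classes or to restrictions (i)--(iii).
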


\begin{proof}
Let $\mathcal A$ be the $n$-agent algorithm. We construct a two-agent algorithm $\mathcal B$. When a good $g$ arrives with values $(v_1(g),v_2(g))$, algorithm $\mathcal B$ runs $\mathcal A$ on the value vector $(v_1(g),v_2(g),0,\dots,0)$. If $\mathcal A$ assigns $g$ to agent $2$, then $\mathcal B$ also assigns it to agent $2$; otherwise, $\mathcal B$ assigns it to agent $1$. 

The adversary class is preserved. A fixed two-agent sequence remains fixed after appending zeros. Given an adaptive two-agent adversary, the corresponding $n$-agent adversary appends zeros to each value vector and, after $\mathcal A$ assigns the good, maps recipient $2$ to agent $2$ and every other recipient to agent $1$ before generating the next vector.

Let $(A_1,\dots,A_n)$ be the allocation produced by $\mathcal A$, and let $D:=A_3\cup\cdots\cup A_n$. By construction, $\mathcal B$ produces the bundles $B_1:=A_1\cup D$ and $B_2:=A_2$.
For each $i\in\{1,2\}$, the $\alpha$-PROP1 guarantee of $\mathcal A$ gives
\begin{equation}
\label{eq:internal-prop1}
    v_i(A_i)+\max_{g\in G\setminus A_i}v_i(g)
    \ge
    \frac{\alpha}{n}v_i(G).
\end{equation}
For agent $1$ under $\mathcal B$:
\begin{equation} \label{eqn:ext1}
    v_1(B_1)+\max_{g\in G\setminus B_1}v_1(g) = v_1(A_1)+v_1(D)+\max_{g\in A_2}v_1(g) \ge v_1(A_1)+\max_{g\in G\setminus A_1}v_1(g).
\end{equation}
The inequality uses $G\setminus A_1=A_2\cup D$ and $v_1(D)\ge\max_{g\in D}v_1(g)$, which follows from nonnegativity and additivity.
For agent $2$ under $\mathcal B$:
\begin{equation} \label{eqn:ext2}
    v_2(B_2)+\max_{g\in G\setminus B_2}v_2(g)
    =v_2(A_2)+\max_{g\in G\setminus A_2}v_2(g).
\end{equation}

Combining \eqref{eqn:ext1} and \eqref{eqn:ext2} with \eqref{eq:internal-prop1}, each agent $i\in\{1,2\}$ satisfies
\[
    v_i(B_i)+\max_{g\in G\setminus B_i}v_i(g)
    \ge
    \frac{\alpha}{n}v_i(G)
    =
    \frac{2\alpha}{n}\frac{v_i(G)}{2}.
\]
Thus $(B_1,B_2)$ is $(2\alpha/n)$-PROP1.
\end{proof}
Choo et al.~\cite{choo2026approxproponline} use the current PROP1 ratio to analyze selected greedy rules. The construction below instead derives update formulas for the normalized slack in each agent's current PROP1 inequality and applies to every possible sequence of recipients. This is what allows the lower bound to cover every online algorithm.

It remains to prove the two-agent statement. The next lemma applies to every possible sequence of recipients.

\begin{lemma}
\label{lem:two-agent-prop1-impossibility}
Fix $\alpha\in(0,1]$. There is an adaptive strategy for two agents such that every possible sequence of recipients reaches, after finitely many goods, a prefix that is not $\alpha$-PROP1.
\end{lemma}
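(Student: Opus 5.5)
The plan is a potential argument on normalized PROP1 slacks, followed by one final good valued by both agents. For agent $i\in\{1,2\}$ and the current prefix, write $V_i:=v_i(\text{goods so far})$, let $h_i$ be the largest value to $i$ of a good outside $A_i$, and set
\[
s_i:=v_i(A_i)+h_i-\tfrac{\alpha}{2}V_i,\qquad \sigma_i:=s_i/h_i .
\]
The prefix is $\alpha$-PROP1 exactly when $s_1,s_2\ge 0$.

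\textbf{Step 1 (update rules).} For a good with values $(x_1,x_2)$, I first record how $\sigma_i$ changes.
\begin{itemize}
\item If $i$ receives it, $h_i$ is unchanged and $\sigma_i$ rises by $x_i(1-\alpha/2)/h_i$.
\item If $i$ does not receive it and $x_i\le h_i$, then $\sigma_i$ falls by $\tfrac{\alpha}{2}x_i/h_i$.
\item If $i$ does not receive it and $x_i=H\ge h_i$, then $h_i$ becomes $H$ and $\sigma_i$ becomes $(s_i-h_i+H(1-\alpha/2))/H$. For $H\gg s_i,h_i$ this is close to $1-\alpha/2$, so a large good passed to $i$'s opponent \emph{resets} $\sigma_i$ to a controlled value.
\end{itemize}

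\textbf{Step 2 (terminal move).} Suppose a prefix has $\sigma_1,\sigma_2<\alpha/2$. The adversary sends one last good with values $(h_1,h_2)$, valued by both. Whichever agent $j$ does not receive it ends with $s_j<0$, so the prefix is not $\alpha$-PROP1. All values can be rescaled into $[0,1]$ at the end, since the construction is scale-free. Because the strategy is a finite tree whose depth is bounded independently of the algorithm, the horizon can also be fixed in advance by padding with all-zero goods. Goods are valued by at most two agents automatically, since $n=2$ here; the $n$-agent embedding of \Cref{lem:two-agents-suffice} adds only zeros.

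\textbf{Step 3 (driving both slacks down).} The adversary mainly uses small goods valued by both, with values $(\delta h_1,\delta h_2)$ for a small constant $\delta$. Each such good raises the recipient's $\sigma$ by $\delta(1-\alpha/2)$ and lowers the other's by $\delta\alpha/2$. This alone does not suffice, since the algorithm can alternate recipients. I would therefore interleave a \emph{reset} step: after a block of small goods, send a good valued only by the agent $i$ with the larger $\sigma$, of value $H\gg h_i$.
\begin{itemize}
\item If the algorithm gives it to the other agent, $\sigma_i$ returns to about $1-\alpha/2$ and the block restarts. The net effect on the opponent must be shown to be a drop of a fixed amount $c(\alpha)>0$ accumulated over the block.
\item If the algorithm gives it to $i$ itself, $i$'s real slack grows, but $i$ now holds a good worth $H$. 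Later goods valued by $i$ at scale $H$ then make $i$'s earlier slack relative to $H$ bounded, and can be used to push $\sigma_i$ down at rate $\alpha/2$ per unit of value passed to the opponent.
\end{itemize}
The target invariant is: after each phase, one agent's $\sigma$ lies in a fixed interval $[a,b]$, and the other's has decreased by at least a constant $c(\alpha)$ relative to the previous phase. After $O(1/c)$ phases both slacks lie below $\alpha/2$, and Step 2 applies.

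\textbf{Main obstacle.} The hard part is Step 3. I must choose the value scales (block lengths, $\delta$, and the geometric growth of $H$) so that every recipient choice by the algorithm makes definite progress, and so that the opponent's gains from the reset goods are bounded by a constant rather than growing with $H$. I would first try to fix the potential pair $(\sigma_1,\sigma_2)$ and check the four recipient cases (small good to either agent, reset good to either agent) separately. I would pick $H$ as a large enough multiple of the current $\max(h_1,h_2)$ so that all previous slack contributes at most an $\eps$-fraction after renormalization. Then I would verify that the per-phase decrease $c(\alpha)$ survives these $\eps$ errors.
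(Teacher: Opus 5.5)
\textbf{There is a genuine gap: Step 3, the core of the lemma, is not carried out, and the moves you propose for it cannot work.}

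What you have right: your Steps 1 and 2 are correct and match the paper's update formulas and terminal move. Your target invariant also matches the paper's: one agent's normalized slack is returned to a fixed range while the other's drops by a fixed amount each phase.

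The failure is in the moves themselves. A good valued by agent $i$ alone can simply be given to $i$. That raises $\sigma_i$ and leaves $\sigma_j$ completely unchanged, so the algorithm can refuse every reset at no cost. Your second bullet does not help, because any later good meant to rescale $\sigma_i$ is again such a reset. Meanwhile, a small good $(\delta h_1,\delta h_2)$ changes $\sigma_1+\sigma_2$ by $\delta(1-\alpha/2)-\delta\alpha/2=\delta(1-\alpha)\ge 0$, whoever receives it.

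So consider the algorithm that gives each singly-valued good to its valuer and each small good to the agent with the smaller $\sigma$. Against it, $\sigma_1+\sigma_2$ never decreases. Also $\min\{\sigma_1,\sigma_2\}$ never falls below $\min\{\sigma_1^0,\sigma_2^0,(\sigma_1^0+\sigma_2^0)/2-\delta\alpha/2\}$, where $\sigma_i^0$ are the values when Step 3 begins. Hence Step 3 makes no progress at all. Choosing which agent to reset by comparing the two $\sigma$'s also leaves you without any single quantity that must decrease.

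The missing idea is to make every good valued by both agents, with strongly asymmetric normalized values, and to fix the roles: agent $2$ is always the one reset, and agent $1$ is always the one driven down. Write $\gamma=\alpha/2$, $\beta=1-\gamma$ and $q=\lfloor\beta/\gamma\rfloor$.

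\emph{Reset goods.} The reset good has values $(th_1,Uh_2)$ with a fixed tiny $t>0$. The number $U>1$ is chosen after seeing the current $\sigma_2$, so that if agent $1$ takes the good, $\sigma_2$ lands in $(0,\eta)$ for a fixed $\eta\in(\beta,(q+1)\gamma)$. Each refusal costs agent $1$ exactly $\gamma t$, so refusals are finitely many. Acceptance gains agent $1$ only $\beta t$. Agent $1$'s value for the reset good is measured on agent $1$'s own scale, so this gain does not grow with $U$. This answers the worry in your last paragraph.

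\emph{Pushing goods.} The adversary then sends $P_j=(s_jh_1,h_2)$ for $j=1,\dots,q$, stopping when one goes to agent $2$. The $s_j$ grow geometrically so that $\gamma s_j>\beta(s_1+\dots+s_{j-1}+t)$ and $\gamma>\beta(s_1+\dots+s_q+t)$.
\begin{itemize}
\item If agent $2$ takes some $P_j$ first, agent $1$'s net change over the round is at most $-d$ for a fixed $d>0$.
\item If agent $1$ takes all $q$ of them, then $\sigma_2\le\eta-q\gamma<\gamma$. One further good $(h_1,h_2)$ then either violates PROP1 or costs agent $1$ a full $\gamma$, again a net loss of at least $d$.
\end{itemize}

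\emph{Finish.} While PROP1 holds, $\sigma_1\ge 0$. So after finitely many rounds, $\sigma_1<\gamma-\beta(s_1+\dots+s_q)$. From then on, the first round in which agent $1$ takes all the $P_j$ leaves both slacks below $\gamma$, and your Step 2 applies. Any round in which agent $2$ takes some $P_j$ lowers $\sigma_1$ by a further $d$, so such rounds cannot go on forever.

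You also need a preliminary phase so that $h_1,h_2>0$ and $\sigma_i$ is defined. Send goods $(1,1)$ until each agent holds one; this must happen after finitely many goods, or PROP1 already fails.
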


\begin{proof}
Let $G'$ be the goods seen so far, and let $A_1,A_2$ be the current bundles. Set $\gamma:=\frac{\alpha}{2}$ and $\beta:=1-\gamma$.
For agent $i\in\{1,2\}$, let $m_i:=\max_{g\in A_{-i}}v_i(g)$,
where $A_{-i}$ is the other agent's bundle and the maximum is zero when that bundle is empty. The current allocation is $\alpha$-PROP1 exactly when
\begin{equation}
\label{eq:two-agent-prop1}
    v_i(A_i)+m_i\ge \gamma v_i(G')
    \quad \text{for } i=1,2.
\end{equation}

We first make both $m_i$ positive. Present goods with values $(1,1)$. If one agent has received none after $\ell$ such goods, then that agent has $v_i(A_i)=0$, $m_i=1$, and $v_i(G')=\ell$. Inequality \eqref{eq:two-agent-prop1} is violated once $\ell>1/\gamma$. Thus, after finitely many goods, either the allocation is not $\alpha$-PROP1 or both agents have received a positively valued good. From then on, $m_1,m_2>0$.

For the rest of the proof, define the normalized slack in agent $i$'s PROP1 inequality
\begin{equation*}
    r_i:=\frac{v_i(A_i)+m_i-\gamma v_i(G')}{m_i}.
\end{equation*}
The current allocation satisfies $\alpha$-PROP1 exactly when $r_1,r_2\ge0$. Suppose the next good has values $(u_1m_1,u_2m_2)$, where the $m_i$ are measured immediately before the good arrives. The update for agent $i$ is as follows:
\begin{equation}
\label{eq:margin-updates}
    r_i^+=
    \begin{cases}
        r_i+\beta u_i,
        &\text{if agent $i$ receives the good},\\[1mm]
        r_i-\gamma u_i,
        &\text{if agent $i$ does not receive it and $u_i\le1$},\\[1mm]
        \displaystyle \beta+\frac{r_i-1}{u_i},
        &\text{if agent $i$ does not receive it and $u_i>1$}.
    \end{cases}
\end{equation}
The first two lines follow because $m_i$ does not change. In the third line, the new good becomes the best good in the other bundle, so the new denominator is $u_i m_i$.

In particular, if
\begin{equation}
\label{eq:both-margins-small}
    r_1<\gamma
    \quad\text{and}\quad
    r_2<\gamma,
\end{equation}
then one more good with normalized values $(1,1)$ makes the resulting allocation not $\alpha$-PROP1: whichever agent does not receive it has $r_i^+=r_i-\gamma<0$.

Let $q:=\lfloor\beta/\gamma\rfloor$ and $\eta:=\frac{\beta+(q+1)\gamma}{2}$.
Since $\gamma\le1/2$, we have $q\ge1$ and
\begin{equation}
\label{eq:q-interval}
    q\gamma\le\beta<\eta<(q+1)\gamma.
\end{equation}
We next choose $t,s_1,\dots,s_q\in(0,1)$ so that
\begin{align}
\label{eq:s-conditions-a}
    \gamma s_j
    &>
    \beta\sum_{h<j}s_h+\beta t \quad \text{ for all } j=1,\dots,q,\\
\label{eq:s-conditions-b}
    \gamma
    &>
    \beta\sum_{h=1}^q s_h+\beta t.
\end{align}
The first inequality will be used when $P_j$ is the first good in the list below assigned to agent $2$, and the second when all $q$ goods are assigned to agent $1$.
To see that such a choice exists, set $R:=1+4\beta/\gamma$. Choose $c>0$ such that $cR^{q-1}<1$ and $\beta c\sum_{h=0}^{q-1}R^h<\gamma/2$, and set $s_j:=cR^{j-1}$ for all $j=1,\dots,q$, and $t:=\frac{\gamma c}{4\beta}$.

Then $t,s_1,\dots,s_q\in(0,1)$. For every $j$,
\[
    \beta\sum_{h<j}s_h<\frac{\gamma}{4}s_j,
    \quad
    \beta t=\frac{\gamma c}{4}\le\frac{\gamma}{4}s_j,
\]
which gives \eqref{eq:s-conditions-a}. Moreover, $\beta\sum_{h=1}^q s_h<\frac{\gamma}{2}$ and $\beta t<\frac{\gamma}{4}$,
so \eqref{eq:s-conditions-b} also holds.

Now, define $S_j:=\sum_{h=1}^j s_h$ and $S_0:=0$.
Start from any allocation that still satisfies $\alpha$-PROP1. Present a good with normalized values $(t,U)$, where $U>1$ is chosen from the current $r_2$. If $r_2\le1$, choose $U>\max \{1,\frac{1-r_2}{\beta}\}$; if $r_2>1$, choose $U>\max\{1,\frac{r_2-1}{\eta-\beta}\}$.
If agent $1$ receives this good, then the third line of \eqref{eq:margin-updates} gives $0<r_2^+<\eta$, and $r_1$ increases by $\beta t$. If agent $2$ receives it, then $r_1$ decreases by $\gamma t$. Repeat with a newly chosen $U$ until either the allocation is not $\alpha$-PROP1 or agent $1$ receives one of these goods. The repetition is finite: each consecutive assignment to agent $2$ decreases $r_1$ by $\gamma t>0$, so $r_1$ would eventually become negative. Thus, as long as the allocation remains $\alpha$-PROP1, the adversary reaches an allocation with $r_2\le\eta$. Over the whole repetition, $r_1$ increases by at most $\beta t$.

Suppose now that $r_2\le\eta$. Present, in order, goods $P_1,\dots,P_q$ with normalized values $P_j=(s_j,1)$, and stop the list when one is first assigned to agent $2$.
Suppose $P_j$ is the first such good. The earlier goods $P_1,\dots,P_{j-1}$ went to agent $1$, so the change in $r_1$ is $\beta S_{j-1}-\gamma s_j< -\beta t$
by \eqref{eq:s-conditions-a}. Bringing $r_2$ below $\eta$ again can increase $r_1$ by at most $\beta t$. Hence $r_1$ decreases overall.

Suppose instead that all $P_1,\dots,P_q$ go to agent $1$. As long as the allocation remains $\alpha$-PROP1, agent $2$ misses $q$ goods with second normalized value one, and therefore $r_2\le\eta-q\gamma<\gamma$ by definition of $q$, $\eta$, and \eqref{eq:q-interval}. Present one more good with normalized values $(1,1)$. If it goes to agent $1$, the resulting allocation is not $\alpha$-PROP1. Otherwise it goes to agent $2$, and the total change in $r_1$ over this list and the additional good is $\beta S_q-\gamma< -\beta t$
by \eqref{eq:s-conditions-b}. Bringing $r_2$ below $\eta$ again increases $r_1$ by at most $\beta t$.

Define
\begin{equation}
\label{eq:d-choice}
    d:=
    \min\left\{
        \min_{1\le j\le q}(\gamma s_j-\beta S_{j-1}-\beta t),
        \gamma-\beta S_q-\beta t
    \right\}.
\end{equation}
Then, \eqref{eq:s-conditions-a} and \eqref{eq:s-conditions-b} give $d>0$. We have shown that, whenever $r_2\le\eta$, the adversary either makes the allocation not $\alpha$-PROP1 or returns to an allocation with $r_2\le\eta$ and lowers $r_1$ by at least $d$.

Now, first bring $r_2$ below $\eta$. Repeat the preceding decrease until either the allocation is not $\alpha$-PROP1 or
\begin{equation}
\label{eq:r1-final-range}
    r_1<\gamma-\beta S_q.
\end{equation}
This takes finitely many repetitions because every allocation that still satisfies $\alpha$-PROP1 has $r_1\ge0$, while each completed repetition lowers $r_1$ by at least $d$.

Present $P_1,\dots,P_q$ again. If one of these goods goes to agent $2$, bring $r_2$ below $\eta$ again. Then $r_1$ decreases by at least $d$, and \eqref{eq:r1-final-range} remains true. This outcome cannot occur indefinitely without making $r_1$ negative. Therefore, if the allocation remains $\alpha$-PROP1, at some repetition all $P_1,\dots,P_q$ go to agent $1$. At that point, $r_1^+<\gamma$ and $r_2^+\le\eta-q\gamma<\gamma$.
A final good with normalized values $(1,1)$ makes the resulting allocation not $\alpha$-PROP1, by \eqref{eq:both-margins-small}.
\end{proof}

\begin{proof}[Proof of \Cref{thm:online-propk-impossibility}]
Suppose, for contradiction, that a deterministic $n$-agent online algorithm guarantees $\alpha$-PROP$k$. By \Cref{lem:propk-implies-prop1}, it would guarantee $(\alpha/k)$-PROP1. If $n=2$, applying \Cref{lem:two-agent-prop1-impossibility} with factor $\alpha/k$ gives a contradiction. If $n\ge3$, \Cref{lem:two-agents-suffice} would give a two-agent algorithm guaranteeing $(2\alpha/(kn))$-PROP1, and applying \Cref{lem:two-agent-prop1-impossibility} with this factor again gives a contradiction. The reduction presents value vectors of the form $(v_1(g),v_2(g),0,\dots,0)$, so each good is positively valued by at most two agents.

It remains to justify the known horizon and the value bound. Fix a deterministic choice whenever the adversarial strategy permits several values, and apply the strategy to every possible recipient history, stopping each branch when the allocation is no longer $\alpha$-PROP$k$. This defines a rooted tree independently of the algorithm and of any horizon announced to it. Every nonterminal node has at most $n$ children, one for each possible recipient of the next good, and the proof above shows that every branch is finite. Hence there is a common finite upper bound $M$ on all branch lengths; otherwise, K\"onig's lemma would give an infinite branch.

The adversary sets the horizon to $M$ and reveals it before the first good arrives. If a branch stops earlier, it fills the remaining positions with goods of value zero to every agent. These goods do not change any PROP$k$ inequality, so the final allocation is still not $\alpha$-PROP$k$. The resulting tree contains only finitely many values. Multiplying every value in the construction by one common positive constant makes all values lie in $[0,1]$ without changing any allocation inequality or any normalized update along a given recipient history. The algorithm may choose different recipients after this multiplication or after learning $M$, but every such history is already included in the tree. This proves all parts of the theorem.
\end{proof}

\paragraph{Randomized algorithms.}
For every randomized online algorithm, there is an adaptive adversary under which the final allocation is not $\alpha$-PROP$k$ with probability one, and items~(i)--(iii) of \Cref{thm:online-propk-impossibility} continue to hold. The construction in \Cref{lem:two-agents-suffice} is applied to the realized recipient of each good, while the strategy in \Cref{lem:two-agent-prop1-impossibility} uses only earlier realized recipients and reaches a violation for every possible sequence of recipients. Together with \Cref{lem:propk-implies-prop1}, the violation occurs for every realization of the algorithm's private random choices, and hence with probability one, by definition.

\paragraph{Consequences for other fairness notions.}
Appendix~\ref{app:fairness-implications} shows that positive multiplicative guarantees for the standard notions listed there would imply a positive $\beta$-PROP$k$ guarantee for some fixed $k$ and $\beta>0$. Hence \Cref{thm:online-propk-impossibility} also rules out those guarantees against an adaptive adversary; specifically, PROP, PROP1, PROPX, PROPm, PROPavg, Avg-EFX, EF, EFX, EF1, EEFX, MXS, MMS, PMMS, GMMS, and every fixed EF$c$ notion \cite{amanatidis2018ef,amanatidis2023fairdivisionsurvey,aziz2020prop1,BaklanovGarimidiGkatzelisSchoepflin2021AAAI,BaklanovGarimidiGkatzelisSchoepflin2021IJCAI,BarmanBiswasKrishnamurthyNarahari2018,Budish2011,caragiannis2019unreasonable,CaragiannisGargRathiSharmaVarricchio2023,conitzer2017fairpublic,KobayashiMahara2025,lipton2004ece}.\footnote{This is not an exhaustive list of the notions in the literature covered by the impossibility result.}

Appendix~\ref{app:chores} proves the corresponding PROP$k$ lower bound for chores and derives its consequences for standard chore fairness notions.

\section{Constant PROP1 Approximation with MIV Predictions}
\label{sec:miv}

Choo et al.~\cite{choo2026approxproponline} show that perfect maximum item value (MIV) predictions give $1/n$-PROP1 and leave open the optimal dependence on $n$. We remove that dependence with a $19/30$ guarantee, while retaining the factor $n/(n+\kappa)$ when it is larger. For a known horizon, we retain $19/30$-PROP1 together with normalized maximum additive envy $O(\log n+\sqrt{m\log n/n})$; when $m\ge n\log n$, this becomes $O(\sqrt{m\log n/n})$, with an absolute hidden constant.
Agents with $p_i=0$ value every good at zero, so their fairness inequalities hold automatically. Accordingly, the definitions below include only agents with $p_i>0$, while $n$ continues to denote the original number of agents.

\subsection{A Constant Factor PROP1 Approximation Independent of \texorpdfstring{$n$}{n}}
\label{subsec:miv-constant}

For an agent with $p_i>0$, let $t_i^*:=\min\{t\in[m]:v_i(g_t)=p_i\}$ be the first time a maximum-valued good appears. For $t=0,\dots,m$, let $G^t:=\{g_1,\dots,g_t\}$ and let $A_i^t$ be agent $i$'s bundle after the first $t$ goods. Define
\begin{equation}
\label{eq:miv-accounting}
    D_i^t:=v_i(G^t)+p_i\mathbf 1[t<t_i^*],
    \quad
    B_i^t:=v_i(A_i^t\setminus\{g_{t_i^*}\}).
\end{equation}
The quantity $D_i^t$ counts the predicted maximum before it arrives. At time $t_i^*$, the prediction is replaced by the actual value $p_i$, so $D_i^t$ does not change. The quantity $B_i^t$ is the value received by agent $i$, except that it does not count this first maximum-valued good. Both quantities can be updated online.

At the end, $D_i^m=v_i(G)$, and $p_i+B_i^m$ is a value permitted by PROP1. If agent $i$ did not receive $g_{t_i^*}$, they can add it and obtain $v_i(A_i)+p_i=B_i^m+p_i$. If they did receive it, then $B_i^m+p_i=v_i(A_i)$.

For a target coefficient $\rho\in(0,1/n)$, define the normalized PROP1 slack
\begin{equation}
\label{eq:miv-margin}
    s_i^t:=\frac{p_i+B_i^t-\rho D_i^t}{p_i}.
\end{equation}
Initially, $s_i^0=1-\rho$. By the preceding interpretation of $p_i+B_i^m$, positive final slacks imply $n\rho$-PROP1. For the current good $g_t$, write
\begin{equation}
\label{eq:miv-normalized-increment}
    z_i:=
    \begin{cases}
        v_i(g_t)/p_i, & p_i>0\text{ and }t\ne t_i^*,\\
        0, & \text{otherwise}.
    \end{cases}
\end{equation}
The first maximum-valued good can be recognized using $p_i$ and whether such a good has already appeared. For every agent with $p_i>0$, assigning $g_t$ to $j$ gives us
\begin{equation}
\label{eq:miv-slack-update}
    s_i^t=s_i^{t-1}-\rho z_i+z_i\mathbf 1[i=j].
\end{equation}
The treatment of the first maximum-valued good follows Choo et al.~\cite{choo2026approxproponline}.

\paragraph{From individual terms to the total potential.}
We obtain the $19/30$-PROP1 guarantee by showing that, for some recipient, the decrease in that agent's potential term offsets the total increase in the other agents' terms.

Let $f$ be the fixed positive decreasing function defined in Appendix~\ref{app:miv-function}. In particular,
\begin{equation}
\label{eq:miv-function-bounds}
    f(x)>\frac1x\quad \text{for } 0<x\le1, \quad \text{and}
    \quad
    1<f(1)<\frac{30}{19}.
\end{equation}
Set $\rho:=19/(30n)$ and define
\begin{equation}
\label{eq:miv-improved-potential}
    \Phi^t:=\sum_{i:p_i>0}f(s_i^t+\rho).
\end{equation}
The shift by $\rho$ makes every initial argument equal to one, so $\Phi^0\le nf(1)<1/\rho$. Once all arguments of $f$ are positive, this bound forces every slack to be positive. Indeed, if $s_i^t\le0$, then $0<s_i^t+\rho\le\rho<1$, and \eqref{eq:miv-function-bounds} gives us $f(s_i^t+\rho)>\frac1{s_i^t+\rho}\ge\frac1\rho$,
contradicting $\Phi^t\le nf(1)<1/\rho$. Thus it is enough to keep all arguments positive and maintain $\Phi^t\le nf(1)$. Our function behaves as $1/x$ near zero and is logarithmic on an interval containing one. This lowers $f(1)$ while preserving the bounds used to compare the possible recipients.

\paragraph{Allocation rule.}
When $g_t$ arrives, assign it to an agent maximizing
\begin{equation}
\label{eq:miv-improved-score}
    \Delta_i:=f(s_i^{t-1}+\rho-\rho z_i)
       -f(s_i^{t-1}+\rho+(1-\rho)z_i),
\end{equation}
with $\Delta_i:=0$ when $p_i=0$ and a fixed order for breaking ties. This is exactly the recipient minimizing the next value of $\Phi^t$.

We also retain the reciprocal rule for instances in which few agents value the same good. With $\rho:=1/(n+\kappa)$, this rule minimizes
\begin{equation}
\label{eq:reciprocal-potential}
    \Psi^t:=\sum_{i:p_i>0}\frac1{s_i^t}.
\end{equation}
The two rules are alternatives: when $\kappa$ is given, choose the one with the larger guarantee before the first good arrives.

Then, our result is as follows.

\begin{theorem}
\label{thm:miv-constant}
Suppose exact MIV predictions are available. Against adaptive adversaries:
\begin{enumerate}[(i)]
    \item The rule in \eqref{eq:miv-improved-score} returns a $19/30$-PROP1 allocation.
    \item If the algorithm is given an upper bound $\kappa\in[2,n]$ on $|N(g)|$ for every good, the rule in \eqref{eq:reciprocal-potential} returns an $n/(n+\kappa)$-PROP1 allocation. Choosing between the two rules above gives a $\max\{19/30,n/(n+\kappa)\}$-PROP1 allocation.
\end{enumerate}
\end{theorem}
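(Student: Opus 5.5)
The plan is a potential-function argument for both parts. The preceding discussion already reduces the goal to two facts: every argument of $f$ stays positive, and $\Phi^t\le \Phi^0\le nf(1)$ at every time $t$. So for part (i) it suffices to show that the greedy rule \eqref{eq:miv-improved-score} never increases $\Phi$. Since the rule picks the recipient minimizing $\Phi^t$, it is enough to exhibit, for each good, a \emph{distribution} over recipients under which the expected change in $\Phi$ is nonpositive; the minimizer then does at least as well. Positivity of arguments is inductive. A non-recipient's argument drops by $\rho z_i\le\rho$. If an argument $x=s_i^{t-1}+\rho$ satisfied $x\le\rho$, then $f(x)>1/\rho>\Phi$, which is impossible. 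Hence $x>\rho$ before the step, and $x-\rho z_i>0$ after it.

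For the one-step inequality, write $x_i:=s_i^{t-1}+\rho>\rho$. Define the increase of a non-recipient as $a_i:=f(x_i-\rho z_i)-f(x_i)\ge0$ and the decrease of the recipient as $b_i:=f(x_i)-f(x_i+(1-\rho)z_i)\ge0$. Assigning to $j$ changes $\Phi$ by $\sum_{i\ne j}a_i-b_j$. Because $\Delta_j=a_j+b_j$, this change equals $\sum_i a_i-\Delta_j$. So it suffices to show
\[
\max_j (a_j+b_j)\ \ge\ \sum_i a_i .
\]
The key per-agent comparison I would aim for is an inequality of the form $b_i\ge \lambda a_i$ with $\lambda$ large, roughly of order $(1-\rho)/\rho\approx n$.

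For $f$ behaving like $1/x$, both $a_i$ and $b_i$ are about $z_i f'(x_i)$ times $\rho$ and $(1-\rho)$ respectively. The convexity of $f$ only helps $a_i$ and hurts $b_i$ when $z_i$ is large relative to $x_i$. This is exactly where the specific shape of $f$ from Appendix~\ref{app:miv-function} must be used: reciprocal near $0$, logarithmic near $1$.

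Combining per-agent terms, I would take the agent $j^*$ maximizing $a_j+b_j$. Individually, $a_i$ is small compared with $b_i$. In total, $\sum_i a_i$ is bounded via $a_i\lesssim \rho z_i|f'(x_i-\rho z_i)|$ and a bound like $x|f'(x)|\le C f(x)$. Together with $\sum_i f(x_i)=\Phi\le 1/\rho$, this bounds the total increase relative to the current potential. That is the phrase "individually and relative to the current total potential" in the overview. The main obstacle is this step: matching constants so that $19/30$ comes out. I would first try a uniform averaging over agents with $z_i>0$, weighted by $a_i/\sum a_i$, and verify the inequality case by case on the regions where $f$ is reciprocal versus logarithmic, using the properties of $f$ in the appendix.

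For part (ii), take $\rho=1/(n+\kappa)$ and $\Psi=\sum 1/s_i$. Initially $\Psi^0\le n/(1-\rho)$, and $s_i\le 0$ would force $\Psi$ to blow up. Only the at most $\kappa$ agents in $N(g)$ have $z_i>0$. I would assign the good to $j\in N(g)$ with probability proportional to $z_j/s_j^2$, that is, to the marginal benefit. The first-order term is then $\sum_{i\in N(g)}\bigl(\rho z_i/s_i^2-\pi_i(1-\rho)z_i/s_i^2\cdot(\cdots)\bigr)$. To control the second-order terms exactly I would use $1/(s-\rho z)-1/s=\rho z/(s(s-\rho z))$ and $1/s-1/(s+(1-\rho)z)=(1-\rho)z/(s(s+(1-\rho)z))$. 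The choice $\rho(n+\kappa)=1$ is what should balance the $\kappa$ non-recipients' increases against the recipient's decrease, using $s_i\le$ the bound implied by $\Psi\le n/(1-\rho)$. Finally, since $\kappa$ is known in advance, the algorithm picks whichever rule has the larger guarantee, which yields the $\max\{19/30,n/(n+\kappa)\}$ bound.
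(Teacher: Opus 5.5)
\textbf{Part (i).} Your reduction is correct and matches the paper. Assigning the good to $j$ changes $\Phi$ by $\sum_i a_i-\Delta_j$, where the paper's $d_i$ are your $a_i$. Your positivity argument is also fine. Since any distribution's expected decrease is at most $\max_j\Delta_j$, everything rests on proving $\sum_i a_i\le\max_j\Delta_j$.

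Your proposal does not supply this step, and the route you sketch fails. The per-agent inequality $b_i\ge\lambda a_i$ with $\lambda$ of order $n$ is false on states satisfying the inductive invariant. The bound $\Phi\le nf(1)$ allows one agent with $f(x_i)$ close to $nf(1)$ while all others have very large slack. Because $f(1)$ is only slightly below $30/19$, such an $x_i$ exceeds $\rho=19/(30n)$ by only about $6\cdot10^{-4}/n$ for large $n$. With $z_i=1$, the increase $a_i=f(x_i-\rho)-f(x_i)$ is then of order $10^3 n$, whereas $b_i<f(x_i)\le nf(1)$. For the same reason, bounding $a_i$ through $f'$ at the shifted point $x_i-\rho z_i$ does not relate $a_i$ to the current term $f(x_i)$, so summing against $\Phi$ breaks down exactly in this regime.

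The missing idea is to compare every non-recipient's increase with the single largest swing $M:=\max_j\Delta_j$, not with that agent's own decrease.
\begin{itemize}
\item Because $\Delta_i\le M$ for every $i$, \Cref{lem:miv-function-comparison} shows that, under this constraint, the increase is largest at $z=1$ and $x=S$, where $f(S-\rho)-f(S+1-\rho)=M$.
\item This gives both $a_i\le d_*:=f(S-\rho)-f(S)$ and $a_i/f(x_i)\le d_*/f(S)$, using convexity of $f$ and of $1/(-f')$ and log-convexity of $f$.
\item Condition \eqref{eq:miv-function-condition}, together with log-convexity of $f$ and of $-f'$, then gives $d_*/M\le\rho f(\min\{S,1\})$.
\item When $S\le1$, sum the relative bound against $\Phi^{t-1}\le nf(1)$. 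When $S\ge1$, sum the absolute bound over $n$ agents. Either way, $\sum_i a_i\le n\rho f(1)M<M$.
\end{itemize}
In the bad state above, it is the near-zero agent's own huge swing that pays for everyone's increases. This is why the comparison must be made against the maximum, and it is where the constant $19/30$ comes from.

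\textbf{Part (ii).} The positivity argument has a gap. The claim that ``$s_i\le0$ would force $\Psi$ to blow up'' is not valid for a discrete step. A single step can jump over the pole of $1/s$, making that term negative and $\Psi$ smaller, so the greedy minimizer might even prefer it. You need $s_i^{t-1}-\rho z_i>0$ for every candidate assignment. This follows from the lower bound $s_i^{t-1}\ge1/\Psi^{t-1}\ge(1-\rho)/n$ (not the upper bound you invoke) and from $(1-\rho)/n-\rho=(\kappa-1)/(n(n+\kappa))>0$. This is exactly where $\kappa\ge2$ is used.

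Your distribution proportional to $z_j/s_j^2$ is also not neutral agent by agent. Agents with small $z_j/s_j^2$ can receive less than $\rho$, which is less than they need even to break even. So your rule would require a global argument that you have not given. The paper instead uses $\pi_i=\rho(1+(1-\rho)z_i/s_i^{t-1})$, which makes $\mathbb E[1/s_i^t]=1/s_i^{t-1}$ hold exactly for each $i$. The only check is then $\sum_i\pi_i\le\rho\kappa+\rho(1-\rho)\Psi^{t-1}\le\rho(\kappa+n)=1$, and this is precisely what fixes $\rho=1/(n+\kappa)$.
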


\begin{proof}
For (i), take $\rho=19/(30n)$. We maintain positive slacks and $\Phi^t\le nf(1)$. Both statements hold initially. Fix the realized history and current good, and suppose the statements hold at time $t-1$. Every candidate argument in \eqref{eq:miv-improved-score} is positive, since $s_i^{t-1}+\rho-\rho z_i\ge s_i^{t-1}>0$.

For every agent with $p_i>0$, let $d_i:=f(s_i^{t-1}+\rho-\rho z_i)-f(s_i^{t-1}+\rho)$, and set $d_i:=0$ when $p_i=0$. Assigning the good to $j$ changes the potential by exactly
\begin{equation}
\label{eq:miv-total-change}
    \Phi^t(j)-\Phi^{t-1}=\sum_{i\in N}d_i-\Delta_j.
\end{equation}
Thus it suffices to show that $\sum_i d_i\le\max_i\Delta_i$.

If all $z_i=0$, the potential is unchanged. Otherwise, let $M:=\max_i\Delta_i>0$ and choose $S>\rho$ so that $f(S-\rho)-f(S+1-\rho)=M$.
Such an $S$ exists uniquely because $f$ is strictly decreasing and strictly convex, with limits $+\infty$ at zero and zero at infinity. The comparison proved in Appendix~\ref{app:miv-comparison} gives
\begin{align}
\frac{\sum_i d_i}{M}
&\le n\,\frac{f(S-\rho)-f(S)}{f(S-\rho)-f(S+1-\rho)}
       \min\left\{1,\frac{f(1)}{f(S)}\right\}
\label{eq:miv-total-change-bound}\\
&\le n\rho f(1)<1.\nonumber
\end{align}
The two bounds in the minimum have different roles. One bounds each increase $d_i$ directly and then sums over at most $n$ agents. The other bounds $d_i$ relative to the agent's current potential term and uses $\Phi^{t-1}\le nf(1)$. The latter is stronger for $S\le1$, and the former for $S\ge1$. Together they show that the largest available decrease is enough; no distribution over recipients is required.

By \eqref{eq:miv-total-change}, the maximizing recipient maintains $\Phi^t\le nf(1)$. All updated arguments are positive, and \eqref{eq:miv-function-bounds} now implies $s_i^t>0$, completing the induction. At the final time, $D_i^m=v_i(G)$, so
\[
    p_i+B_i^m>\rho v_i(G)=\frac{19}{30}\frac{v_i(G)}n.
\]
The interpretation following \eqref{eq:miv-accounting} proves (i).

For (ii), take $\rho=1/(n+\kappa)$. Appendix~\ref{app:miv-reciprocal} proves that the reciprocal rule maintains $\Psi^t\le n/(1-\rho)$ and gives $n/(n+\kappa)$-PROP1. For the averaging calculation, each agent with $z_i>0$ receives probability $\rho(1+(1-\rho)z_i/s_i^{t-1})$, which leaves her reciprocal term unchanged in expectation. These probabilities sum to at most $\rho\kappa+\rho(1-\rho)\Psi^{t-1}\le\rho(\kappa+n)=1$.
Choosing between that rule and (i) proves the stated maximum.

All comparisons use only the realized history and the current good, so the claims hold against adaptive adversaries without knowing the horizon. Each rule evaluates one score per agent.
\end{proof}

Combining \Cref{thm:miv-constant} with the one-sided-error transformation of Choo et al.~\cite{choo2026approxproponline} gives the following guarantee for imperfect MIV predictions.

\begin{corollary}
\label{cor:miv-error}
Suppose the algorithm is given $\kappa\in[2,n]$ such that $|N(g)|\le\kappa$ for every good, and a one-sided error bound $\eps\in[0,1)$ for the MIV predictions. Then a deterministic online algorithm guarantees
\[
    \max\left\{
        \frac{19n(1-\eps)}{30n-19\eps},
        \frac{n(1-\eps)}{n+\kappa-\eps}
    \right\}\text{-PROP1}
\]
against adaptive adversaries, without knowing the horizon. Without information about $\kappa$, the first factor applies.
\end{corollary}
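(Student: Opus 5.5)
The plan is to treat the corollary as a direct combination of \Cref{thm:miv-constant} with the one-sided-error transformation of Choo et al.~\cite{choo2026approxproponline}. I read that transformation as a black box with the following shape. Suppose a deterministic online algorithm, given exact MIV predictions, guarantees $\alpha$-PROP1 against adaptive adversaries, and suppose its analysis only uses the prediction as an upper bound on item values, so that the normalized increments satisfy $z_i\le1$. Then, given predictions with one-sided error $\eps$, a (possibly reparametrized) version of the algorithm guarantees
\[
\frac{n\alpha(1-\eps)}{n-\alpha\eps}\text{-PROP1}.
\]
The first step is to state this form precisely and check that both rules of \Cref{thm:miv-constant} meet its hypotheses. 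This is plausible because, for every good and every agent with $p_i>0$, the inequality $p_i\ge p_i^{\max}$ gives $v_i(g)/p_i\le1$. Moreover, the accounting in \eqref{eq:miv-accounting} still makes sense when no good attains $p_i$. In that case $t_i^*$ never occurs, so $D_i^m=v_i(G)+p_i\ge v_i(G)$ and $B_i^m=v_i(A_i)$.

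The second step is to substitute the two factors. With $\alpha=19/30$,
\[
\frac{n\cdot\frac{19}{30}(1-\eps)}{n-\frac{19}{30}\eps}=\frac{19n(1-\eps)}{30n-19\eps}.
\]
With $\alpha=n/(n+\kappa)$,
\[
\frac{n\cdot\frac{n}{n+\kappa}(1-\eps)}{n-\frac{n\eps}{n+\kappa}}=\frac{n(1-\eps)}{n+\kappa-\eps}.
\]
These are exactly the two terms in the statement. Since $\kappa$ and $\eps$ are known before the first good arrives, the algorithm can compare the two numbers in advance and run the rule giving the larger one. Without $\kappa$, only the $19/30$ rule is available, so only the first factor applies. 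Adaptivity and horizon-independence carry over because both the rules and the transformation use only the realized history and the current good.

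The main obstacle is making sure that the transformation truly applies as a black box, rather than only to the specific $1/n$ algorithm of Choo et al. If it does not, I would reprove the bound directly from the invariant that the final slack is positive, namely
\[
v_i(A_i)+p_i\ge\rho D_i^m\ge\rho\, v_i(G).
\]
I would first try rerunning the rule with a modified target coefficient $\rho'$ and then split into two cases.
\begin{itemize}
\item If agent $i$ holds no maximum-valued good, the best outside good is worth $p_i^{\max}\ge(1-\eps)p_i$. The exact-case argument then loses at most an $\eps p_i$ term, to be absorbed by $\rho'$.
\item If agent $i$ holds a maximum-valued good, then $v_i(A_i)\ge p_i^{\max}\ge(1-\eps)p_i$, which gives the needed slack directly.
\end{itemize}
I expect that balancing these two cases is what produces the denominator $n-\alpha\eps$, and this balancing is the step I would verify most carefully.
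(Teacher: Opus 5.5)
Your substitution is correct: with $\alpha=n\rho$, the factor $\frac{n\alpha(1-\eps)}{n-\alpha\eps}$ is exactly what the paper obtains, and it gives both displayed terms. The gap is the transformation itself.

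You picture it as running the rule of \Cref{thm:miv-constant} on the raw predictions, with the accounting \eqref{eq:miv-accounting} in which $t_i^*$ may never occur. Your fallback follows the same picture, and this cannot give the claimed factor. If no good attains $p_i$, then $B_i^m=v_i(A_i)$ and $D_i^m=v_i(G)+p_i$, so positive final slack says only $v_i(A_i)+(1-\rho)p_i>\rho v_i(G)$. The placeholder $p_i$ is never replaced by an actual good, so when agent $i$ also holds a good worth $p_i^{\max}$, that value is counted twice. In your second case, $v_i(A_i)\ge(1-\eps)p_i$ then yields only $v_i(A_i)>\frac{\rho(1-\eps)}{2-\eps-\rho}v_i(G)$, roughly half the target even as $\eps\to0$. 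The invariant is consistent with $A_i$ consisting of that single good while every other good she values is tiny, so $h_i$ cannot help. Re-tuning the target coefficient does not rescue this: compensating needs a coefficient near $2\rho$, which neither potential argument supports (for instance $2\cdot\frac{19}{30n}>\frac1n$). So the claim that case~2 ``gives the needed slack directly'' is false. Likewise, the hypothesis you attach to the black box (that the analysis only uses $z_i\le1$) is not what makes the transformation work.

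The missing idea, which is what the paper does, is to modify the instance rather than the analysis. For each agent with $p_i>0$, raise her value for the first good worth at least $(1-\eps)p_i$ to $p_i$, and run the exact rule on these modified values.
\begin{itemize}
\item The one-sided-error condition ensures this good exists and no value exceeds $p_i$, so the modified instance has exact predictions.
\item Only positive values are raised, so no $N(g)$ changes and the $\kappa$ bound still holds for the reciprocal rule. Your proposal omits this check.
\item Let $d_i\le\eps p_i$ be the increase and $h_i$ her best outside good. Undoing the change lowers $v_i(A_i)+h_i$ by at most $d_i$ and $v_i(G)$ by exactly $d_i$, so $v_i(A_i)+h_i\ge\rho v_i(G)-(1-\rho)d_i$.
\item The inflated good lies in $A_i$ or can be added, so $v_i(A_i)+h_i\ge(1-\eps)p_i$, hence $d_i\le\frac{\eps}{1-\eps}(v_i(A_i)+h_i)$.
\item Substituting and rearranging gives $v_i(A_i)+h_i\ge\frac{\rho(1-\eps)}{1-\rho\eps}v_i(G)$, which is your formula with $\alpha=n\rho$.
\end{itemize}
This reduction applies to any exact-prediction algorithm, with no condition on its analysis. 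Your remaining steps then go through as you describe: choosing the better rule in advance, adaptivity, and the unknown horizon.
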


\begin{proof}
We give the transformation explicitly. For each agent with $p_i>0$, increase her value for the first good worth at least $(1-\eps)p_i$ to $p_i$, leaving all other values unchanged. Run either rule from \Cref{thm:miv-constant} on these modified values. The one-sided-error condition in \Cref{sec:preliminaries} ensures that this good appears and that no value exceeds $p_i$. Thus the modified instance has exact MIV predictions. The modification is online and, since $\eps<1$, changes no set $N(g)$.

Fix an agent with $p_i>0$. Let $d_i\in[0,\eps p_i]$ be her single value increase and let $h_i:=\max_{g\in G\setminus A_i}v_i(g)$. Suppose the chosen exact-prediction rule gives $n\rho$-PROP1. Returning to the original values reduces the value of a bundle with at most one added good by at most $d_i$, whereas it reduces the total value by $d_i$. Therefore
\[
    v_i(A_i)+h_i\ge\rho v_i(G)-(1-\rho)d_i.
\]
Also, $v_i(A_i)+h_i\ge(1-\eps)p_i$: the changed good is either in $A_i$ or can be added to it. Hence
\[
    d_i\le\frac{\eps}{1-\eps}(v_i(A_i)+h_i),
    \quad
    v_i(A_i)+h_i\ge\frac{\rho(1-\eps)}{1-\rho\eps}\,v_i(G).
\]
The latter inequality follows by substituting the former into the preceding display and rearranging. Agents with $p_i=0$ satisfy PROP1 automatically. Taking $\rho=19/(30n)$ or $\rho=1/(n+\kappa)$ gives the displayed factors. Choose the rule with the larger factor before any good arrives.
\end{proof}

\subsection{The Same PROP1 Factor with Additive Envy Bounds}
\label{subsec:miv-constant-envy}

Assume that the horizon $m$ is known. Define normalized values by
\[
    \widehat v_i(g)
    :=
    \begin{cases}
        v_i(g)/p_i, & p_i>0,\\
        0, & p_i=0.
    \end{cases}
\]
Then $\widehat v_i(g)\in[0,1]$. This normalization does not change the PROP1 factor. The envy bounds below are stated in normalized values; multiplying the bound for agent $i$ by $p_i$ gives the corresponding bound in her original values.
For agents $i,j\in N$, let $E_{ij}^t:=\widehat v_i(A_j^t)-\widehat v_i(A_i^t)$, so $E_{ii}^t=0$.

Retain $\rho=19/(30n)$, the slacks $s_i^t$, and the potential $\Phi^t$ from \Cref{subsec:miv-constant}. For $\eta>0$, define
\begin{equation}
\label{eq:weighted-envy-potential}
    F_\eta^t
    :=
    \sum_{i\ne j}
    e^{\eta E_{ij}^t}
    (1+f(s_j^t+\rho)),
\end{equation}
where a term written as $f(s_j^t+\rho)$ is defined to be zero when $p_j=0$.
The exponential terms follow the approach of Benad\`e et al.~\cite{benade2018envyvanish}. The multiplier links the two fairness objectives: receiving a good may increase other agents' envy toward its recipient, but it also decreases the recipient's PROP1 potential term whenever her adjusted value is positive. Using the \emph{envied agent's} term lets this decrease compensate for the increase in envy.

\paragraph{Allocation rule.}
Set $C:=\frac{789}{500}$, $\eta:=10^{-6}\min\{1,\sqrt{ (n\log n) / m}\}$, $A_\eta:=1+\frac{2\eta^2}{n}$, $F_*:=n(n-1)(1+f(1))$, and $L:= (m\log A_\eta+\log(2CF_*/(C-f(1))))/ \eta$.

The bounds in Appendix~\ref{app:miv-function} give $f(1)<C<30/19$. The first inequality allows a positive initial envy term, while the second ensures that $\Phi^t<nC$ still implies positive slacks. Define
\begin{equation}
\label{eq:constant-envy-potential}
    \Omega^t
    :=
    \frac{\Phi^t}{nC}
    +
    A_\eta^{m-t}e^{-\eta L}F_\eta^t.
\end{equation}
When good $g_t$ arrives, assign it to a recipient that minimizes the resulting value of $\Omega^t$, breaking ties by a fixed order.

\begin{theorem}
\label{thm:miv-constant-envy}
Assume exact MIV predictions and a known horizon $m$. Against adaptive adversaries, the allocation rule above returns an allocation $A$ that is $19/30$-PROP1 and satisfies
\begin{equation}
\label{eq:constant-envy-bound}
    \max_{i,j\in N}
    (\widehat v_i(A_j)-\widehat v_i(A_i))
    \le
    \min\{m,L\},
    \quad
    L=O\!\left(\log n+\sqrt{\frac{m\log n}{n}}\right),
\end{equation}
where the hidden constant is absolute. In particular, if $m\ge n\log n$, the normalized maximum additive envy is $O(\sqrt{m\log n/n})$.
For every known horizon $m$, the bound in \eqref{eq:constant-envy-bound} is also $O(\sqrt{m\log n})$.
Equivalently, $v_i(A_j)-v_i(A_i)\le p_i\min\{m,L\}$ for every $i,j\in N$.
\end{theorem}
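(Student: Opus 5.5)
The plan is to show that the greedy rule keeps $\Omega^t\le 1$ at every step, i.e.\ that $\Omega$ is non-increasing under the chosen recipient and starts at most $1$. Both conclusions then follow. From $\Phi^t/(nC)\le 1$ we get $\Phi^t\le nC<30n/19=1/\rho$, and by the argument after \eqref{eq:miv-improved-potential} (using $f(x)>1/x$ on $(0,1]$) all slacks stay positive; the final interpretation following \eqref{eq:miv-accounting} then gives $19/30$-PROP1 exactly as in \Cref{thm:miv-constant}(i). For envy, at $t=m$ every term of $F_\eta^m$ is at least $e^{\eta E_{ij}^m}$, so $e^{-\eta L}e^{\eta E_{ij}^m}\le 1$, which gives $E_{ij}\le L$. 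The bound $E_{ij}\le m$ is trivial because normalized values are at most one.

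\textbf{Initial value.} We have $\Phi^0\le nf(1)$ and $F_\eta^0\le F_*$. By the definition of $L$,
\[
A_\eta^m e^{-\eta L}F_*=\frac{C-f(1)}{2C}.
\]
Hence $\Omega^0\le f(1)/C+(C-f(1))/(2C)<1$.

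\textbf{One step.} The main work is to show that some recipient $j$ satisfies $\Omega^t(j)\le\Omega^{t-1}$. I would argue by averaging over a suitably chosen distribution $\pi$ on recipients, falling back on a single recipient when needed, mirroring the dichotomy described in the introduction.

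\emph{Uniform case.} Take $\pi$ uniform. For each $i\ne j$, the exponent changes by $\eta(z^{\text{hat}}_j-z^{\text{hat}}_i)$ when $j$ receives, by $-\eta\widehat v_i$ when $i$ receives, and by $0$ otherwise. Using $e^x\le 1+x+x^2$ for $|x|\le 1$, the linear terms average to zero over the two relevant events, and the quadratic terms contribute at most $2\eta^2/n$. This is exactly the factor $A_\eta$, which the multiplier $A_\eta^{m-t}$ absorbs. The multiplier $1+f(s_j+\rho)$ also changes. Its change for non-recipients is the $d_j\ge 0$ from the proof of \Cref{thm:miv-constant}, and for the recipient it is $-\Delta_j+d_j$. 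So the cross terms must be controlled as well. I would bound $d_j$ relative to $f(s_j+\rho)$ via the comparison in Appendix~\ref{app:miv-comparison}, so that the increase is at most a small multiple of the current term times $\eta$-independent quantities.

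\emph{Single-recipient case.} If uniform averaging fails because the PROP1 part increases, the max-$\Delta$ recipient from \eqref{eq:miv-total-change-bound} yields a decrease of $\Phi$ of order $(1-n\rho f(1))M$ relative to the $d$'s. I then need this decrease to dominate the envy increase: the increase in $F$ at recipient $j$ is at most $(e^{\eta}-1)$ times the terms pointing at $j$. Here the slack between $f(1)$ and $C$, together with the tiny constant $10^{-6}$ in $\eta$, should make the comparison close.

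\textbf{Main obstacle.} The hardest step is this combined case analysis. The envy factor $e^{\eta E_{ij}}$ multiplies the PROP1 term, so a decrease of $f(s_j+\rho)$ at the recipient is weighted by possibly huge envy exponentials, while the increases $d_i$ for other agents are weighted as well. I would first try to show the inequality term by term in $j$: for each envied $j$, the weighted change $\sum_{i\ne j}e^{\eta E_{ij}}(\text{change in }f)$ is handled by the PROP1 comparison with weights. Then I would combine with the unweighted $\Phi/(nC)$ term using the $C>f(1)$ margin.

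\textbf{Asymptotics.} Finally, we have $\log A_\eta\le 2\eta^2/n$ and $\log(2CF_*/(C-f(1)))=O(\log n)$. Hence
\[
L=O\!\left(m\eta/n+\log n/\eta\right).
\]
Plugging in $\eta$ gives $O(\log n+\sqrt{m\log n/n})$. When $\eta$ is capped at its constant value, $m\le n\log n$, and $\min\{m,L\}=O(\sqrt{m\log n})$ follows from $\min\{m,\log n+\sqrt{m\log n/n}\}\le\sqrt{m\log n}+\sqrt{m\log n}$. Multiplying by $p_i$ gives the statement in original values.
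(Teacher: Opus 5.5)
Your outer scaffolding is sound and matches the paper: $\Omega^0<1$ from the definition of $L$, positive slacks from $\Phi^t<nC<1/\rho$, $E_{ij}^m\le L$ read off one term of $F_\eta^m$, and the asymptotics of $L$. The gap is the one-step claim that some recipient has $\Omega^t\le\Omega^{t-1}$. That claim is the entire content of the theorem, your plan leaves it open, and neither branch closes as you set it up.

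In the uniform branch, freeze the exponentials. The expected change of agent $j$'s multiplier $1+f(s_j+\rho)$ is then $d_j-\Delta_j/n$, and the only per-step slack in the envy part is the factor $A_\eta=1+2\eta^2/n$. So you need $d_j\le\Delta_j/n$ for every $j$. A bound on $d_j$ by an $\eta$-independent multiple of $f(s_j+\rho)$ cannot be absorbed. The inequality $d_j\le\Delta_j/n$ fails for any agent whose slack is small compared with her normalized value of the good, so the branch needs an entry condition that forces it. The paper (Appendix~\ref{app:miv-constant-envy-step}) requires $Q_j:=\Delta_j-2\eta(1+f(u_j))\le\tau:=f(1-\rho)-f(2-\rho)$ for all $j$, which gives $d_j\le\rho C\Delta_j$ with $n\rho C<1$.

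The cross terms also do not simply average out. The exponential rises exactly when the multiplier drops to $1+f(u_j)$, and falls when the multiplier rises to $1+f(\ell_j)$. You need $f(\ell_j)\ge f(u_j)$ and $f(u_j)\le f(s_j+\rho)$ to handle them. (Also, when $j$ receives $g_t$, $E_{ij}$ grows by $\widehat v_i(g_t)$, not by $\widehat v_j(g_t)-\widehat v_i(g_t)$.)

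In the fallback branch, the max-$\Delta$ recipient and the unweighted comparison \eqref{eq:miv-total-change-bound} are the wrong tools. With exponentials frozen, agent $i$'s PROP1 term in $\Phi+\lambda F_\eta$, where $\lambda=nCA_\eta^{m-t}e^{-\eta L}$, carries weight $w_i=1+\lambda\sum_{k\ne i}e^{\eta E_{ki}^{t-1}}$. So the quantity to beat is $\sum_iw_id_i$, and $\Phi^{t-1}\le nf(1)$ is no longer an invariant.

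The absolute half of the comparison ($d_i\le d_*$) only gives $d_*\sum_iw_i$, and $\sum_iw_i$ can be about $n(1+C)$, which loses a factor above two. What closes is the relative estimate $d_i\le\frac{2001}{2000}\rho f(s_i+\rho)M$ for a common $M\ge\tau$, summed against $\sum_iw_if(s_i+\rho)\le nC\Omega^{t-1}<nC$. The margin is only about $10^{-4}$.

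The recipient's gain $w_j\Delta_j$ must also pay for the envy toward $j$, up to $2\eta(w_j-1)(1+f(u_j))$, and $f(u_j)$ can be of order $1/\rho=\Theta(n)$. An agent whose slack is near zero but whose own value of the good is tiny can have the largest $\Delta_j$ and still satisfy $\Delta_j<\eta(1+f(u_j))$. If the others value the good at one and $w_j$ is large, giving the good to her increases $\Omega$, however small the absolute constant in $\eta$.

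The missing idea is the envy-corrected score $Q_j$ together with the threshold $\tau$. If every $Q_i\le\tau$, average uniformly. Otherwise give the good to the $j$ maximizing $w_jQ_j$, which yields $\sum_iw_id_i<w_jQ_j$. Arguing ``term by term in $j$'' cannot replace this: only one agent receives the good, so the comparison must be made across recipients with these weights.
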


\begin{proof}
Initially, $s_i^0+\rho=1$ for every $i$ with $p_i>0$, so $\Phi^0\le nf(1)$ and $F_\eta^0\le F_*$. By the definition of $L$,
\[
    \Omega^0
    \le\frac{f(1)}{C}+A_\eta^m e^{-\eta L}F_*
    =\frac{f(1)}{C}+\frac{C-f(1)}{2C}
    <1.
\]
We prove inductively that $s_i^t>0$ for every $i$ with $p_i>0$ and that $\Omega^t<1$. Suppose these inequalities hold at time $t-1$. Every possible assignment has positive arguments in the potential, since $s_i^{t-1}+\rho-\rho z_i\ge s_i^{t-1}>0$.

Appendix~\ref{app:miv-constant-envy-step} proves that some recipient satisfies $\Omega^t\le\Omega^{t-1}$. The proof compares two possibilities. Either uniform averaging keeps each PROP1 term from increasing and increases $F_\eta$ by a factor of at most $A_\eta$, or one recipient's decrease is enough to reduce the combined potential directly. Thus the allocation rule maintains $\Omega^t<1$. In particular, $\Phi^t<nC<1/\rho$.

If $s_i^t\le0$, its positive argument $s_i^t+\rho$ would be at most $\rho$, and \eqref{eq:miv-function-bounds} would give $f(s_i^t+\rho)>1/\rho$, a contradiction. Hence the selected assignment has positive slacks, completing the induction. The comparison in the appendix fixes only the realized history and current good, so it applies to adaptive adversaries.

At time $m$, positive slacks give us
\[
    p_i+B_i^m>\rho v_i(G)=\frac{19}{30}\frac{v_i(G)}{n}.
\]
The interpretation following \eqref{eq:miv-accounting} proves $19/30$-PROP1. Agents with $p_i=0$ satisfy this condition automatically.
Also, \eqref{eq:constant-envy-potential} and $\Omega^m<1$ imply
\[
    e^{\eta(E_{ij}^m-L)}(1+f(s_j^m+\rho))<1
    \quad(i\ne j).
\]
The second factor is at least one, so $E_{ij}^m<L$. Normalized maximum additive envy is also at most $m$, since all normalized values lie in $[0,1]$.

Finally, $F_*=O(n^2)$, and $C-f(1)$ is a positive absolute constant, so
\[
    L\le\frac{2\eta m}{n}
       +\frac{\log(2CF_*/(C-f(1)))}{\eta}
    =O\!\left(\log n+\sqrt{\frac{m\log n}{n}}\right)
\]
by the choice of $\eta$. If $m\ge n\log n$, the square-root term is at least $\log n$, giving the stated bound. For the $O(\sqrt{m\log n})$ bound, use the bound $m$ when $m\le\log n$; otherwise both terms in the last display are $O(\sqrt{m\log n})$.
\end{proof}

The rule uses $O(n)$ evaluations of $f$ and $O(n^2)$ additional operations per good. Evaluate $f$ at each agent's two possible updated arguments, and compute the needed row and column sums of the pairwise exponential terms once. For a fixed recipient, only her row and column of $(E_{ij}^t)_{i,j\in N}$ change, and only her column uses the receiving value of the multiplier. All candidate values can therefore be compared within the stated bound.

\section{Tight Guarantees for \Like{} and \Rand{}}\label{sec:rand}

Choo et al.~\cite{choo2026approxproponline} showed that \Rand{}, which assigns each good independently and uniformly among all $n$ agents, has a tight $\Theta(1/\log(n/\delta))$ PROP1 guarantee with probability at least $1-\delta$ against a non-adaptive adversary. The classical \Like{} rule of Aleksandrov et al.~\cite{aleksandrov2015onlinefoodbank} instead assigns each good independently and uniformly among the agents who value it positively. Thus, when $i\in N(g)$, agent $i$ receives $g$ with probability $1/|N(g)|$ under \Like{} rather than $1/n$ under \Rand{}. We determine whether this larger probability improves PROP1 when few agents value the same good, and how the two rules compare for additive envy.

As in the result of Choo et al. for \Rand{}, we assume a non-adaptive adversary. Thus the complete valuation sequence is fixed before either rule makes any random choice, as in \Cref{sec:preliminaries}. This is the setting needed for the concentration analysis below. Without future information, the randomized extension of \Cref{thm:online-propk-impossibility} rules out every positive PROP1 approximation against an adaptive adversary, so the non-adaptive assumption is necessary for the guarantees studied here.

Write $A^{\Like}$ and $A^{\Rand}$ for the final allocations. For goods with $N(g)\ne\varnothing$, the rule described above is exactly the classical \Like{} rule of Aleksandrov et al.~\cite{aleksandrov2015onlinefoodbank} under truthful reports. If $N(g)=\varnothing$, \Like{} assigns $g$ arbitrarily; this only makes the allocation complete and does not change any agent's value. Our analysis allows general nonnegative additive valuations.

Recall that $\kappa=\max_{g\in G}|N(g)|$.
If $\kappa\le1$, every good with $N(g)\ne\varnothing$ is assigned by \Like{} to the unique agent in $N(g)$, so \Like{} is proportional. We therefore assume $2\le\kappa\le n$.

We show that \Like{} gains a factor of order $n/\kappa$ in proportionality, and that the dependence on $n$, $\kappa$, and $\delta$ is tight up to constants. The envy comparison goes in the opposite direction because assigning a good among fewer agents increases pairwise variance.

\subsection{Tight High-Probability PROP1 Guarantees}

For a fixed agent, the PROP1 condition is automatic if one good is already worth the target share. Otherwise, every good she values is smaller than that target, while \Like{} assigns each such good to her with probability at least $1/\kappa$. This gives a lower-tail estimate for the value she receives. We state the one-agent estimate separately; the allocation guarantee then follows by a union bound over agents.

\begin{lemma}\label{lem:like-agent-prop1}
Fix an agent $i\in N$ and $\alpha\in(0,1]$, and let $r:=\frac{\alpha\kappa}{n}$.
If $r<1$, then under \Like{},
\[
\Pr[\text{agent $i$ does not satisfy the $\alpha$-PROP1 condition}]
\le
\exp\left(-\frac{3(1-r)^2}{8r}\right).
\]
\end{lemma}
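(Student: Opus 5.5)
Fix agent $i$ and write $T:=\alpha v_i(G)/n$ for the target. The plan is to reduce to the case where every good is small relative to $T$, and then apply a multiplicative Chernoff lower-tail bound to agent $i$'s received value.

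\textbf{Step 1: trivial cases.} If $v_i(G)=0$ the condition holds, so assume $v_i(G)>0$. If some good $g$ has $v_i(g)\ge T$, agent $i$ satisfies $\alpha$-PROP1 deterministically. Either $g\in A_i$ and $v_i(A_i)\ge T$, or $g\notin A_i$ and adding $g$ suffices. So assume $v_i(g)<T$ for every $g$.

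\textbf{Step 2: a sufficient event.} Let $X:=v_i(A_i^{\Like})=\sum_{g\in N_i} v_i(g)\mathbf 1[g\in A_i]$, where $N_i:=\{g:v_i(g)>0\}$. Under a non-adaptive adversary the indicators are independent, and $\Pr[g\in A_i]=1/|N(g)|\ge 1/\kappa$. Hence
\[
\mu:=\mathbb E[X]\ge v_i(G)/\kappa=T/r .
\]
Agent $i$ fails $\alpha$-PROP1 only if $X<T$, since we may ignore the added good. Therefore it suffices to bound $\Pr[X<T]\le \Pr[X<r\mu]$; the inequality holds because $T\le r\mu$.

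\textbf{Step 3: Chernoff.} Rescale by $T$, setting $Y_g:=v_i(g)\mathbf 1[g\in A_i]/T\in[0,1]$, which is valid by Step 1. Then $Y=\sum_g Y_g$ has mean $\mu/T\ge 1/r$. With deviation $1-r$, the standard multiplicative lower tail for independent $[0,1]$ variables gives
\[
\Pr[Y<(1-(1-r))\mathbb E Y]\le \exp\!\bigl(-(1-r)^2\mathbb E Y/2\bigr)\le \exp\!\bigl(-(1-r)^2/(2r)\bigr).
\]
This is stronger than the claimed $\exp(-3(1-r)^2/(8r))$, because $1/2\ge 3/8$. If one prefers a Bernstein-type bound with variance proxy $\le \mathbb E Y$ and range $1$, the resulting constant $3/8$ is exactly what appears in the statement. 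That suggests the authors used Bernstein, but either bound suffices.

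\textbf{Main obstacle.} One subtlety is that $\mathbb E Y$ may exceed $1/r$. We need the tail bound to be monotone in the mean at the fixed threshold $1$, not at a fixed relative deviation. The cleanest approach is to apply the Chernoff lower-tail bound in the form $\Pr[Y\le a]\le \exp(-(\mathbb E Y-a)^2/(2\mathbb E Y))$ with $a=1$. The exponent $(\mu'-1)^2/(2\mu')$ is increasing in $\mu'$ for $\mu'\ge1$, so it is at least its value at $\mu'=1/r$, namely $(1-r)^2/(2r)$. With this checked, the lemma follows.
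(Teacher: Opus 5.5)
Your proof is correct. It shares the paper's skeleton: dispose of the case where one good already meets the target, then bound the lower tail of agent $i$'s received value when every summand is below the target. The concentration step is done differently.

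\emph{The paper's route.} It couples each indicator $\mathbf 1[g\in A_i]$ from below by an independent Bernoulli$(1/\kappa)$ variable. The comparison sum $\sum_g v_i(g)B_g$ then has mean exactly $\mu=v_i(G)/\kappa$. The paper applies Bernstein's inequality to it with range $r\mu$ and variance bound $\mathrm{Var}<r\mu^2$. The range term $\tfrac23(r\mu)(1-r)\mu$ in the denominator is what yields the constant $3/8$.

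\emph{Your route.} You keep the actual heterogeneous indicators, with probabilities $1/|N(g)|\ge1/\kappa$, and rescale by the target so the summands lie in $[0,1]$. You then use the multiplicative lower-tail Chernoff bound, which does hold for independent $[0,1]$-valued summands, not only Bernoulli ones. The excess mean is handled by the containment $\{Y<1\}\subseteq\{Y<r\,\mathbb E Y\}$ from Step~2, together with the fact that $\exp(-(1-r)^2\mathbb E Y/2)$ is decreasing in $\mathbb E Y$. This avoids the coupling and exploits one-sidedness: the lower tail of nonnegative bounded summands carries no range penalty. So you obtain the constant $1/2$ instead of $3/8$. Carried through \Cref{thm:like-prop1}, this would improve the constant $3/32$ there to $1/8$. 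The paper's coupling buys only a cleaner, verbatim application of a standard inequality at exact mean.

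Your ``main obstacle'' paragraph is unnecessary. The Step~2 reduction already copes with $\mathbb E Y>1/r$ at a fixed relative deviation, so your claim that a fixed relative deviation does not suffice is inaccurate. The threshold-$1$ monotonicity argument is simply a second valid way to reach the same conclusion.
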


\begin{proof}
If $v_i(G)=0$, the claim is immediate. If $p_i^{\max}\ge\alpha\frac{v_i(G)}{n}$, then agent $i$ satisfies the $\alpha$-PROP1 condition for every realized allocation. Indeed, if a maximum-valued good lies outside $A_i$, add it to $A_i$; if it lies in $A_i$, take the set in the definition of PROP1 to be empty. Hence assume $p_i^{\max}<\alpha\frac{v_i(G)}{n}$.

Set $\mu:=v_i(G)/\kappa$. Then $\alpha v_i(G)/n=r\mu$. For every good $g$ with $v_i(g)>0$, agent $i$ belongs to $N(g)$, so
\[
    \Pr[g\in A_i]=\frac{1}{|N(g)|}\ge\frac1\kappa.
\]
Take independent random variables $U_g\sim\mathrm{Unif}[0,1]$, one for each good with $v_i(g)>0$, and set $I_g:=\mathbf{1} [U_g\le\frac{1}{|N(g)|}]$, $B_g:=\mathbf{1} [U_g\le\frac1\kappa ]$.
Then $I_g\ge B_g$ almost surely. Therefore, we get that
\[
    v_i(A_i)
    =\sum_{g:v_i(g)>0}v_i(g)I_g
    \ge
    Y:=\sum_{g:v_i(g)>0}v_i(g)B_g.
\]
The variables $B_g$ are independent Bernoulli-$1/\kappa$ variables, and $\mathbb E[Y]=\mu$. Every summand of $Y$ is at most $p_i^{\max}<r\mu$, and
\[
    \mathrm{Var}(Y) =\sum_{g:v_i(g)>0}v_i(g)^2\frac1\kappa\left(1-\frac1\kappa\right) \le\frac{p_i^{\max}}{\kappa}\sum_{g:v_i(g)>0}v_i(g)
    <r\mu^2.
\]
The centered variables $v_i(g)(B_g-1/\kappa)$ are independent, have mean zero, and are bounded in absolute value by $r\mu$. If $Y\ge r\mu=\alpha v_i(G)/n$, then agent $i$ satisfies $\alpha$-PROP1 by taking the added set to be empty. Thus Bernstein's inequality gives
\[
\begin{aligned}
\Pr[\text{agent $i$ does not satisfy $\alpha$-PROP1}]
&\le \Pr[Y-\mu<-(1-r)\mu]\\
&\le
\exp\left(
-\frac{(1-r)^2\mu^2}
{2\mathrm{Var}(Y)+\frac{2}{3}(r\mu)(1-r)\mu}
\right) \le
\exp\left(-\frac{3(1-r)^2}{8r}\right),
\end{aligned}
\]
where the last inequality uses $\mathrm{Var}(Y)<r\mu^2$ and $1-r\le1$.
\end{proof}

\begin{theorem}\label{thm:like-prop1}
Fix $\delta\in(0,1)$. On every fixed instance with $2\le\kappa\le n$, \Like{} returns an
$\alpha$-PROP1 allocation with probability at least $1-\delta$, where $\alpha
:= \min \{1,\frac{3n}{32\kappa\log(n/\delta)} \}$.
\end{theorem}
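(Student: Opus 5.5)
The plan is to apply \Cref{lem:like-agent-prop1} to each agent and take a union bound over the $n$ agents. The case $\alpha=1$ needs separate care, because then $r=\kappa/n$ may equal $1$ (when $\kappa=n$), and the lemma requires $r<1$.

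\emph{Main case.} Suppose $\alpha=\frac{3n}{32\kappa\log(n/\delta)}<1$. Then
\[
r=\frac{\alpha\kappa}{n}=\frac{3}{32\log(n/\delta)} .
\]
Since $n\ge2$ and $\delta<1$, we have $\log(n/\delta)>\log 2$, so $r<\frac{3}{32\log 2}<\frac15$ and hence $(1-r)^2\ge\frac{16}{25}$. The exponent in \Cref{lem:like-agent-prop1} then satisfies
\[
\frac{3(1-r)^2}{8r}\ge\frac{3\cdot 16}{25\cdot 8}\cdot\frac{32\log(n/\delta)}{3}=\frac{64}{25}\log(n/\delta)\ge\log(n/\delta).
\]
So each agent fails $\alpha$-PROP1 with probability at most $\delta/n$. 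A union bound over the $n$ agents shows that all agents satisfy the condition with probability at least $1-\delta$. The constant $3/32$ leaves ample room; the only thing to check is that $r$ stays bounded away from $1$ uniformly.

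\emph{Case $\alpha=1$.} Here $\frac{3n}{32\kappa\log(n/\delta)}\ge1$, i.e.\ $\kappa\le\frac{3n}{32\log(n/\delta)}$. Then $r=\kappa/n\le\frac{3}{32\log(n/\delta)}$, which is the same bound on $r$ as before. Note that $r<1$ holds automatically here, since $\kappa/n\le 3/(32\log 2)<1$. The exponent $\frac{3(1-r)^2}{8r}$ is decreasing in $r$ on $(0,1)$, so the same computation gives failure probability at most $\delta/n$ per agent, and the union bound finishes. The constraint $\kappa\ge2$ is not needed beyond ruling out trivialities.

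\emph{Main obstacle.} The only delicate point is ensuring $r<1$ with a uniform margin so that $(1-r)^2$ is bounded below by an absolute constant. This comes from $\log(n/\delta)>\log2$. The rest is direct substitution into \Cref{lem:like-agent-prop1}.
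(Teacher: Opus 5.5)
Your proposal is correct and follows the paper's proof: apply \Cref{lem:like-agent-prop1} with $r=\alpha\kappa/n\le 3/(32\log(n/\delta))$, use $\log(n/\delta)>\log 2$ to bound $(1-r)^2$ below by an absolute constant, and take a union bound over the $n$ agents. The paper skips your case split by writing $r=\min\{\kappa/n,\,3/(32\log(n/\delta))\}$ directly, and it uses the cruder bound $r<1/2$ (so $(1-r)^2\ge 1/4$ and the exponent is at least $3/(32r)\ge\log(n/\delta)$) where you use $r<1/5$.
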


\begin{proof}
Let $r:=\alpha\kappa/n$. Then
\[
r
=
\min\left\{\frac{\kappa}{n},\frac{3}{32\log(n/\delta)}\right\}
\le
\frac{3}{32\log(n/\delta)}
<\frac12.
\]
By \Cref{lem:like-agent-prop1}, each agent does not satisfy the $\alpha$-PROP1 condition with probability at most
\[
\exp\left(-\frac{3(1-r)^2}{8r}\right)
\le
\exp\left(-\frac{3}{32r}\right)
\le
\frac{\delta}{n}.
\]
A union bound over the $n$ agents proves the theorem.
\end{proof}

When every agent values every good, $\kappa=n$, and \Cref{thm:like-prop1} gives the same
$\Theta(1/\log(n/\delta))$ order as \Rand{}. When fewer agents value each good, the factor improves
by $n/\kappa$. In particular, if
$\kappa\le 3n/(32\log(n/\delta))$, then \Like{} returns a PROP1 allocation with probability at least
$1-\delta$.

The next construction shows that both the $n/\kappa$ term and the $\log(n/\delta)$ term are necessary, up to constants, whenever the displayed factor is below one.

\begin{theorem}\label{thm:like-prop1-lower}
There is a universal constant $C>0$ with the following property. Fix $n\ge2$,
$2\le\kappa\le n$, $\delta\in(0,1/2]$, and $\alpha\in(0,1]$. If \Like{} returns an
$\alpha$-PROP1 allocation with probability at least $1-\delta$ on every fixed binary instance
satisfying $\max_{g\in G}|N(g)|\le\kappa$, then $\alpha\le C \cdot \frac{n}{\kappa\log(n/\delta)}$.
\end{theorem}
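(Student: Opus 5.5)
The plan is to give, for each admissible $(n,\kappa,\delta)$, one binary instance on which \Like{} violates $\alpha$-PROP1 with probability greater than $\delta$ whenever $\alpha$ exceeds $C\,n/(\kappa\log(n/\delta))$. The failure event I will use is the crudest one: some agent receives none of the goods she values.

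\textbf{Instance.} Partition $n':=\kappa\lfloor n/\kappa\rfloor\ge n/2$ of the agents into $\lfloor n/\kappa\rfloor$ groups of exactly $\kappa$ agents. The remaining agents value nothing and are ignored. For each group, present $T$ goods that are valued $1$ by every member of the group and $0$ by everyone else, so $|N(g)|=\kappa$ for every good. For an agent $i$ in a group, $v_i(G)=T$ and the maximum outside good is worth $1$. If $i$ receives no good, her PROP1 value is $0+1=1$. Hence she violates $\alpha$-PROP1 as soon as $1<\alpha T/n$, that is, $T>n/\alpha$.

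\textbf{Probability of failure.} Under \Like{}, each group good goes uniformly to one of its $\kappa$ valuers, independently over goods. So $p:=\Pr[N_i=0]=(1-1/\kappa)^T\ge e^{-2T/\kappa}$, where $N_i$ is the number of goods agent $i$ receives. The events $\{N_i=0\}$ are not independent. I will handle this with negative association. For each good, the recipient indicator vector is one-hot, hence negatively associated (Joag-Dev--Proschan / Dubhashi--Ranjan), and these vectors are independent across goods. Each $N_i$ is a nondecreasing function of a disjoint set of coordinates. Therefore the counts $(N_i)$ are negatively associated, and for the increasing events $\{N_i\ge1\}$,
\[
\Pr[\text{every agent receives a good}]\le\prod_i\Pr[N_i\ge1]=(1-p)^{n'}\le e^{-n'p}.
\]
Thus \Like{} fails with probability at least $1-e^{-n'p}$. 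This exceeds $\delta$ once $n'p>\log\frac1{1-\delta}$, and since $\delta\le1/2$ it suffices that $n'p\ge 2\delta$.

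\textbf{Choosing $T$ and concluding.} Take $T:=\lceil\frac{\kappa}{2}\log\frac{n'}{2\delta}\rceil$. This is at least $1$, because $n'\ge\kappa\ge2$ and $\delta\le1/2$. The ceiling costs at most a factor $e^{-2/\kappa}\ge e^{-1}$ in $p$. To absorb it, I will instead set $T:=\lceil\frac{\kappa}{2}(\log\frac{n'}{2\delta}-1)\rceil$ (or use $T=1$ in the degenerate small case), which gives $n'p\ge2\delta$. This $T$ is $\Theta(\kappa\log(n/\delta))$, using $n'\ge n/2$ and $n/\delta\ge4$ to absorb constants into $\log(n/\delta)$. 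If $\alpha>n/T$, the instance makes \Like{} fail with probability greater than $\delta$, contradicting the hypothesis. Hence $\alpha\le n/T\le C\,n/(\kappa\log(n/\delta))$ for a universal $C$.

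\textbf{Main obstacle.} The main delicate steps are the correlation between agents in the same group, which the negative-association argument resolves, and the bookkeeping of constants when $\frac{\kappa}{2}\log\frac{n'}{2\delta}$ is close to $1$. In that small regime I will check directly that the claimed bound is trivial: it is at least $1\ge\alpha$ once $C$ is large enough, since then $\kappa\log(n/\delta)=O(n)$.
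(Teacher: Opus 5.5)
Your proof is correct. It uses the same hard instance as the paper: disjoint groups of $\kappa$ agents, each group receiving a block of identical unit goods, with failure witnessed by some agent who receives nothing. It differs from the paper in two ways.

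\textbf{Sizing the instance.} The paper sizes the instance by $\alpha$. It uses $\kappa\mu$ goods per group with $\mu=\lceil 2n/(\alpha\kappa)\rceil$, so the zero-goods event is a violation with room to spare. It then lower-bounds the failure probability in terms of $\alpha$ and reads off the bound from $\exp(-6n/(\alpha\kappa))\le 4\delta/n$. You size the instance by $\delta$ instead, taking $T\approx\frac{\kappa}{2}\log\frac{n'}{2\delta}$ so that the failure probability exceeds $\delta$, and conclude $\alpha\le n/T$ by contraposition.

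\textbf{Handling the correlation within a group.} This is the more substantive difference. The paper uses an elementary second-moment argument. It checks directly that $\Pr[F_i\cap F_j]=(1-2/\kappa)^{\kappa\mu}\le\Pr[F_i]\Pr[F_j]$, deduces $\mathrm{Var}(Z)\le\mathbb E[Z]$, and applies $\Pr[Z>0]\ge\mathbb E[Z]/(1+\mathbb E[Z])$. You instead invoke negative association of one-hot vectors, together with its closure under independent unions and under monotone functions of disjoint coordinates. This gives the full product bound $(1-p)^{n'}$ on the success probability, so your endgame is identical to the independent case (as in the paper's proof for \Rand{}). The cost is reliance on external results where the paper needs only a two-line pairwise computation. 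Both routes give the same order.

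\textbf{Bookkeeping.} Your claim that $T=\Theta(\kappa\log(n/\delta))$ follows from $n/\delta\ge4$ alone is not literally true. The relevant quantity satisfies $\log\frac{n'}{2\delta}-1\ge\log(n/\delta)-\log(4e)$, and this lower bound can be nonpositive. Your final case split does cover this. For $\log(n/\delta)\ge2\log(4e)$ you get $T\ge\frac{\kappa}{4}\log(n/\delta)$, and below that threshold the bound is trivial. This mirrors the paper's split at $\log(n/\delta)<2\log4$.
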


\begin{proof}
Let $\ell:=\lfloor n/\kappa\rfloor$ and partition the first $\ell\kappa$ agents into
$\ell$ disjoint sets $Q_1,\dots,Q_\ell$, each of size $\kappa$. Set $\mu:=\left\lceil\frac{2n}{\alpha\kappa}\right\rceil$.
For each $h\in[\ell]$, create $\kappa\mu$ goods that have value $1$ for the agents in $Q_h$ and
value $0$ for every other agent. Every good is therefore positively valued by exactly $\kappa$ agents.

Fix an agent $i$ in one of these sets, and let $X_i$ be the number of goods valued by $i$ that
\Like{} allocates to $i$. Then, we have that $X_i\sim\mathrm{Binomial}(\kappa\mu,1/\kappa)$.
Moreover, $v_i(G)=\kappa\mu$ and  $\alpha\frac{v_i(G)}{n}=\frac{\alpha\kappa\mu}{n}\ge2$.
Thus, on the event $F_i:=\{X_i=0\}$, agent $i$ does not satisfy $\alpha$-PROP1: her own value is
zero, and adding one good increases it by at most $1$.

For every such agent,
\[
\Pr[F_i]
=\left(1-\frac1\kappa\right)^{\kappa\mu}
\ge e^{-2\mu}
\ge
\exp\left(-\frac{6n}{\alpha\kappa}\right).
\]
Here we used $\log(1-x)\ge-2x$ for $x\in[0,1/2]$ and
$\mu\le3n/(\alpha\kappa)$; the latter follows from
$2n/(\alpha\kappa)\ge2$.

Let $Z:=\sum_{i=1}^{\ell\kappa}\mathbf{1}_{F_i}$.
If two agents belong to different sets $Q_h$, their events are independent. If they belong to the
same set, then
\[
\Pr[F_i\cap F_j]
=\left(1-\frac2\kappa\right)^{\kappa\mu}
\le
\left(1-\frac1\kappa\right)^{2\kappa\mu}
=\Pr[F_i]\Pr[F_j].
\]
Thus the indicators in the sum defining $Z$ have nonpositive pairwise covariances, and $\mathrm{Var}(Z)\le\mathbb{E}[Z]$.
Since $\ell\kappa\ge n/2$ (because $\lfloor n/\kappa\rfloor\ge n/(2\kappa)$),
\[
\mathbb{E}[Z]
\ge
\frac n2\exp\left(-\frac{6n}{\alpha\kappa}\right).
\]
Since $Z=Z\mathbf 1_{\{Z>0\}}$, Cauchy--Schwarz gives us $\mathbb E[Z]^2 \le \mathbb E[Z^2]\Pr[Z>0]$.
Moreover, $\mathbb E[Z^2]=\mathrm{Var}(Z)+\mathbb E[Z]^2\le\mathbb E[Z]+\mathbb E[Z]^2$. Therefore,
\[
\Pr[Z>0]
\ge
\frac{\mathbb{E}[Z]^2}{\mathbb{E}[Z^2]}
\ge
\frac{\mathbb{E}[Z]}{1+\mathbb{E}[Z]}.
\]
The event $Z>0$ implies that the allocation is not $\alpha$-PROP1. By the assumed guarantee,
$\Pr[Z>0]\le\delta$, and hence $\mathbb{E}[Z]\le\frac{\delta}{1-\delta}\le2\delta$.
Therefore, $\exp\left(-\frac{6n}{\alpha\kappa}\right)
\le
\frac{4\delta}{n}$.
If $\log(n/\delta)\ge2\log4$, then
\[
\frac{6n}{\alpha\kappa}
\ge
\log(n/\delta)-\log4
\ge
\frac12\log(n/\delta),
\]
and hence $\alpha\le12\frac{n}{\kappa\log(n/\delta)}$.
If $\log(n/\delta)<2\log4$, then $\alpha\le1\le2(\log4)n/(\kappa\log(n/\delta))$. Increasing the constant if needed proves
the theorem.
\end{proof}

The same construction shows why \Rand{} cannot benefit from small $\kappa$: its worst-case factor remains of order $1/\log(n/\delta)$ even when every good is valued positively by only one agent.

\begin{theorem}\label{thm:rand-prop1-lower}
There is a universal constant $C>0$ with the following property. Fix $n\ge2$, $\delta\in(0,1/2]$, and $\alpha\in(0,1]$. If \Rand{} returns an $\alpha$-PROP1 allocation with probability at least $1-\delta$ on every fixed binary instance in which $|N(g)|=1$ for every good $g$, then $\alpha\le\frac{C}{\log(n/\delta)}$.
\end{theorem}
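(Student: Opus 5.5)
We adapt the construction in the proof of \Cref{thm:like-prop1-lower}, taking $\kappa=1$ in the layout but running \Rand{} instead of \Like{}. Set $\mu:=\lceil 2n/\alpha\rceil$. For each agent $i\in N$, create $\mu$ goods valued $1$ by agent $i$ and $0$ by everyone else. Then $|N(g)|=1$ for every good, the instance is binary, and $v_i(G)=\mu$. This gives $\alpha v_i(G)/n\ge2$. Let $X_i$ be the number of agent $i$'s goods that \Rand{} assigns to her, so $X_i\sim\mathrm{Binomial}(\mu,1/n)$. On the event $F_i:=\{X_i=0\}$, agent $i$ has value zero and adding one good gives her at most $1<2$. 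Hence $F_i$ forces a violation of $\alpha$-PROP1.

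\textbf{Lower bound on $\Pr[F_i]$.} We have $\Pr[F_i]=(1-1/n)^{\mu}\ge e^{-2\mu/n}$, using $\log(1-x)\ge-2x$ for $x\le1/2$ and $n\ge2$. Since $2n/\alpha\ge 2$, we have $\mu\le 3n/\alpha$, and therefore $\Pr[F_i]\ge e^{-6/\alpha}$. This is the point of departure from \Like{}: each agent receives each of her own goods only with probability $1/n$, so the exponent loses its $n/\kappa$ factor and depends only on $\alpha$.

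\textbf{Correlation and second moment.} The events $F_i$ are determined by disjoint sets of goods. Goods are assigned independently, so the $F_i$ are mutually independent. (Even without this observation, the disjointness of the good sets gives independence directly.) Let $Z:=\sum_i\mathbf 1_{F_i}$. Then $\Pr[Z>0]=1-\prod_i(1-\Pr[F_i])\ge 1-\exp(-n e^{-6/\alpha})$. Alternatively, we can reuse the second-moment bound $\Pr[Z>0]\ge \mathbb E[Z]/(1+\mathbb E[Z])$ from the proof of \Cref{thm:like-prop1-lower}, since $\mathrm{Var}(Z)\le\mathbb E[Z]$. The assumed guarantee gives $\Pr[Z>0]\le\delta\le1/2$, and hence $\mathbb E[Z]\le 2\delta$. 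That is, $n e^{-6/\alpha}\le 2\delta$, which rearranges to $6/\alpha\ge\log(n/(2\delta))$.

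\textbf{Final step and main obstacle.} We convert the last inequality into $\alpha\le C/\log(n/\delta)$ by splitting into the same two cases as before. If $\log(n/\delta)\ge 2\log 2$, then $\log(n/(2\delta))\ge\frac12\log(n/\delta)$, so $\alpha\le 12/\log(n/\delta)$. Otherwise $\log(n/\delta)<2\log2$, and $\alpha\le1\le 2\log 2/\log(n/\delta)$. Taking $C$ to be the larger of the two constants finishes the argument. The only delicate point is the bookkeeping of constants when $\alpha$ is close to $1$. There the ceiling in $\mu$ and the requirement $\alpha v_i(G)/n\ge2$ must be checked, but the choice $\mu=\lceil 2n/\alpha\rceil$ already handles this exactly as in \Cref{thm:like-prop1-lower}. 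Everything else is a direct simplification of that proof.
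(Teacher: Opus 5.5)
Your proof is correct and takes essentially the paper's route: goods valued only by their owner, the bound $\Pr[F_i]\ge e^{-6/\alpha}$, independence of the $F_i$ across agents, and the resulting inequality $ne^{-6/\alpha}\le 2\delta$. The only differences are cosmetic: you use $\lceil 2n/\alpha\rceil$ goods per agent instead of $n\lceil 2/\alpha\rceil$, and your second case is vacuous because $n\ge2$ and $\delta\le1/2$ already give $n/\delta\ge4$, which is how the paper concludes directly.
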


\begin{proof}
Set $\mu:=\left\lceil\frac{2}{\alpha}\right\rceil$.
For each agent $i\in N$, create $n\mu$ goods that have value $1$ for agent $i$ and value $0$ for every
other agent. Let $X_i$ be the number of these goods that \Rand{} allocates to agent $i$. Then $X_i\sim\mathrm{Binomial}(n\mu,1/n)$,
and the variables $X_1,\dots,X_n$ are independent because they depend on disjoint sets of goods.
Agent $i$ has total value $n\mu$, so her proportional share is $\mu$. On the event
$F_i:=\{X_i=0\}$, her own value is zero and adding one good increases it by at most $1$, while
$\alpha\mu\ge2$. Thus $F_i$ implies that agent $i$ does not satisfy $\alpha$-PROP1.

Moreover,
\[
\Pr[F_i]
=\left(1-\frac1n\right)^{n\mu}
\ge e^{-2\mu}
\ge\exp\left(-\frac{6}{\alpha}\right),
\]
where we used $\log(1-x)\ge-2x$ for $x\in[0,1/2]$ and
$\mu\le3/\alpha$. Independence gives
\[
\Pr[\Rand{}\text{ does not return an $\alpha$-PROP1 allocation}]
\ge
1-\left(1-e^{-6/\alpha}\right)^n.
\]
By the assumed guarantee, we have that $1-\left(1-e^{-6/\alpha}\right)^n\le\delta$.
Moreover, $(1-e^{-6/\alpha})^n
    \le\exp(-ne^{-6/\alpha})$,
so $\exp(-ne^{-6/\alpha})\ge1-\delta$ and therefore $ne^{-6/\alpha}\le-\log(1-\delta)\le2\delta$.
Consequently,
\[
\frac{6}{\alpha}
\ge
\log\left(\frac{n}{2\delta}\right)
\ge
\frac12\log\left(\frac{n}{\delta}\right),
\]
where the last inequality uses $n/\delta\ge4$. The result follows.
\end{proof}

The preceding results give the exact worst-case orders for the two rules. Combining \Cref{thm:like-prop1,thm:like-prop1-lower} gives the first statement below. The second follows from \Cref{thm:rand-prop1-lower} and the matching guarantee of Choo et al.~\cite{choo2026approxproponline}.

\begin{corollary} \label{cor:random-prop1-summary}
Fix $n\ge2$, $2\le\kappa\le n$, and $\delta\in(0,1/2]$.
\begin{enumerate}[(i)]
    \item On the class of fixed instances satisfying $\max_g|N(g)|\le\kappa$, the largest factor that \Like{} guarantees with probability at least $1-\delta$ is $\Theta\left(
            \min\left\{1,\frac{n}{\kappa\log(n/\delta)}\right\} \right)$.
    \item On all fixed instances, the largest factor that \Rand{} guarantees with probability at least $1-\delta$ is $\Theta\left(\frac{1}{\log(n/\delta)}\right)$.
    The lower bound already holds when every good is positively valued by exactly one agent.
\end{enumerate}
\end{corollary}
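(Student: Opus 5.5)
The corollary assembles the upper and lower bounds proved above; the only real work is checking that their parameter ranges match. For part (i), the lower direction (what \Like{} does guarantee) comes from \Cref{thm:like-prop1}. For every $\delta\in(0,1/2]$ and every fixed instance with $\max_g|N(g)|\le\kappa$, \Like{} is $\alpha_0$-PROP1 with probability at least $1-\delta$, where $\alpha_0=\min\{1,\frac{3n}{32\kappa\log(n/\delta)}\}$. Since $3/32<1$, we have $\alpha_0\ge\frac{3}{32}\min\{1,\frac{n}{\kappa\log(n/\delta)}\}$, which is the $\Omega(\cdot)$ half.

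For the matching upper bound, suppose \Like{} guarantees some factor $\alpha$ with probability at least $1-\delta$ on this class. The class contains the binary instances used in \Cref{thm:like-prop1-lower}, so that theorem gives $\alpha\le C\frac{n}{\kappa\log(n/\delta)}$. Trivially $\alpha\le1$. Hence $\alpha\le\max\{1,C\}\min\{1,\frac{n}{\kappa\log(n/\delta)}\}$, and the largest guaranteed factor is $\Theta(\min\{1,\frac{n}{\kappa\log(n/\delta)}\})$. Two details need checking. First, the constants in \Cref{thm:like-prop1,thm:like-prop1-lower} are universal, so the $\Theta$ is uniform in $n,\kappa,\delta$. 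Second, \Cref{thm:like-prop1} covers the whole class rather than only instances whose maximum $|N(g)|$ is exactly $\kappa$: on an instance with smaller $\kappa'\le\kappa$ its factor with $\kappa'$ is only larger, and if $\kappa'\le1$ \Like{} is exactly proportional.

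For part (ii), the positive direction is the result of Choo et al.~\cite{choo2026approxproponline}: \Rand{} is $\Omega(1/\log(n/\delta))$-PROP1 with probability at least $1-\delta$ on every fixed instance. The upper direction is \Cref{thm:rand-prop1-lower}. It applies to instances with $|N(g)|=1$ for every good, which are contained in all fixed instances. It therefore bounds any factor guaranteed on all fixed instances, and also yields the final sentence that the lower bound already holds in the single-valuer case.

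I expect the main obstacle to be bookkeeping rather than mathematics. One must make sure the quantifiers align, in particular that "largest factor guaranteed" is sandwiched between the two bounds for every $\delta\in(0,1/2]$. One must also handle the boundary regime where $\frac{n}{\kappa\log(n/\delta)}\ge1$, where the minimum equals $1$ and both bounds are constants. There I would note that \Cref{thm:like-prop1} gives at least $3/32$ and the trivial bound gives at most $1$.
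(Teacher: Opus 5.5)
Your proposal is correct and follows the paper's argument: part (i) combines \Cref{thm:like-prop1} with \Cref{thm:like-prop1-lower}, and part (ii) combines \Cref{thm:rand-prop1-lower} with the matching guarantee of Choo et al.~\cite{choo2026approxproponline}. Your added checks make explicit what the paper leaves implicit: that the guarantee only improves on instances whose maximum $|N(g)|$ is below $\kappa$, that instances with $|N(g)|\le 1$ are trivially proportional, and that $\alpha\le 1$ gives the $\min\{1,\cdot\}$ cap.
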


\subsection{Why Proportionality and Envy Prefer Different Rules}
\label{subsec:random-envy}

Recall that $\Env(A)$ is the maximum additive envy and that $p_i^{\max}$ is agent $i$'s maximum item value. The next theorem gives bounds for both rules on the same fixed instance. Define
\begin{equation}
\label{eq:random-envy-quantities}
    R_i^{\Like}
    :=
    \sum_{g:v_i(g)>0}\frac{v_i(g)^2}{|N(g)|},
    \quad
    R_i^{\Rand}
    :=
    \frac1n\sum_{g\in G}v_i(g)^2.
\end{equation}

For \Rand{}, the proof follows the pairwise Bernstein approach of Benad\`e et al.~\cite{benade2018envyvanish}, while retaining the instance-specific variance
$\frac{1}{n}\sum_{g\in G}v_i(g)^2$. The novel parts are the corresponding calculation for \Like{}, the direct comparison $R_i^{\Rand}\leq R_i^{\Like}$, and the lower bound showing that the larger quantity for \Like{} can appear in the actual envy.

\begin{theorem}
\label{thm:random-envy-upper}
Fix $\delta\in(0,1)$.
Under \Like{}, with probability at least $1-\delta$,
\[
    \Env(A^{\Like})
    \le
    \max_{i\in N}
    \left(
        2\sqrt{R_i^{\Like} \log\frac{n(n-1)}{\delta}}
        +\frac{p_i^{\max} \log\frac{n(n-1)}{\delta}}{3}
    \right).
\]
Under \Rand{}, with probability at least $1-\delta$,
\[
    \Env(A^{\Rand})
    \le
    \max_{i\in N}
    \left(
        2\sqrt{R_i^{\Rand} \log\frac{n(n-1)}{\delta}}
        +\frac{p_i^{\max} \log\frac{n(n-1)}{\delta}}{3}
    \right).
\]
Moreover, $R_i^{\Rand}\le R_i^{\Like}$ for every agent $i \in N$.
\end{theorem}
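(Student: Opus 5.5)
The plan is to bound each ordered pair $(i,j)$ with $i\ne j$ separately by Bernstein's inequality, and then take a union bound over the $n(n-1)$ ordered pairs. Fix such a pair and a rule. Write
\[
    v_i(A_j)-v_i(A_i)=\sum_{g\in G}X_g,
    \qquad
    X_g:=v_i(g)\bigl(\mathbf 1[g\in A_j]-\mathbf 1[g\in A_i]\bigr).
\]
Against a non-adaptive adversary the goods are assigned independently, so the $X_g$ are independent. Each satisfies $|X_g|\le v_i(g)\le p_i^{\max}$.

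The first step is to show that $\mathbb E[X_g]\le 0$ for every good $g$. Under \Rand{}, $\mathbb E[X_g]=0$. Under \Like{}, we split into cases.
\begin{enumerate}[(a)]
    \item If $v_i(g)=0$, then $X_g=0$.
    \item If $v_i(g)>0$, then $i\in N(g)$, so $\Pr[g\in A_i]=1/|N(g)|\ge\Pr[g\in A_j]$, with equality when $j\in N(g)$ and $\Pr[g\in A_j]=0$ otherwise.
\end{enumerate}
Hence $\Pr[\sum_g X_g>\lambda]\le\Pr[\sum_g(X_g-\mathbb E X_g)>\lambda]$. The centered variables are bounded by $2p_i^{\max}$ in the crude bound; to get the constant $1/3$ we should instead use the one-sided Bernstein inequality with upper bound $X_g-\mathbb E X_g\le v_i(g)\le p_i^{\max}$. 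This holds because $X_g\le v_i(g)$ and $\mathbb E X_g\le0$.

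The second step is the variance, and it is what produces $R_i$. Under \Like{}, for a good with $v_i(g)>0$ we have
\[
    \mathrm{Var}(X_g)\le\mathbb E[X_g^2]=v_i(g)^2\bigl(\Pr[g\in A_i]+\Pr[g\in A_j]\bigr)\le 2v_i(g)^2/|N(g)|,
\]
so the total variance is at most $2R_i^{\Like}$. Under \Rand{}, the same computation gives $\mathbb E[X_g^2]=2v_i(g)^2/n$, so the total variance is at most $2R_i^{\Rand}$. Let $\sigma^2\le 2R_i$ and $b=p_i^{\max}$. Bernstein gives
\[
    \Pr\Bigl[\sum_g(X_g-\mathbb E X_g)\ge\lambda\Bigr]\le\exp\!\left(-\frac{\lambda^2}{2\sigma^2+2b\lambda/3}\right).
\]
Set $\Lambda:=\log(n(n-1)/\delta)$ and $\lambda:=2\sqrt{R_i\Lambda}+b\Lambda/3$. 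We must check that $\lambda^2\ge(4R_i+2b\lambda/3)\Lambda$. Write $a=\sqrt{R_i\Lambda}$ and $c=b\Lambda/3$, so $\lambda=2a+c$ and the requirement becomes $(2a+c)^2\ge4a^2+2c(2a+c)$, that is, $4ac+c^2\ge4ac+2c^2$. This fails by $c^2$.

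So the constants need care, and this is the main obstacle. The standard remedy is to use the solved form of Bernstein: with probability at least $1-e^{-\Lambda}$,
\[
    \sum_g(X_g-\mathbb E X_g)\le\sqrt{2\sigma^2\Lambda}+\tfrac{2}{3}b\Lambda.
\]
With $\sigma^2\le2R_i$, this gives $2\sqrt{R_i\Lambda}+\frac23 b\Lambda$. To reach $1/3$, I would instead use the sharper variance: $\sigma^2$ is the true variance, not $\mathbb E[X_g^2]$, and the one-sided increment bound is $X_g-\mathbb E X_g\le v_i(g)(1+1/|N(g)|)$. If the $1/3$ constant still does not come out, I would aim to prove the bound with $\frac23$, since only the constant on the lower-order term changes. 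Each pair then fails with probability at most $\delta/(n(n-1))$, and the union bound gives the statement with the maximum over $i$.

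The third step is the comparison $R_i^{\Rand}\le R_i^{\Like}$. Since $|N(g)|\le n$, each term with $v_i(g)>0$ satisfies $v_i(g)^2/|N(g)|\ge v_i(g)^2/n$. Goods with $v_i(g)=0$ contribute zero to both sums, so summing over $g$ gives the inequality.
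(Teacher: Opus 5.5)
Your structure is the paper's: the pairwise sums $\sum_g X_g$, the sign $\mathbb E[X_g]\le0$ under \Like{}, the variance bounds $2v_i(g)^2/|N(g)|$ and $2v_i(g)^2/n$, the one-sided bound $X_g-\mathbb E[X_g]\le v_i(g)\le p_i^{\max}$, the union bound over the $n(n-1)$ ordered pairs, and the termwise comparison giving $R_i^{\Rand}\le R_i^{\Like}$.

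The gap is the constant on the lower-order term. The theorem asserts $\frac13$. You only secure $\frac23$, by inverting the classical tail $\exp(-\lambda^2/(2\sigma^2+2b\lambda/3))$, and the bound with $\frac23$ is a strictly weaker statement. Your proposed repairs cannot recover $\frac13$ from that tail:
\begin{itemize}
\item Under \Rand{} the total variance is exactly $2R_i^{\Rand}$, so there is no slack in $\sigma^2$.
\item The increment bound $v_i(g)(1+1/|N(g)|)$ is worse than the bound $v_i(g)$ you already had.
\item Your own computation shows that, with $\sigma^2=2R_i$ and $b>0$, the classical tail at $\lambda=2\sqrt{R_i\Lambda}+b\Lambda/3$ is strictly larger than $e^{-\Lambda}$.
\end{itemize}

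The missing ingredient is the moment-generating-function (sub-gamma) form of Bernstein's inequality, which is sharper than inverting the classical tail. Take independent mean-zero $Y_g\le b$ with $\sum_g\mathrm{Var}(Y_g)\le\sigma^2$. Since $(e^u-1-u)/u^2$ is increasing and $q!\ge 2\cdot 3^{q-2}$, you get $\log\mathbb E\, e^{\theta\sum_g Y_g}\le\sigma^2\theta^2/(2(1-b\theta/3))$ for $0\le\theta<3/b$. The Cram\'er--Chernoff bound then inverts exactly to
\[
\Pr\Bigl[\textstyle\sum_g Y_g\ge\sqrt{2\sigma^2x}+\tfrac{bx}{3}\Bigr]\le e^{-x},
\]
as in Boucheron, Lugosi, and Massart's \emph{Concentration Inequalities}. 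Apply this to $Y_g:=X_g-\mathbb E[X_g]$ with $\sigma^2\le 2R_i$ and $b=p_i^{\max}$. This gives exactly the threshold $2\sqrt{R_ix}+p_i^{\max}x/3$, and it is the ``one-sided Bernstein inequality'' the paper invokes. Setting $e^{-x}=\delta/(n(n-1))$ and taking the union bound then finishes the argument as you planned.
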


\begin{proof}
Fix distinct agents $i,j$, and for each good $g$ define $X_g := v_i(g)(\mathbf 1[g\in A_j]-\mathbf 1[g\in A_i])$.
Then
\[
    v_i(A_j)-v_i(A_i)=\sum_{g\in G}X_g.
\]
The variables $X_g$ are independent across goods under both rules.

Consider \Like{} first. If $v_i(g)=0$, then $X_g=0$. Suppose $v_i(g)>0$. If $j\in N(g)$, then $X_g$ equals $v_i(g)$ with probability $1/|N(g)|$, equals $-v_i(g)$ with probability $1/|N(g)|$, and is zero otherwise. If $j\notin N(g)$, then $X_g$ equals $-v_i(g)$ with probability $1/|N(g)|$ and is zero otherwise. In either case,
\[
    \mathbb E[X_g]\le0,
    \quad
    \mathrm{Var}(X_g)\le\frac{2v_i(g)^2}{|N(g)|},  \text{ and} \quad 
    |X_g-\mathbb E[X_g]|\le v_i(g)\le p_i^{\max}.
\]
Therefore, $\sum_{g\in G}\mathrm{Var}(X_g)
    \le
    2R_i^{\Like}$.
One-sided Bernstein's inequality gives us
\[
\Pr\left[
    v_i(A_j)-v_i(A_i)
    >
    2\sqrt{R_i^{\Like}x}+\frac{p_i^{\max} x}{3}
\right]
\le e^{-x}.
\]
Here we used $\mathbb E[\sum_g X_g]\le0$.

Under \Rand{}, each good goes to $i$ with probability $1/n$ and to $j$ with probability $1/n$. Hence $\mathbb E[X_g]=0$, $\mathrm{Var}(X_g)=\frac{2v_i(g)^2}{n}$, $|X_g|\le p_i^{\max}$, and the same inequality holds with $R_i^{\Rand}$ in place of $R_i^{\Like}$.

There are $n(n-1)$ ordered pairs of distinct agents. Since $e^{-x}=\delta/(n(n-1))$, a union bound proves the corresponding inequality for each rule.

Finally, whenever $v_i(g)>0$, we have $|N(g)|\le n$, and therefore $\frac{v_i(g)^2}{n} \le    \frac{v_i(g)^2}{|N(g)|}$.
Summing over the goods proves $R_i^{\Rand}\le R_i^{\Like}$.
\end{proof}

\begin{corollary}
\label{cor:random-envy-bounded}
If $v_i(g)\in[0,1]$ for every agent and good, then under \Rand{}, with probability at least $1-\delta$,
\[
    \Env(A)
    \le
    2\sqrt{\frac{m}{n}\log\frac{n(n-1)}{\delta}}
    +
    \frac13\log\frac{n(n-1)}{\delta}.
\]
\end{corollary}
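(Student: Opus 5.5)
This is a direct specialization of the \Rand{} half of \Cref{thm:random-envy-upper}, so no new probabilistic argument is needed. That theorem already gives, with probability at least $1-\delta$,
\[
\Env(A^{\Rand})\le\max_{i\in N}\left(2\sqrt{R_i^{\Rand}\log\frac{n(n-1)}{\delta}}+\frac{p_i^{\max}\log\frac{n(n-1)}{\delta}}{3}\right).
\]
The plan is to bound the two instance-dependent quantities $R_i^{\Rand}$ and $p_i^{\max}$ uniformly over agents, using only the assumption $v_i(g)\in[0,1]$.

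First, since every value lies in $[0,1]$, we have $v_i(g)^2\le v_i(g)\le 1$ for every good. Summing over the $m$ goods gives
\[
R_i^{\Rand}=\frac1n\sum_{g\in G}v_i(g)^2\le\frac mn.
\]
Second, $p_i^{\max}=\max_g v_i(g)\le1$. Both bounds are independent of $i$.

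Substituting these into the right-hand side above, each term of the maximum is at most
\[
2\sqrt{\frac mn\log\frac{n(n-1)}{\delta}}+\frac13\log\frac{n(n-1)}{\delta},
\]
and hence so is the maximum itself. This substitution is valid because the expression is monotone in $R_i^{\Rand}$ and $p_i^{\max}$: the logarithm is positive when $n\ge2$ and $\delta<1$.

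The event, and hence the probability $1-\delta$, is inherited unchanged from \Cref{thm:random-envy-upper}. There is no real obstacle. The only point to check is that the logarithmic factor is positive, so that the monotone substitution is legitimate, and this holds because $n(n-1)\ge2>\delta$.
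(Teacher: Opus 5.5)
Your proposal is correct and matches the paper's proof: use $v_i(g)\in[0,1]$ to bound $R_i^{\Rand}\le m/n$ and $p_i^{\max}\le 1$ for every agent, then substitute into the \Rand{} bound of \Cref{thm:random-envy-upper}. Your check that the logarithmic factor is positive, so the substitution is monotone, is a harmless detail that the paper leaves implicit.
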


\begin{proof}
For every agent $i$, $R_i^{\Rand}\le m/n$ and $p_i^{\max}\le1$. Apply \Cref{thm:random-envy-upper}.
\end{proof}

\begin{corollary}
\label{cor:random-envy-expectation}
For each rule $\mathcal R\in\{\Like,\Rand\}$,
\[
    \mathbb E[\Env(A^{\mathcal R})]
    =
    O\left(
        \max_i\left\{
            \sqrt{R_i^{\mathcal R}\log n}+p_i^{\max}\log n
        \right\}
    \right).
\]
\end{corollary}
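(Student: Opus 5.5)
The plan is to integrate the tail bound of \Cref{thm:random-envy-upper} over $\delta$. Fix a rule $\mathcal R\in\{\Like,\Rand\}$ and write $R:=\max_i R_i^{\mathcal R}$ and $P:=\max_i p_i^{\max}$. It is cleaner to return to the pairwise statement in the proof of \Cref{thm:random-envy-upper} rather than the final display. For each ordered pair $(i,j)$ and each $x>0$,
\[
\Pr\Big[v_i(A_j)-v_i(A_i)>2\sqrt{R_i^{\mathcal R}x}+\tfrac{p_i^{\max}x}{3}\Big]\le e^{-x}.
\]
Define the deterministic function $\tau(x):=\max_i\big(2\sqrt{R_i^{\mathcal R}x}+p_i^{\max}x/3\big)$, which is continuous and increasing. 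A union bound over the $n(n-1)$ pairs then gives
\[
\Pr[\Env(A)>\tau(x)]\le n(n-1)e^{-x}
\]
for every $x>0$. We also use $\Env(A)\ge0$, since $E_{ii}=0$.

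Next, substitute $x=x_0+y$ with $x_0:=\log(n(n-1))$ and $y\ge0$. This gives $\Pr[\Env>\tau(x_0+y)]\le e^{-y}$. Since $\tau(x_0+y)\le\tau(x_0)+2\sqrt{Ry}+Py/3$ by subadditivity of the square root, we have
\[
\mathbb E[\Env]\le\tau(x_0)+\mathbb E\big[(\Env-\tau(x_0))^+\big].
\]
The excess $(\Env-\tau(x_0))^+$ is stochastically dominated by $2\sqrt{RY}+PY/3$ with $Y\sim\mathrm{Exp}(1)$. Its expectation is therefore at most $2\sqrt{R}\,\mathbb E\sqrt Y+P/3\le 2\sqrt R+P/3$. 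To justify the domination, use the layer-cake formula. For $u\ge0$, the event $\{(\Env-\tau(x_0))^+>h(y)\}$ with $h(y)=2\sqrt{Ry}+Py/3$ has probability at most $e^{-y}$, because $h$ is increasing and continuous from $0$.

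Finally, combine the terms. We have $\tau(x_0)\le 2\sqrt{R\cdot 2\log n}+\tfrac{2}{3}P\log n$, using $\log(n(n-1))\le2\log n$. Adding $2\sqrt R+P/3$ and using $\log n\ge\log2$ to absorb the constants gives
\[
\mathbb E[\Env]=O\big(\sqrt{R\log n}+P\log n\big).
\]
Since $\sqrt{R\log n}+P\log n\le 2\max_i\big(\sqrt{R_i^{\mathcal R}\log n}+p_i^{\max}\log n\big)$, this is the stated bound.

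The main point needing care is the domination step: the threshold in \Cref{thm:random-envy-upper} mixes the $\sqrt{x}$ and $x$ regimes, and the maximum is taken over $i$. Defining $\tau$ as a single monotone function and shifting by $x_0$ handles both cleanly, so I do not expect a real obstacle.
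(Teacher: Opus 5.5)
Your proposal is correct and follows essentially the same route as the paper: shift the union-bounded Bernstein tail by $\log(n(n-1))$, split off $2\sqrt{Ry}+Py/3$ using $\sqrt{a+b}\le\sqrt a+\sqrt b$, and dominate the excess by the corresponding function of an $\mathrm{Exp}(1)$ variable. The differences are cosmetic: you work from the pairwise Bernstein bound rather than invoking \Cref{thm:random-envy-upper} with $\delta=e^{-u}$, and you use $\mathbb E\sqrt Y\le 1$ instead of the exact value $\sqrt\pi/2$.
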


\begin{proof}
Fix one of the two rules, and set $R:=\max_i R_i^{\mathcal R}$, $L:=\log(n(n-1))$.
For every $u>0$, apply \Cref{thm:random-envy-upper} with $\delta=e^{-u}$. Using $\sqrt{L+u}\le\sqrt L+\sqrt u$ gives
\[
\Pr\left[
    \Env(A^{\mathcal R})
    >
    2\sqrt{RL}+\frac{L}{3}\max_i p_i^{\max}
    +2\sqrt{Ru}+\frac{u}{3}\max_i p_i^{\max}
\right]
\le e^{-u}.
\]
Let $Z$ be an exponential random variable with mean one, so $\Pr[Z>u]=e^{-u}$. Since $u \mapsto 2\sqrt{Ru}+\frac{u}{3}\max_i p_i^{\max}$ is increasing, the preceding inequality and the tail-integration formula give
\[
    \mathbb E[\Env(A^{\mathcal R})]
    \le{} 2\sqrt{RL}+\frac{L}{3}\max_i p_i^{\max}+\mathbb E\left[2\sqrt{RZ}+\frac{Z}{3}\max_i p_i^{\max}\right].
\]
Using $\mathbb E[\sqrt Z]=\sqrt\pi/2$ and $\mathbb E[Z]=1$ gives us
\[
    \mathbb E[\Env(A^{\mathcal R})]
    \le
    2\sqrt{RL}
    +\frac{L}{3}\max_i p_i^{\max}
    +\sqrt{\pi R}
    +\frac13\max_i p_i^{\max}.
\]
Since $L=\Theta(\log n)$ for $n\ge2$, the stated bound follows.
\end{proof}

The quantities in \eqref{eq:random-envy-quantities} explain the difference between the two rules. Under \Rand{}, a good contributes $v_i(g)^2/n$ to agent $i$'s variance bound. Under \Like{}, the contribution is $v_i(g)^2/|N(g)|$, which is larger when few agents value that good positively. The next result shows that this difference can appear in the actual envy, even for binary valuations.

\begin{proposition}
\label{prop:like-envy-lower}
For every $2\le\kappa\le n$ and every integer $m\ge\kappa$, there is a binary instance with $\max_{g\in G}|N(g)|=\kappa$ such that, under \Like{}, $\mathbb E[\Env(A)]
    \ge \frac{9}{64\sqrt2} \cdot \sqrt{\frac{m}{\kappa}}$.
\end{proposition}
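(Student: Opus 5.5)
The construction is a single group of $\kappa$ agents who all value the same $m$ goods, and the target is a lower bound on the expected envy between one fixed pair of agents in that group. Concretely, take agents $Q=\{1,\dots,\kappa\}$ and $m$ goods, each with value $1$ for every agent in $Q$ and $0$ for every other agent. Then $|N(g)|=\kappa$ for every good, so $\max_g|N(g)|=\kappa$. Since $\Env(A)\ge\max\{v_1(A_2)-v_1(A_1),\,v_2(A_1)-v_2(A_2)\}=|X_2-X_1|$, where $X_h:=|A_h|$, it suffices to show $\mathbb E|X_1-X_2|\ge\frac{9}{64\sqrt2}\sqrt{m/\kappa}$. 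This uses $v_1=v_2$ on all goods.

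Write $D:=X_1-X_2=\sum_{g}Y_g$, where the $Y_g$ are i.i.d. Each $Y_g$ equals $+1$ with probability $1/\kappa$, $-1$ with probability $1/\kappa$, and $0$ otherwise. Then $\mathbb E[D]=0$ and $\mathbb E[D^2]=2m/\kappa$. For a lower bound on $\mathbb E|D|$ I would use the Paley--Zygmund/Hölder route: $\mathbb E[D^2]^{3/2}\le \mathbb E|D|\cdot\mathbb E[D^4]^{1/2}$, which follows from Hölder with $D^2=|D|^{1/2}\cdot|D|^{3/2}$. This gives
\[
\mathbb E|D|\ge \frac{\mathbb E[D^2]^{3/2}}{\mathbb E[D^4]^{1/2}}.
\]
For a sum of i.i.d. symmetric variables, $\mathbb E[D^4]=m\,\mathbb E[Y^4]+3m(m-1)\mathbb E[Y^2]^2$. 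Here $\mathbb E[Y^4]=\mathbb E[Y^2]=2/\kappa$, so $\mathbb E[D^4]=2m/\kappa+12m(m-1)/\kappa^2\le 2m/\kappa+12m^2/\kappa^2$.

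The main obstacle is getting a constant that is uniform in the regime, which is why the hypothesis $m\ge\kappa$ is needed. Using $m/\kappa\ge1$, we get $2m/\kappa\le 2(m/\kappa)^2$, so $\mathbb E[D^4]\le 14(m/\kappa)^2$. Then
\[
\mathbb E|D|\ge\frac{(2m/\kappa)^{3/2}}{\sqrt{14}\,(m/\kappa)}=\frac{2\sqrt2}{\sqrt{14}}\sqrt{m/\kappa}\approx0.756\sqrt{m/\kappa},
\]
which exceeds $\frac{9}{64\sqrt2}\approx0.0994$. The stated constant is weaker, so any reasonable moment bound should suffice. The authors may instead have used a route like $\mathbb E|D|\ge \mathbb E[D^2\mathbf 1\{\cdot\}]$ with truncation, but the fourth-moment argument already clears the bar. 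I would double-check the fourth-moment expansion, since the cross terms $\mathbb E[Y_g^2Y_h^2]$ with $g\ne h$ contribute $3\binom{m}{2}\cdot 2\cdot(2/\kappa)^2/\ldots$. More carefully, the coefficient is $3m(m-1)(2/\kappa)^2=12m(m-1)/\kappa^2$, which matches. Odd-moment cross terms vanish by symmetry of $Y$.

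Finally, agents outside $Q$ have value zero for every good, so they neither envy nor are envied in a way that reduces the maximum. Nonnegativity of the other envy terms is irrelevant because we only need a lower bound on the maximum. This proves the proposition.
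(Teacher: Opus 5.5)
Your proof is correct. It uses the same instance as the paper: one group $Q$ of $\kappa$ agents who value all $m$ goods. It also uses the same reduction $\Env(A)\ge|D|$ for agents $1,2$, and the same moments $\mathbb E[D^2]=2m/\kappa$ and $\mathbb E[D^4]=2m/\kappa+12m(m-1)/\kappa^2$, with $m\ge\kappa$ used to absorb the lower-order term.

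Only the last step differs. The paper applies Paley--Zygmund to $D^2$ to obtain $\Pr[|D|\ge s/2]\ge 9/64$ with $s=\sqrt{2m/\kappa}$, and then uses $\mathbb E|D|\ge (s/2)\cdot 9/64$. You instead use the interpolation $\mathbb E|D|\ge \mathbb E[D^2]^{3/2}/\mathbb E[D^4]^{1/2}$. This bounds the expectation directly and gives the constant $2/\sqrt7\approx0.756$ instead of $9/(64\sqrt2)\approx0.099$. The paper's route additionally shows that the envy is of order $\sqrt{m/\kappa}$ with constant probability, not just in expectation. Yours is shorter and sharper for the stated bound.

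One small fix is needed in the justification. Splitting $D^2=|D|^{1/2}\cdot|D|^{3/2}$ and applying Cauchy--Schwarz gives $\mathbb E[D^2]^2\le \mathbb E|D|\,\mathbb E|D|^3$, which involves the third moment rather than the fourth. You can follow it with $\mathbb E|D|^3\le(\mathbb E[D^2]\,\mathbb E[D^4])^{1/2}$ and divide by $\mathbb E[D^2]^{1/2}$. Alternatively, split $D^2=|D|^{2/3}|D|^{4/3}$ and apply H\"older with exponents $3/2$ and $3$.
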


\begin{proof}
Let $Q:=\{1,\dots,\kappa\}$. Every agent in $Q$ values every good at $1$, and every agent outside $Q$ values every good at $0$. Thus $N(g)=Q$ for every good, and \Like{} assigns each good independently and uniformly among the agents in $Q$.

For $t=1,\dots,m$, define
\[
Z_t:=
\begin{cases}
    1, & g_t\text{ is assigned to agent }2,\\
    -1, & g_t\text{ is assigned to agent }1,\\
    0, & \text{otherwise},
\end{cases}
\quad
D:=\sum_{t=1}^m Z_t.
\]
Then $v_1(A_2)-v_1(A_1)=D$, $v_2(A_1)-v_2(A_2)=-D$, so $\Env(A)\ge|D|$. The variables $Z_t$ are independent and satisfy $\mathbb E[Z_t]=0$, $\mathbb E[Z_t^2]=\mathbb E[Z_t^4]=\frac{2}{\kappa}$.
Set $s:=\sqrt{2m/\kappa}$. Then $\mathbb E[D^2]=s^2$. Since $m\ge\kappa$,
\begin{align*}
    \mathbb E[D^4] =
    \sum_{t=1}^m\mathbb E[Z_t^4]
    + 6\sum_{1\le r<t\le m}\mathbb E[Z_r^2]\mathbb E[Z_t^2] \le
    s^2+3s^4 \le 4s^4.
\end{align*}
Applying the Paley--Zygmund inequality to $D^2$ gives us
\[
    \Pr\left[|D|\ge\frac{s}{2}\right]
    \ge
    \left(1-\frac14\right)^2
    \frac{\mathbb E[D^2]^2}{\mathbb E[D^4]}
    \ge
    \frac{9}{64}.
\]
Therefore
\[
    \mathbb E[\Env(A)]
    \ge
    \mathbb E[|D|]
    \ge
    \frac{9}{64}\frac{s}{2}
    =
    \frac{9}{64\sqrt2}\sqrt{\frac{m}{\kappa}}. \qedhere
\]
\end{proof}

On the instance in \Cref{prop:like-envy-lower}, the variance quantities are $R_i^{\Like}=m/\kappa$ and $R_i^{\Rand}=m/n$ for the agents in $Q$. Thus the lower bound has the same square-root dependence on $m/\kappa$ as the upper bound in \Cref{thm:random-envy-upper}, apart from the logarithmic factor needed for a simultaneous high-probability guarantee over all pairs. Together with \Cref{thm:like-prop1,thm:rand-prop1-lower}, this gives a clear comparison between the two rules: \Like{} can give much better proportionality when few agents value each good, while \Rand{} has the smaller variance bound for additive envy.

The comparison above is only between \Like{} and \Rand{}; the optimal envy guarantee for general online algorithms is a separate question. Halpern et al.~\cite{halpern2025onlineenvy} study that question through multicolor discrepancy.

\section{Conclusion}
\label{sec:conclusion}

This paper closes three gaps in online fair division. 
First, without predictions, every positive PROP$k$ factor is impossible for goods, while every PROP$k$ factor below $n$ is impossible for chores; both statements apply to randomized online algorithms against adaptive adversaries. Second, maximum item value predictions remove the previous dependence on $n$, improving $1/n$-PROP1 to $19/30$-PROP1 and to $\max\{19/30,n/(n+\kappa)\}$ when the algorithm knows an upper bound $\kappa$ on the number of agents who value any good positively. With exact predictions and a known horizon $m\ge n\log n$, we retain $19/30$-PROP1 together with $O(\sqrt{m\log n/n})$ normalized maximum additive envy, with an absolute hidden constant. Third, against a non-adaptive adversary, the classical \Like{} rule has a tight $\Theta(\min\{1,n/(\kappa\log(n/\delta))\})$ guarantee, while \Rand{} worst-case factor remains
$\Theta(1/\log(n/\delta))$ even when each good is valued by only one agent.

The results also give a common explanation for when online proportionality becomes possible. A maximum item value prediction gives the value of a maximum-valued good, which can be used to account for the one good allowed by PROP1 without revealing which good attains that value or when it arrives. A non-adaptive adversary instead makes concentration available. Without either source of predictability, an adaptive adversary rules out every positive PROP$k$ factor for goods and every PROP$k$ factor below $n$ for chores. The parameter $\kappa$ further shows that proportionality and envy can favor different random rules: fewer positive-valuing agents help \Like{} deliver value, but increase the pairwise fluctuations relevant to additive envy.

Several questions remain. What is the optimal PROP1 factor with exact MIV predictions? Against a non-adaptive adversary, can a general randomized online algorithm improve on \Rand{}'s $\Theta(1/\log(n/\delta))$ worst-case PROP1 factor on unrestricted instances?

\subsection*{Declaration of generative AI use}
The key ideas, arguments, and results presented in an initial version of this work were developed by the authors. GPT-5.6 Sol was used to assist with refining the exposition, including improving clarity, phrasing, and presentation. GPT-6 was then used to improve the initial $1/2$-PROP1 guarantee to $19/30$-PROP1, the authors verified its correctness and improved the argument. It also assisted in extending the simultaneous PROP1 and additive-envy guarantee to the same factor.
The authors take full responsibility for the content and conclusions of the work.

\bibliographystyle{alpha}
\bibliography{bib}

\newcommand{\etalchar}[1]{$^{#1}$}
\begin{thebibliography}{BGGS21b}

\bibitem[AAB{\etalchar{+}}23]{amanatidis2023fairdivisionsurvey}
Georgios Amanatidis, Haris Aziz, Georgios Birmpas, Aris Filos-Ratsikas, Bo~Li, Herv{\'e} Moulin, Alexandros~A. Voudouris, and Xiaowei Wu.
\newblock Fair division of indivisible goods: Recent progress and open questions.
\newblock {\em Artificial Intelligence}, 322:103965, 2023.

\bibitem[AAGW15]{aleksandrov2015onlinefoodbank}
Martin Aleksandrov, Haris Aziz, Serge Gaspers, and Toby Walsh.
\newblock Online fair division: Analysing a food bank problem.
\newblock In {\em Proceedings of the 24th International Joint Conference on Artificial Intelligence (IJCAI)}, pages 2540--2546, 2015.

\bibitem[ABM18]{amanatidis2018ef}
Georgios Amanatidis, Georgios Birmpas, and Vangelis Markakis.
\newblock Comparing approximate relaxations of envy-freeness.
\newblock In {\em Proceedings of the 27th International Joint Conference on Artificial Intelligence (IJCAI)}, pages 42--48, 2018.

\bibitem[AGMP26]{amanatidis2026buffers}
Georgios Amanatidis, Giulio Giaconi, Evangelos Markakis, and Nicos Protopapas.
\newblock Online fair division meets reordering buffers.
\newblock {\em arXiv preprint arXiv:2607.01159}, 2026.

\bibitem[ALMT25]{amanatidis2025personalized2value}
Georgios Amanatidis, Alexandros Lolos, Evangelos Markakis, and Victor Turmel.
\newblock Online fair division for personalized 2-value instances.
\newblock In {\em Proceedings of the 18th International Symposium on Algorithmic Game Theory (SAGT)}, pages 209--227, 2025.

\bibitem[AMS20]{aziz2020prop1}
Haris Aziz, Herv{\'e} Moulin, and Fedor Sandomirskiy.
\newblock A polynomial-time algorithm for computing a pareto optimal and almost proportional allocation.
\newblock {\em Operations Research Letters}, 48(5):573--578, 2020.

\bibitem[ASH26]{adams2026perpetuallyfairassignmentsbalanced}
Terrence Adams and Erel Segal-Halevi.
\newblock Perpetually fair assignments via balanced sequences of permutations.
\newblock {\em arXiv preprint arXiv:2602.21687}, 2026.

\bibitem[AW20]{aleksandrov2020online}
Martin Aleksandrov and Toby Walsh.
\newblock Online fair division: A survey.
\newblock In {\em Proceedings of the 34th AAAI Conference on Artificial Intelligence (AAAI)}, pages 13557--13562, 2020.

\bibitem[BBKN18]{BarmanBiswasKrishnamurthyNarahari2018}
Siddharth Barman, Arpita Biswas, Sanath~Kumar Krishnamurthy, and Yadati Narahari.
\newblock Groupwise maximin fair allocation of indivisible goods.
\newblock In {\em Proceedings of the 32nd AAAI Conference on Artificial Intelligence (AAAI)}, pages 917--924, 2018.

\bibitem[BEF23]{babaioff2023fairshare}
Moshe Babaioff, Tomer Ezra, and Uriel Feige.
\newblock Fair-share allocations for agents with arbitrary entitlements.
\newblock {\em Mathematics of Operations Research}, 49(4):2180--2211, 2023.

\bibitem[BGGS21a]{BaklanovGarimidiGkatzelisSchoepflin2021AAAI}
Artem Baklanov, Pranav Garimidi, Vasilis Gkatzelis, and Daniel Schoepflin.
\newblock Achieving proportionality up to the maximin item with indivisible goods.
\newblock In {\em Proceedings of the 35th AAAI Conference on Artificial Intelligence (AAAI)}, pages 5143--5150, 2021.

\bibitem[BGGS21b]{BaklanovGarimidiGkatzelisSchoepflin2021IJCAI}
Artem Baklanov, Pranav Garimidi, Vasilis Gkatzelis, and Daniel Schoepflin.
\newblock {PROPm} allocations of indivisible goods to multiple agents.
\newblock In {\em Proceedings of the 30th International Joint Conference on Artificial Intelligence (IJCAI)}, pages 24--30, 2021.

\bibitem[BKPP18]{benade2018envyvanish}
Gerdus Benad{\`e}, Aleksandr~M. Kazachkov, Ariel~D. Procaccia, and Christos-Alexandros Psomas.
\newblock How to make envy vanish over time.
\newblock In {\em Proceedings of the 19th ACM Conference on Economics and Computation (EC)}, pages 593--610, 2018.

\bibitem[BT96]{brams1996fairdivision}
Steven~J. Brams and Alan~D. Taylor.
\newblock {\em Fair Division: From Cake-Cutting to Dispute Resolution}.
\newblock Cambridge University Press, 1996.

\bibitem[Bud11]{Budish2011}
Eric Budish.
\newblock The combinatorial assignment problem: Approximate competitive equilibrium from equal incomes.
\newblock {\em Journal of Political Economy}, 119(6):1061--1103, 2011.

\bibitem[CFK{\etalchar{+}}26]{choo2026approxproponline}
Davin Choo, Winston Fu, Derek Khu, Tzeh~Yuan Neoh, Tze-Yang Poon, and Nicholas Teh.
\newblock Approximate proportionality in online fair division.
\newblock In {\em Proceedings of the 43rd International Conference on Machine Learning (ICML)}, 2026.
\newblock Extended version available as arXiv:2508.03253.

\bibitem[CFS17]{conitzer2017fairpublic}
Vincent Conitzer, Rupert Freeman, and Nisarg Shah.
\newblock Fair public decision making.
\newblock In {\em Proceedings of the 18th ACM Conference on Economics and Computation (EC)}, pages 629--646, 2017.

\bibitem[CGR{\etalchar{+}}23]{CaragiannisGargRathiSharmaVarricchio2023}
Ioannis Caragiannis, Jugal Garg, Nidhi Rathi, Eklavya Sharma, and Giovanna Varricchio.
\newblock New fairness concepts for allocating indivisible items.
\newblock In {\em Proceedings of the 32nd International Joint Conference on Artificial Intelligence (IJCAI)}, pages 2554--2562, 2023.

\bibitem[CKM{\etalchar{+}}19]{caragiannis2019unreasonable}
Ioannis Caragiannis, David Kurokawa, Herv{\'e} Moulin, Ariel~D. Procaccia, Nisarg Shah, and Junxing Wang.
\newblock The unreasonable fairness of maximum {N}ash welfare.
\newblock {\em ACM Transactions on Economics and Computation}, 7(3):12:1--12:32, 2019.

\bibitem[CLT26]{choi2026tfdmm}
Kui-Wang Choi, Minming Li, and Nicholas Teh.
\newblock Temporal fair division of indivisible mixed manna: Tractable settings.
\newblock {\em arXiv preprint}, arXiv:2608.20033, 2026.

\bibitem[CT26]{chen2026competitiveanalysis}
Tianqi Chen and Zhiyi Tan.
\newblock Competitive analysis for online fair division under multiple fairness notions.
\newblock {\em arXiv preprint arXiv:2606.15404}, 2026.

\bibitem[CTGW26]{cohen2026budgetconstraints}
Saar Cohen, Nicholas Teh, Paul~W. Goldberg, and Michael~J. Wooldridge.
\newblock Online fair division with budget constraints.
\newblock {\em arXiv preprint arXiv:2607.23310}, 2026.

\bibitem[ELL{\etalchar{+}}25]{elkind2025temporal}
Edith Elkind, Alexander Lam, Mohamad Latifian, Tzeh~Yuan Neoh, and Nicholas Teh.
\newblock Temporal fair division of indivisible items.
\newblock In {\em Proceedings of the 24th International Conference on Autonomous Agents and Multiagent Systems (AAMAS)}, pages 676--685, 2025.

\bibitem[GRT26]{Goldberg2026}
Paul Goldberg, Isaac Robinson, and Nicholas Teh.
\newblock Minimizing cumulative envy in allocating a sequence of items.
\newblock In {\em Proceedings of the 19th International Symposium on Algorithmic Game Theory (SAGT)}, 2026.
\newblock Forthcoming.

\bibitem[GS26]{garg2026relations}
Jugal Garg and Eklavya Sharma.
\newblock Exploring relations among fairness notions in discrete fair division.
\newblock In {\em Proceedings of the 25th International Conference on Autonomous Agents and Multiagent Systems (AAMAS)}, pages 2151--2159, 2026.

\bibitem[HPVX25]{halpern2025onlineenvy}
Daniel Halpern, Alexandros Psomas, Paritosh Verma, and Daniel Xie.
\newblock Online envy minimization and multicolor discrepancy: Equivalences and separations.
\newblock In {\em Proceedings of the 26th ACM Conference on Economics and Computation (EC)}, page 188, 2025.

\bibitem[KM25]{KobayashiMahara2025}
Yusuke Kobayashi and Ryoga Mahara.
\newblock Proportional allocation of indivisible goods up to the least valued good on average.
\newblock {\em SIAM Journal on Discrete Mathematics}, 39(1):533--549, 2025.

\bibitem[KMS25]{KulkarniMehtaShahkar2025}
Pooja Kulkarni, Ruta Mehta, and Parnian Shahkar.
\newblock Online fair division: Towards ex-post constant {MMS} guarantees.
\newblock In {\em Proceedings of the 26th ACM Conference on Economics and Computation (EC)}, page 638, 2025.

\bibitem[KSHH26]{kahana2026perpetual}
Ido Kahana, Erel Segal-Halevi, and Noam Hazon.
\newblock Perpetual fully-online approximate fairness.
\newblock {\em arXiv preprint arXiv:2605.19844}, 2026.

\bibitem[LMMS04]{lipton2004ece}
Richard~J. Lipton, Evangelos Markakis, Elchanan Mossel, and Amin Saberi.
\newblock On approximately fair allocations of indivisible goods.
\newblock In {\em Proceedings of the 5th ACM Conference on Electronic Commerce (EC)}, pages 125--131, 2004.

\bibitem[Mou03]{moulin2003fairdivision}
Herv{\'e} Moulin.
\newblock {\em Fair Division and Collective Welfare}.
\newblock MIT Press, 2003.

\bibitem[MP26]{melissourgos2026onlineefx}
Themistoklis Melissourgos and Nicos Protopapas.
\newblock Online {EFX} allocations with predictions.
\newblock In {\em Proceedings of the 25th International Conference on Autonomous Agents and Multiagent Systems (AAMAS)}, pages 3519--3521, 2026.

\bibitem[NPT26]{neoh2026online}
Tzeh~Yuan Neoh, Jannik Peters, and Nicholas Teh.
\newblock Online fair division with additional information.
\newblock In {\em Proceedings of the 43rd International Conference on Machine Learning (ICML)}, 2026.
\newblock Extended version available as arXiv:2505.24503.

\bibitem[SS25]{seddighin2025lowerbound}
Masoud Seddighin and Saeed Seddighin.
\newblock Lower bound for online {MMS} assignment of indivisible chores.
\newblock {\em arXiv preprint arXiv:2507.12984}, 2025.

\bibitem[STWZ25]{song2025onlinemmschores}
Jiaxin Song, Biaoshuai Tao, Wenqian Wang, and Yuhao Zhang.
\newblock Online {MMS} allocation for chores.
\newblock {\em arXiv preprint arXiv:2507.14039}, 2025.

\bibitem[WW26]{wang2026onlinefairBinary}
Yuanyuan Wang and Tianze Wei.
\newblock Online fair allocations with binary valuations and beyond.
\newblock In {\em Proceedings of the 40th AAAI Conference on Artificial Intelligence (AAAI)}, pages 17267--17275, 2026.

\bibitem[ZBW23]{zhou2023icml_mms_chores}
Shengwei Zhou, Rufan Bai, and Xiaowei Wu.
\newblock Multi-agent online scheduling: {MMS} allocations for indivisible items.
\newblock In {\em Proceedings of the 40th International Conference on Machine Learning (ICML)}, pages 42506--42516, 2023.

\end{thebibliography}

\newpage 

\appendix

\section{Definitions and Consequences for Other Fairness Notions}
\label{app:fairness-implications}

We use the notation of \Cref{sec:preliminaries}. For each agent $i$, let
$h_i:=\max_{g\in G\setminus A_i}v_i(g)$, where $h_i=0$ when $A_i=G$.
By the equivalent form of $\alpha$-PROP1 in
\Cref{sec:preliminaries}, for every $\alpha\in[0,1]$, $A$ is
$\alpha$-PROP1 exactly when
\begin{equation}
\label{eq:prop1-max-good-form}
    v_i(A_i)+h_i
    \ge
    \alpha\frac{v_i(G)}{n}
    \quad\text{for every agent }i\in N.
\end{equation}

\subsection{Definitions}

We use the direct multiplicative versions of the following notions. 

\paragraph{Proportionality-based axioms.}
An allocation is $\alpha$-PROP if $v_i(A_i)\ge\alpha \cdot \frac{v_i(G)}{n}$ for every agent $i \in N$. It is $\alpha$-PROPX if, for every $i$ and every good $g\in G\setminus A_i$, $v_i(A_i)+v_i(g)\ge\alpha\frac{v_i(G)}{n}$; when $A_i=G$, the condition for $i$ is automatic \cite{aziz2020prop1}.

For distinct agents $i,j$, let
\[
    \ell_{ij}:=
    \begin{cases}
        \min_{g\in A_j}v_i(g), & A_j\ne\varnothing,\\
        0, & A_j=\varnothing.
    \end{cases}
\]
An allocation is $\alpha$-PROPm, $\alpha$-PROPavg, or $\alpha$-Avg-EFX if, respectively,
\begin{align*}
    v_i(A_i)+\max_{j\ne i}\ell_{ij} \ge\alpha\frac{v_i(G)}{n},\quad 
    v_i(A_i)+\frac{1}{n-1}\sum_{j\ne i}\ell_{ij} \ge\alpha\frac{v_i(G)}{n}, \quad
    v_i(A_i)+\frac{1}{n}\sum_{j\ne i}\ell_{ij}
    &\ge\alpha\frac{v_i(G)}{n}
\end{align*}
for every agent $i$ \cite{BaklanovGarimidiGkatzelisSchoepflin2021AAAI,BaklanovGarimidiGkatzelisSchoepflin2021IJCAI,KobayashiMahara2025}. At $\alpha=1$, these are the usual exact notions.

\paragraph{Envy-based and share-based axioms.}
An allocation is $\alpha$-EF if $v_i(A_i)\ge\alpha v_i(A_j)$ for every $i,j$. It is $\alpha$-EFX if, for every $i,j$ and every $g\in A_j$ with $v_i(g)>0$, $v_i(A_i)\ge\alpha v_i(A_j\setminus\{g\})$.
It is $\alpha$-EF1 if, for every $i,j$, there is a set $R_{ij}\subseteq A_j$ with $|R_{ij}|\le1$ such that $v_i(A_i)\ge\alpha v_i(A_j\setminus R_{ij})$ \cite{lipton2004ece,amanatidis2018ef}.

For a positive integer $q$ and a set of goods $S$, let $\Pi_q(S)$ be the set of partitions of $S$ into $q$ bundles. Agent $i$'s maximin share is $\mms_i:=\max_{(X_1,\dots,X_n)\in\Pi_n(G)}\min_{j\in N}v_i(X_j)$.
An allocation is $\alpha$-MMS if $v_i(A_i)\ge\alpha\mms_i$ for every agent $i$ \cite{Budish2011}.

An allocation is $\alpha$-EEFX if, for every agent $i$, there is an allocation $B^i=(B_1^i,\dots,B_n^i)$ of $G$ such that $B_i^i=A_i$ and agent $i$ satisfies $\alpha$-EFX in $B^i$. Let $\mathcal E_i$ be the set of allocations in which agent $i$ satisfies EFX, and define $\mathrm{MXS}_i:=\min_{B\in\mathcal E_i}v_i(B_i)$.
The set $\mathcal E_i$ is nonempty: the allocation with $B_i=G$ and
$B_j=\varnothing$ for every $j\ne i$ belongs to $\mathcal E_i$.
Since $G$ is finite, the minimum is attained.
An allocation is $\alpha$-MXS if $v_i(A_i)\ge\alpha\,\mathrm{MXS}_i$ for every agent $i$ \cite{CaragiannisGargRathiSharmaVarricchio2023}.

For a nonempty set of agents $S\subseteq N$ and $i\in S$, let $\mu_i(S) := \max_{(X_1,\dots,X_{|S|})\in\Pi_{|S|}(\bigcup_{j\in S}A_j)}
    \min_{r\in[|S|]}v_i(X_r)$.
An allocation is $\alpha$-GMMS if $v_i(A_i)\ge\alpha\mu_i(S)$ for every nonempty $S\subseteq N$ and every $i\in S$ \cite{BarmanBiswasKrishnamurthyNarahari2018}.

\subsection{Implications}

\begin{lemma}
\label{lem:prop1-implications}
Fix $\alpha\in(0,1]$. Each of the following conditions implies
$\alpha$-PROP1:
\[
\begin{gathered}
\alpha\text{-PROP},\ \alpha\text{-PROP1},\ \alpha\text{-PROPX},\
\alpha\text{-PROPm},\ \alpha\text{-PROPavg},\ \alpha\text{-Avg-EFX},\\
\alpha\text{-EF},\ \alpha\text{-EFX},\ \alpha\text{-EF1},\
\alpha\text{-EEFX},\ \alpha\text{-MXS},\ \alpha\text{-MMS},\
\alpha\text{-GMMS}.
\end{gathered}
\]
\end{lemma}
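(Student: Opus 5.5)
The plan is to work throughout with the equivalent form \eqref{eq:prop1-max-good-form}, $v_i(A_i)+h_i\ge\alpha v_i(G)/n$. Each notion is checked agent by agent. Two elementary facts do most of the work: $v_i(g)\le h_i$ for every $g\notin A_i$, and $\ell_{ij}\le h_i$ whenever $A_j\neq\varnothing$ (if $A_j=\varnothing$ then $\ell_{ij}=0\le h_i$). The case $A_i=G$ is trivial for every notion, since then $v_i(A_i)=v_i(G)\ge\alpha v_i(G)/n$.

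\emph{Proportionality-based notions.}
\begin{itemize}
\item $\alpha$-PROP implies $\alpha$-PROP1 because $h_i\ge0$, and $\alpha$-PROP1 is the target itself.
\item For $\alpha$-PROPX, apply the defining inequality to a good attaining $h_i$.
\item For $\alpha$-PROPm, $\alpha$-PROPavg and $\alpha$-Avg-EFX, the added term is, respectively, the maximum, the average, or $(n-1)/n$ times the average of the $\ell_{ij}$. Each of these is at most $\max_{j}\ell_{ij}\le h_i$.
\end{itemize}

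\emph{Envy-based notions.}
\begin{itemize}
\item For $\alpha$-EF1, for each $j\ne i$ we have $v_i(A_i)+h_i\ge v_i(A_i)+v_i(R_{ij})\ge\alpha v_i(A_j)$. For $j=i$ we have $v_i(A_i)+h_i\ge\alpha v_i(A_i)$ because $\alpha\le1$. Summing the $n$ inequalities gives $n(v_i(A_i)+h_i)\ge\alpha v_i(G)$.
\item $\alpha$-EF implies $\alpha$-EF1 directly.
\item $\alpha$-EFX implies $\alpha$-EF1. If $A_j$ contains a good $g$ with $v_i(g)>0$, take $R_{ij}=\{g\}$. Otherwise $v_i(A_j)=0$ and $R_{ij}=\varnothing$ works.
\item For $\alpha$-EEFX, run the EF1 summation inside the witness allocation $B^i$. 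Since $B^i_i=A_i$, the set $G\setminus B^i_i$ equals $G\setminus A_i$, so the quantity $h_i$ is unchanged.
\end{itemize}

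\emph{Share-based notions.}
\begin{itemize}
\item $\alpha$-GMMS with $S=N$ is exactly $\alpha$-MMS.
\item For $\alpha$-MMS, the plan is to prove $\mathrm{MMS}_i\le v_i(A_i)+h_i$ whenever $v_i(A_i)+h_i<v_i(G)/n$; otherwise there is nothing to show, because $\alpha\le 1$. In that case every good of value greater than $h_i$ lies in $A_i$. So in any $n$-partition, some bundle $X_r$ avoids... Because this pigeonhole count is delicate, the safer route is to bound $\mathrm{MMS}_i\le v_i(G)/n$ directly, which is trivial since the minimum bundle value is at most the average. Then $v_i(A_i)\ge\alpha\,\mathrm{MMS}_i$ does not immediately give the target from the other side, so instead I would use the contrapositive: assume $v_i(A_i)+h_i<\alpha v_i(G)/n$ and exhibit an $n$-partition whose minimum exceeds $v_i(A_i)/\alpha$. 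The partition is built by placing $A_i$ in one bundle and greedily spreading $G\setminus A_i$, whose goods are each worth at most $h_i$, over the other $n-1$ bundles, always adding to the currently lightest one. Each of those bundles then has value at least $(v_i(G\setminus A_i)-(n-2)h_i)/(n-1)$, up to the usual greedy loss of one good. A short calculation comparing this quantity and $v_i(A_i)$ against $v_i(A_i)/\alpha$ should yield the contradiction. A rebalancing step may be needed when $A_i$ itself is the lightest bundle.
\item For $\alpha$-MXS, the plan is the analogous contrapositive: exhibit an allocation $B\in\mathcal E_i$ with $v_i(B_i)$ small, showing $\mathrm{MXS}_i\le(v_i(A_i)+h_i)/\alpha$. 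Start from $B_i=A_i$ and distribute $G\setminus A_i$ greedily, in decreasing order of $v_i$, to the lightest of the other bundles. If agent $i$ is not EFX-satisfied, move the least-valued good of the offending bundle into $B_i$ and repeat. Each move adds at most $h_i$ to $B_i$, and the procedure stops once $v_i(B_i)$ reaches a level comparable to $v_i(G)/n$ plus one good.
\end{itemize}

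I expect the MMS and MXS steps to be the main obstacle. The other notions follow by one-line comparisons or by the EF1 summation, whereas these two need an explicit partition or allocation together with careful accounting of the single large good. If the direct constructions above become too messy, I would first try to derive MXS from EEFX using a known implication from the literature (for example, Caragiannis et al.), and MMS from a known relation between MMS and PROP1-type shares (for example, Garg and Sharma), possibly at the cost of a constant factor. A constant-factor loss is still enough for the application to \Cref{thm:online-propk-impossibility}.
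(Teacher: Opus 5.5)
Your arguments for PROP, PROPX, PROPm, PROPavg, Avg-EFX, EF/EFX/EF1, EEFX (run inside the witness $B^i$) and GMMS are correct. For EF1 you use the per-$j$ bound $v_i(A_i)+h_i\ge\alpha v_i(A_j)$, including $j=i$, and sum over $j$; that works. The two share-based steps, however, have genuine gaps.

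\textbf{MXS.} Your plan runs in the wrong direction. $\mathrm{MXS}_i$ is a \emph{minimum} over $\mathcal E_i$, so exhibiting one $B\in\mathcal E_i$ with small $v_i(B_i)$ only gives an upper bound on $\mathrm{MXS}_i$. Combined with $v_i(A_i)\ge\alpha\,\mathrm{MXS}_i$, an upper bound says nothing about $v_i(A_i)$. Indeed, your target $\mathrm{MXS}_i\le(v_i(A_i)+h_i)/\alpha$ already follows from the hypothesis, so it cannot produce a contradiction. What you need is a \emph{lower} bound on $v_i(Y_i)$ for a minimizer $Y\in\mathcal E_i$. The missing idea runs as follows.
\begin{itemize}
\item If $v_i(A_i)+h_i<\alpha v_i(G)/n$, then $\mathrm{MXS}_i<v_i(G)/n$.
\item Hence some $Y_k$ with $k\ne i$ satisfies $v_i(Y_k)>v_i(G)/n>v_i(A_i)$.
\item So $Y_k$ contains a good $g\notin A_i$ with $v_i(g)>0$, and $v_i(g)\le h_i$ because $g\notin A_i$.
\item EFX toward $Y_k$ with this specific $g$ gives $\mathrm{MXS}_i>v_i(G)/n-h_i$.
\item Therefore $v_i(A_i)+h_i>\alpha v_i(G)/n+(1-\alpha)h_i$, which is a contradiction.
\end{itemize}
Your fallbacks do not rescue this. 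Transferring the EEFX argument would require MXS $\Rightarrow$ EEFX, which is the reverse of the standard implication. A constant-factor loss would also not prove the lemma as stated, since it asks for exactly $\alpha$-PROP1.

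\textbf{MMS.} The contrapositive is the right idea, but the partition you describe cannot work. Set $\tau:=v_i(A_i)/\alpha$. The bundle consisting of $A_i$ alone has value $v_i(A_i)\le\tau$, so any partition containing $A_i$ as a bundle has minimum at most $\tau$. Your comparison of ``$v_i(A_i)$ against $v_i(A_i)/\alpha$'' therefore always fails. The rebalancing is not an edge case; it is the core of the argument.
\begin{itemize}
\item First, top up $A_i$'s bundle with goods from $G\setminus A_i$ until its value first exceeds $\tau$.
\item Each added good is worth at most $h_i$, so this bundle ends with value at most $\tau+h_i<v_i(G)/n$.
\item Only then form the other $n-1$ bundles. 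The paper repeats the ``stop when the value first exceeds $\tau$'' rule. Your lightest-bundle greedy also works, giving each remaining bundle value at least $(v_i(G)-\tau-(n-1)h_i)/(n-1)$.
\item This exceeds $\tau$ because $v_i(G)>n(v_i(A_i)+h_i)/\alpha\ge n\tau+nh_i$.
\end{itemize}
Separately, your abandoned first claim is false: it is not true that $\mathrm{MMS}_i\le v_i(A_i)+h_i$ whenever $v_i(A_i)+h_i<v_i(G)/n$. Take $n=2$, $A_i=\varnothing$, and ten goods of value one; then $\mathrm{MMS}_i=5$ while $v_i(A_i)+h_i=1$.
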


\begin{proof}
The implications from $\alpha$-PROP and $\alpha$-PROP1 are immediate. For $\alpha$-PROPX, if $A_i\ne G$, apply its defining inequality to a good attaining $h_i$; if $A_i=G$, the PROP1 condition is automatic. For PROPm, PROPavg, and Avg-EFX, every $\ell_{ij}$ is at most $h_i$, so each correction term in the definition is at most $h_i$. In every case, \eqref{eq:prop1-max-good-form} follows.

Next, $\alpha$-EF implies $\alpha$-EFX, and $\alpha$-EFX implies
$\alpha$-EF1. To prove the remaining implication, suppose that $A$ is
$\alpha$-EF1 and fix an agent $i$. If $A_i=G$, the PROP1 condition is
immediate. For each $j\ne i$, the $\alpha$-EF1 condition allows the removal of at most one good from $A_j$. Since every good in $A_j$ has value at most $h_i$ to agent $i$, $v_i(A_j)\le \frac{v_i(A_i)}{\alpha}+h_i$.
Therefore
\[
    v_i(G)
    \le
    v_i(A_i)+(n-1)\left(\frac{v_i(A_i)}{\alpha}+h_i\right).
\]
Rearranging gives
\[
    v_i(A_i)+h_i-\alpha\frac{v_i(G)}{n}
    \ge
    \frac{1-\alpha}{n}v_i(A_i)
    +
    \frac{n-\alpha(n-1)}{n}h_i
    \ge0.
\]
Thus agent $i$ satisfies $\alpha$-PROP1.

The same calculation applies to $\alpha$-EEFX. Fix $i$ and choose an allocation $B^i$ with $B_i^i=A_i$ in which agent $i$ satisfies $\alpha$-EFX. For $j\ne i$, if $v_i(B_j^i)=0$, then
$v_i(B_j^i)\le v_i(A_i)/\alpha+h_i$ is immediate. Otherwise, choose a good $g_j\in B_j^i$ of maximum value to agent $i$. Then $v_i(g_j)>0$, and $v_i(g_j)\le h_i$ because $B_j^i\subseteq G\setminus A_i$. Moreover, $\alpha$-EFX gives
\[
    v_i(A_i)\ge \alpha v_i(B_j^i\setminus\{g_j\}).
\]
Hence $v_i(B_j^i)\le v_i(A_i)/\alpha+h_i$. Summing over $j\ne i$ and applying the same rearrangement as above gives $v_i(A_i)+h_i\ge\alpha v_i(G)/n$.

Now suppose that $A$ is $\alpha$-MXS. Fix an agent $i$ with $A_i\ne G$ and assume for contradiction that
\begin{equation}
\label{eq:mxs-contradiction}
    v_i(A_i)+h_i<\alpha\frac{v_i(G)}{n}.
\end{equation}
Choose $Y\in\mathcal E_i$ with $v_i(Y_i)=\mathrm{MXS}_i$. Since $v_i(A_i)\ge\alpha\,\mathrm{MXS}_i$, we must have $\mathrm{MXS}_i<v_i(G)/n$; otherwise $v_i(A_i)\ge\alpha v_i(G)/n$. The bundles in $Y$ have total value $v_i(G)$ to agent $i$, so some $k\ne i$ satisfies $v_i(Y_k)>v_i(G)/n$. Since
\[
    v_i(A_i)<\alpha\frac{v_i(G)}{n}
    \le\frac{v_i(G)}{n}
    <v_i(Y_k),
\]
additivity and nonnegativity give
\[
    v_i(Y_k\setminus A_i)
    =v_i(Y_k)-v_i(Y_k\cap A_i)
    \ge v_i(Y_k)-v_i(A_i)
    >0.
\]
Hence there is a good $g\in Y_k\setminus A_i$ with $v_i(g)>0$.
Agent $i$ satisfies EFX in $Y$, and hence
\[
    \mathrm{MXS}_i=v_i(Y_i)
    \ge v_i(Y_k\setminus\{g\})
    >\frac{v_i(G)}{n}-v_i(g).
\]
Using $v_i(g)\le h_i$ and $\alpha\le1$,
\[
    v_i(A_i)+h_i
    \ge \alpha\,\mathrm{MXS}_i+h_i
    >\alpha\left(\frac{v_i(G)}{n}-v_i(g)\right)+h_i
    \ge\alpha\frac{v_i(G)}{n},
\]
contradicting \eqref{eq:mxs-contradiction}.

It remains to consider MMS and GMMS. The exact implication from MMS to PROP1 is known \cite{CaragiannisGargRathiSharmaVarricchio2023}; we give the multiplicative calculation. Fix an agent $i$ with $A_i\ne G$ and assume for contradiction that $v_i(A_i)+h_i<\alpha\frac{v_i(G)}{n}$.
Set $\tau:=v_i(A_i)/\alpha$. Since $A$ is $\alpha$-MMS,
$\mms_i\le\tau$. The assumed inequality gives
\[
    \tau<\frac{v_i(G)}{n},
    \quad
    h_i
    <
    \alpha\left(\frac{v_i(G)}{n}-\tau\right)
    \le
    \frac{v_i(G)}{n}-\tau.
\]
We construct an $n$-partition in which every bundle has value more than $\tau$ to agent $i$. Start with $A_i$ and add goods from $G\setminus A_i$ until its value first exceeds $\tau$. Every added good has value at most $h_i$, so the completed bundle has value less than $v_i(G)/n$. Form each later bundle from the remaining goods, stopping when its value first exceeds $\tau$. Again, each completed bundle has value less than $v_i(G)/n$. Before the $r$th bundle is formed, the first $r-1$ bundles have total value less than $(r-1)v_i(G)/n$, so the remaining goods have value more than
\[
    v_i(G)-\frac{(r-1)v_i(G)}{n}
    \ge\frac{v_i(G)}{n}
    >\tau.
\]
Thus all $n$ bundles can be formed. Add any goods left afterward to arbitrary bundles. The least valuable bundle is worth more than $\tau$, contradicting $\mms_i\le\tau$.

Finally, $\alpha$-GMMS implies $\alpha$-MMS by applying the groupwise condition to the full set of agents. This completes the proof.
\end{proof}

For a positive integer $c$, say that $A$ is $\alpha$-EF$c$ if, for every ordered pair $i,j$, there is a set $R_{ij}\subseteq A_j$ with $|R_{ij}|\le c$ such that $v_i(A_i)\ge\alpha v_i(A_j\setminus R_{ij})$.

\begin{lemma}
\label{lem:efc-propk}
Fix $\alpha\in(0,1]$ and a positive integer $c$. Every $\alpha$-EF$c$ allocation is $\alpha$-PROP${c(n-1)}$.
\end{lemma}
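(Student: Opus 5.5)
Fix an agent $i$ and, for each $j\ne i$, the set $R_{ij}\subseteq A_j$ with $|R_{ij}|\le c$ and $v_i(A_i)\ge\alpha v_i(A_j\setminus R_{ij})$ given by $\alpha$-EF$c$. The plan is to take the added set to be the union of these removed goods, $S_i:=\bigcup_{j\ne i}R_{ij}$. It lies in $G\setminus A_i$, since the bundles are disjoint. It has at most $c(n-1)$ elements. It therefore qualifies as the set in the definition of $\alpha$-PROP${c(n-1)}$.

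The key computation is a single chain of inequalities. By additivity, and because $R_{ij}\subseteq A_j$, we have $v_i(A_j)=v_i(A_j\setminus R_{ij})+v_i(R_{ij})$. Hence
\[
 v_i(G)=v_i(A_i)+\sum_{j\ne i}\bigl(v_i(A_j\setminus R_{ij})+v_i(R_{ij})\bigr)\le v_i(A_i)+(n-1)\frac{v_i(A_i)}{\alpha}+v_i(S_i).
\]
Multiply by $\alpha/n$. Then use $\alpha\le1$ twice, first to get $\alpha v_i(A_i)\le v_i(A_i)$ and then to get $\alpha v_i(S_i)\le v_i(S_i)$. This yields
\[
\alpha\frac{v_i(G)}{n}\le\frac{\alpha v_i(A_i)+(n-1)v_i(A_i)+\alpha v_i(S_i)}{n}\le v_i(A_i)+\frac{v_i(S_i)}{n}\le v_i(A_i\cup S_i),
\]
where the last step uses nonnegativity and the disjointness of $S_i$ from $A_i$. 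So agent $i$ meets the $\alpha$-PROP${c(n-1)}$ condition with the set $S_i$, and since $i$ was arbitrary, the allocation is $\alpha$-PROP${c(n-1)}$.

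No step here should be a real obstacle. The only points that need care are bookkeeping ones: checking that $S_i$ is disjoint from $A_i$ and has size at most $c(n-1)$, and making sure the inequality $\alpha\le1$ is applied in the correct direction. Combined with \Cref{lem:propk-implies-prop1}, this lemma gives an $(\alpha/(c(n-1)))$-PROP1 guarantee, which is the form needed to extend \Cref{thm:online-propk-impossibility} to every fixed EF$c$ notion.
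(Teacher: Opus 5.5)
Your proof is correct and follows the paper's argument essentially step for step. You use the same set $S_i=\bigcup_{j\ne i}R_{ij}$ and the same upper bound on $v_i(G)$, and your final chain using $\alpha\le1$ is equivalent to the paper's observation that the slack equals $\frac{1-\alpha}{n}v_i(A_i)+\bigl(1-\frac{\alpha}{n}\bigr)v_i(S_i)\ge0$. The one step you use only implicitly is the equality $\sum_{j\ne i}v_i(R_{ij})=v_i(S_i)$, which holds because the $R_{ij}$ lie in pairwise disjoint bundles; the paper also leaves this unstated.
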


\begin{proof}
Fix an agent $i$. For each $j\ne i$, choose a set $R_{ij}\subseteq A_j$ with $|R_{ij}|\le c$ and $v_i(A_i)\ge\alpha v_i(A_j\setminus R_{ij})$.
Let $S_i:=\bigcup_{j\ne i}R_{ij}$.
Then $S_i\subseteq G\setminus A_i$ and $|S_i|\le c(n-1)$. Also,
\[
    v_i(G) =v_i(A_i)+\sum_{j\ne i}v_i(A_j) \le
    v_i(A_i)+\sum_{j\ne i}
    \left(\frac{v_i(A_i)}{\alpha}+v_i(R_{ij})\right) =v_i(A_i)+\frac{(n-1)v_i(A_i)}{\alpha}+v_i(S_i).
\]
Therefore
\[
    v_i(A_i\cup S_i)-\alpha\frac{v_i(G)}{n}
    \ge
    \frac{1-\alpha}{n}v_i(A_i)
    +
    \left(1-\frac{\alpha}{n}\right)v_i(S_i)
    \ge0.
\]
Thus agent $i$ satisfies $\alpha$-PROP${c(n-1)}$.
\end{proof}

For PMMS, let $\mu_i^2(S)
    :=
    \max_{(B_1,B_2)\in\Pi_2(S)}
    \min\{v_i(B_1),v_i(B_2)\}$.
An allocation is $\alpha$-PMMS if
$v_i(A_i)\ge\alpha\mu_i^2(A_i\cup A_j)$ for every ordered pair of
distinct agents $i,j\in N$~\cite{caragiannis2019unreasonable}.

\begin{lemma}
\label{lem:pmms-prop1}
Fix $\alpha\in(0,1]$. Every $\alpha$-PMMS allocation is $\frac{\alpha n}{\alpha+2(n-1)}$-PROP1.
\end{lemma}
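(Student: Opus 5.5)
The plan is to follow the pattern used for EF1 and EEFX in \Cref{lem:prop1-implications}. We convert each pairwise PMMS condition into an upper bound on $v_i(A_j)$ of the form $c\,v_i(A_i)+h_i$. We then sum over $j\ne i$ and rearrange into \eqref{eq:prop1-max-good-form}.

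Fix an agent $i$ and recall $h_i=\max_{g\in G\setminus A_i}v_i(g)$. If $A_i=G$, the condition is automatic. Fix $j\ne i$, let $S:=A_i\cup A_j$, and write $v:=v_i(S)$.

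\textbf{Step 1 (lower bound on the pairwise MMS).} The key claim is that either $v_i(A_i)\ge v_i(A_j)$, or
\[
\mu_i^2(S)\ge \frac{v-h_i}{2}.
\]
Suppose $v_i(A_i)<v/2$. Start with the 2-partition $(B_1,B_2)=(A_i,A_j)$ and move goods of $A_j$ into $B_1$ one at a time. At some move, $B_1$ crosses $v/2$: its value goes from $x<v/2$ to $x+v_i(g)\ge v/2$, where $v_i(g)\le h_i$ because $g\in A_j\subseteq G\setminus A_i$.
\begin{itemize}
\item The partition just before this move has minimum $x$, since $B_2$ then has value $v-x>v/2>x$.
\item The partition just after has minimum $v-x-v_i(g)$, since $B_1$ then has value at least $v/2$.
\end{itemize}
Hence $\mu_i^2(S)\ge\max\{x,\,v-x-v_i(g)\}\ge (v-h_i)/2$. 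This is the only non-routine step, and the crossing argument is what I expect to carry it. The degenerate case $A_j=\varnothing$ is harmless, since then $v_i(A_j)=0$.

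\textbf{Step 2 (per-pair bound).} In the second case of Step 1, $\alpha$-PMMS gives
\[
v_i(A_i)\ge \frac{\alpha}{2}\bigl(v_i(A_i)+v_i(A_j)-h_i\bigr),
\]
so $v_i(A_j)\le (2/\alpha-1)v_i(A_i)+h_i$. In the first case, $v_i(A_j)\le v_i(A_i)$, and the same bound holds because $2/\alpha-1\ge1$ and $h_i\ge0$. Summing over $j\ne i$ yields
\[
v_i(G)\le \Bigl(1+(n-1)\bigl(\tfrac{2}{\alpha}-1\bigr)\Bigr)v_i(A_i)+(n-1)h_i
=\frac{\alpha+(2-\alpha)(n-1)}{\alpha}\,v_i(A_i)+(n-1)h_i .
\]

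\textbf{Step 3 (rearrangement).} Set $\beta:=\alpha n/(\alpha+2(n-1))$ and multiply the last display by $\beta/n$. This gives
\[
\beta\frac{v_i(G)}{n}\le \frac{\alpha+2(n-1)-\alpha(n-1)}{\alpha+2(n-1)}\,v_i(A_i)+\frac{\alpha(n-1)}{\alpha+2(n-1)}\,h_i .
\]
Both coefficients lie in $[0,1]$, using $\alpha\le1$ and $\alpha(n-1)\ge0$. Since $v_i(A_i),h_i\ge0$, the right-hand side is at most $v_i(A_i)+h_i$. This is \eqref{eq:prop1-max-good-form} with factor $\beta$. As $i$ was arbitrary, the allocation is $\beta$-PROP1.
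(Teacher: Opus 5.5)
Your proof is correct and follows essentially the same route as the paper: a greedy-filling lower bound on $\mu_i^2(A_i\cup A_j)$, a per-pair upper bound on $v_i(A_j)$, then summation over $j\ne i$ and rearrangement into \eqref{eq:prop1-max-good-form}. The one difference is that you seed the filling with $A_i$, so the crossing good lies in $A_j$. This gives the sharper per-pair bound $v_i(A_j)\le(2/\alpha-1)v_i(A_i)+h_i$, whereas the paper bounds every good by $v_i(A_i)+h_i$ and obtains $v_i(A_j)\le 2v_i(A_i)/\alpha+h_i$; your bound would actually give the stronger factor $\alpha n/(\alpha+(2-\alpha)(n-1))$, which you correctly relax to the stated one.
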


\begin{proof}
Fix an agent $i$.

We first use a simple two-partition bound. If a set of goods has total value $W$ and every good has value at most $q$, then its two-way maximin share is at least $(W-q)/2$. To see this, add goods to one bundle until its value first reaches $(W-q)/2$. The value of that bundle is at most $(W+q)/2$, so the other bundle also has value at least $(W-q)/2$.

Apply this bound to $A_i\cup A_j$. Its total value is $v_i(A_i)+v_i(A_j)$, and every good in it has value at most $v_i(A_i)+h_i$. Hence
\[
    \mu_i^2(A_i\cup A_j)
    \ge
    \frac{v_i(A_j)-h_i}{2}.
\]
The $\alpha$-PMMS condition gives us $v_i(A_i)\ge\frac{\alpha}{2}(v_i(A_j)-h_i)$,
so we have that $v_i(A_j)\le\frac{2v_i(A_i)}{\alpha}+h_i$.
Summing over $j\ne i$, we get
\[
    v_i(G) =v_i(A_i)+\sum_{j\ne i}v_i(A_j) \le v_i(A_i)+(n-1)\left(\frac{2v_i(A_i)}{\alpha}+h_i\right) \le\left(1+\frac{2(n-1)}{\alpha}\right)(v_i(A_i)+h_i).
\]
Therefore,
\[
    v_i(A_i)+h_i
    \ge
    \frac{\alpha n}{\alpha+2(n-1)}\frac{v_i(G)}{n}.
\]
Equation \eqref{eq:prop1-max-good-form} proves the claim.
\end{proof}

\begin{corollary}
\label{cor:other-notions}
Fix $n\ge2$. For every notion covered by
\Cref{lem:prop1-implications,lem:efc-propk,lem:pmms-prop1}, with $c$
fixed for EF$c$, no randomized online algorithm can guarantee any positive multiplicative approximation against an adaptive adversary. The same conclusion holds for any simultaneous guarantee that includes one of these notions.
\end{corollary}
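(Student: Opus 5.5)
The plan is to argue by contradiction and reduce each covered notion to a positive PROP$k$ guarantee for some fixed $k$. That guarantee is already ruled out by the randomized extension of \Cref{thm:online-propk-impossibility} stated after its proof. Fix $n\ge2$ and a notion $\mathcal N$ from \Cref{lem:prop1-implications,lem:efc-propk,lem:pmms-prop1}. Suppose a randomized online algorithm guarantees $\alpha$-$\mathcal N$ for some $\alpha>0$ against every adaptive adversary. We allow the guarantee to hold with probability one, or more generally with some positive probability.

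First we pass from $\mathcal N$ to PROP$k$, case by case.
\begin{enumerate}[(i)]
    \item For every notion listed in \Cref{lem:prop1-implications}, each realized $\alpha$-$\mathcal N$ allocation is $\alpha$-PROP1.
    \item For $\alpha$-EF$c$ with $c$ fixed, \Cref{lem:efc-propk} gives $\alpha$-PROP$k$ with $k=c(n-1)$. This $k$ is fixed once $n$ and $c$ are fixed.
    \item For $\alpha$-PMMS, \Cref{lem:pmms-prop1} gives $\beta$-PROP1 with $\beta=\alpha n/(\alpha+2(n-1))>0$.
\end{enumerate}
Every implication is deterministic and holds for each realized allocation. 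So on every realization where the algorithm's output satisfies $\alpha$-$\mathcal N$, it also satisfies $\beta$-PROP$k$ for some fixed $k\ge1$ and $\beta\in(0,1]$. Here $\beta$ is $\min\{\alpha,1\}$ in cases (i) and (ii) and the PMMS value in case (iii).

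Next we invoke the randomized impossibility. For this fixed $k$ and $\beta$, there is an adaptive adversary under which the final allocation fails $\beta$-PROP$k$ on every realization of the private randomness. This is the paragraph on randomized algorithms after \Cref{thm:online-propk-impossibility}, combined with \Cref{lem:propk-implies-prop1} and \Cref{lem:two-agents-suffice}. By the previous step, the allocation then also fails $\alpha$-$\mathcal N$ with probability one. This contradicts any guarantee holding with positive probability, and in particular any high-probability guarantee.

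For a simultaneous guarantee that includes one of these notions, the same argument applies. Any allocation meeting the simultaneous guarantee meets the included notion, so it already fails with probability one under the same adversary.

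The only point requiring care is that the PROP$k$ parameter is fixed and $\beta>0$. For EF$c$, $k$ depends on $n$, but $n$ is fixed in the statement, so the impossibility, which holds for every fixed $k$, applies. For PMMS, $\beta$ is positive whenever $\alpha$ is. Two further remarks complete the bookkeeping. Values $\alpha>1$ can be replaced by $1$. The adversary from \Cref{thm:online-propk-impossibility} depends only on $(n,k,\beta)$ and the algorithm, not on $\mathcal N$.
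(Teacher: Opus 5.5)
Your proposal is correct and follows the paper's proof. You convert each covered notion, via \Cref{lem:prop1-implications,lem:efc-propk,lem:pmms-prop1}, into a $\beta$-PROP$k$ guarantee with fixed $k$ and $\beta>0$, then derive a contradiction with the randomized extension of \Cref{thm:online-propk-impossibility}, and you handle simultaneous guarantees the same way. Your additional bookkeeping is accurate and only makes explicit what the paper's short proof leaves implicit: $k=c(n-1)$ is fixed, the PMMS factor is positive, $\alpha$ can be capped at $1$, and a probability-one violation defeats any positive-probability guarantee.
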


\begin{proof}
A positive approximation to any listed notion would, by the preceding
lemmas, give $\beta$-PROP$k$ for some $\beta>0$ and some fixed $k$.
This contradicts the randomized extension of \Cref{thm:online-propk-impossibility}. The same argument applies to a simultaneous guarantee containing one of the listed
notions.
\end{proof}

\section{Chores: Inapproximability and Consequences for Other Fairness Notions}
\label{app:chores}

The main text and Appendix~\ref{app:fairness-implications} consider goods. In this appendix, we use the same sets of agents and items, the same arrival order, and the same online model as in \Cref{sec:preliminaries}, but each item is a \emph{chore}. Agent $i$ has a nonnegative additive cost function $c_i:2^G\to\mathbb{R}_{\ge0}$. Thus $c_i(S)=\sum_{g\in S}c_i(g)$ for every $S\subseteq G$.

For a nonnegative integer $k$ and a set $X\subseteq G$, let
\begin{equation}
\label{eq:chores-residual-cost}
    \widehat c_i^k(X)
    :=
    \min_{\substack{S\subseteq X\\ |S|\le k}}
    c_i(X\setminus S).
\end{equation}
This is the least cost remaining after removing at most $k$ chores from $X$. For chores, a smaller approximation factor is stronger, and we therefore take $\lambda\ge1$.

\begin{definition}[$\lambda$-PROP$k$ for chores]
\label{def:chores-propk}
    Fix $\lambda\ge1$ and an integer $k\ge1$. An allocation $A$ is $\lambda$-PROP$k$ for chores if $\widehat c_i^k(A_i) \le \lambda\frac{c_i(G)}{n}$ for every $i \in N$.
\end{definition}

Every allocation is $n$-PROP$k$, since $\widehat c_i^k(A_i)\le c_i(A_i)\le c_i(G)$. The next theorem shows that no smaller universal factor is possible online, even for a fixed $k$.

\subsection{The \texorpdfstring{PROP$k$}{PROPk} Lower Bound}

We will use the following monotonicity property of \eqref{eq:chores-residual-cost}: if $X\subseteq Y$, then $\widehat c_i^k(X)\le \widehat c_i^k(Y)$. Indeed, for every $S\subseteq Y$ with $|S|\le k$, nonnegativity and additivity give
\[
    c_i(Y\setminus S)
    \ge
    c_i(X\setminus(S\cap X))
    \ge
    \widehat c_i^k(X).
\]
Taking the minimum over all such $S$ gives the stated inequality.

Fix $k\geq 1$ for the rest of this subsection. The induction below
uses the same broad recursive idea as the online MMS lower bounds of
Seddighin and Seddighin~\cite{seddighin2025lowerbound} and Song et
al.~\cite{song2025onlinemmschores}: an $r$-agent construction is repeated for the first $r$ agents, while the costs of the additional agent increase geometrically. The new inductive statement controls
the cost left after removing any $k$ chores. It therefore gives a
PROP$k$ lower bound rather than only an MMS lower bound.

More precisely, the next lemma constructs an adaptive adversary such that, regardless of the recipients, at some prefix one agent bears almost all of her total cost for the chores seen so far even after removing up to $k$ chores from her bundle. Each new copy of the $r$-agent construction is scaled so that earlier chores contribute little to the relevant agent's total cost.

\begin{lemma}
\label{lem:chores-concentration}
For every positive integer $r$ and every $\eps\in(0,1)$, there is an adaptive adversary for $r$ agents together with a positive integer $T(r,\eps)$ such that, for every possible sequence of recipients, some prefix $G'$ of at most $T(r,\eps)$ chores and some agent $i\in[r]$ satisfy
\begin{equation}
\label{eq:chores-concentration}
    \widehat c_i^k(A_i)>(1-\eps)c_i(G'),
\end{equation}
where $A_i$ is agent $i$'s bundle at that prefix. Thus, the cost remaining after removing up to $k$ chores from $A_i$ exceeds a $(1-\eps)$ fraction of agent $i$'s cost for all chores in $G'$. The adversary chooses each cost vector using only the recipients of earlier chores. Its first chore has cost one for every agent, and every chore it presents has positive cost for every agent.
\end{lemma}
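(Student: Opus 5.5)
We prove the statement by induction on $r$, following the recursive scheme described before the statement: the $r$-agent construction is run repeatedly for the first $r$ agents, while agent $r+1$ has geometrically growing costs.

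\emph{Base case $r=1$.} Present $k+1+\lceil k/\eps\rceil$ chores of cost one. All of them go to agent $1$, so $\widehat c_1^k(A_1)=N-k$, where $N$ is the number of chores. Since $N-k>(1-\eps)N$ whenever $N>k/\eps$, the condition holds with $T(1,\eps)=N$.

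\emph{Inductive step from $r$ to $r+1$.} Fix $\eps$ and a smaller parameter $\eps'$ to be chosen. The adversary proceeds in phases $\ell=1,2,\dots,L$. In phase $\ell$ it runs the $r$-agent adversary from the induction hypothesis with parameter $\eps'$, with all costs multiplied by a scale $\sigma_\ell$. The costs for agents $1,\dots,r$ are taken from that adversary, and agent $r+1$ gets a cost $\tau_\ell$ for every chore in the phase.

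Recipients are handled as follows. If agent $r+1$ receives a chore, the phase stops immediately; call this a \emph{hit}. Otherwise the simulated adversary only sees recipients in $[r]$. In that case, within $T(r,\eps')$ chores it reaches a prefix where some $i\le r$ has residual cost above $(1-\eps')$ times her cost on the phase's chores. That agent's bundle from the phase is contained in her full bundle, so by the monotonicity of $\widehat c_i^k$ we get
\[
\widehat c_i^k(A_i)>(1-\eps')\,\sigma_\ell\cdot(\text{phase cost}).
\]

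To make earlier phases negligible, choose $\sigma_\ell$ so large that agent $i$'s cost on all previous chores is at most $\eps''$ times her phase cost. This is possible because the first chore of every phase has cost $\sigma_\ell$ to every agent and the earlier total is bounded. Then agent $i$ has residual cost above $(1-\eps')(1-\eps'')c_i(G')$, which exceeds $(1-\eps)c_i(G')$ for small enough parameters. In that case we stop.

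If instead every phase ends in a hit, agent $r+1$ receives at least one chore per phase. Take $L=k+1+\lceil\text{const}/\eps\rceil$ phases with $\tau_\ell$ growing geometrically, for example $\tau_{\ell+1}\ge M\,T(r,\eps')\sum_{\ell'\le\ell}\tau_{\ell'}$. Each phase's chores then dominate all earlier cost for agent $r+1$. Removing $k$ chores from $A_{r+1}$ can affect at most $k$ phases. The top phases are each worth at least roughly the full total of everything before them, which gives a bound of the form
\[
\widehat c_{r+1}^k(A_{r+1})\ge \tau_{L-k}\text{-ish}.
\]

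This is where the count needs care. Removing the $k$ hit chores from the last $k$ phases leaves the hit chore of phase $L-k$ and earlier. That may be small relative to $c_{r+1}(G')$, which is dominated by phase $L$. So geometric growth alone is not enough. The fix is to not stop at the first hit: in the last phase, keep presenting chores of cost $\tau_L$ to agent $r+1$. Continue the phase at cost $\tau_L$ until either $k+\lceil k/\eps\rceil$ hits have occurred, or the $r$-agent sub-adversary concludes. Inside the phase, a hit simply means agent $r+1$ takes that chore and the simulation for $[r]$ continues.

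The phase-level structure is then this. In each phase, either some $i\le r$ concentrates (done), or agent $r+1$ collects many hits of cost $\tau_\ell$. In the latter case, because $\tau_\ell$ dominates earlier costs by a huge factor and the phase contains at most $T$ chores, agent $r+1$'s residual after removing $k$ chores is at least $(\#\text{hits}-k)\tau_\ell$. Her total cost is at most about $T(r,\eps')\tau_\ell$ plus negligible earlier costs. This ratio is still not near $1$.

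So I expect the main obstacle to be exactly this balance. Agent $r+1$ must bear almost all of her total cost, not just many chores. My first attempt would be to make agent $r+1$'s costs within a phase increase geometrically as well, ordered by step, so that her cost is dominated by the chores she actually received. Concretely, at each step of a phase, set agent $r+1$'s cost to a large multiple of her total cost so far. Every unhit chore then goes to some $i\le r$ and advances the simulation. Each hit becomes the new dominant term, and finitely many hits suffice as long as the unhit chores are themselves negligible.

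For that last point, I would also scale the sub-adversary's costs for agents $1,\dots,r$ per step so that the inductive guarantee is preserved. This works since the guarantee is scale-invariant only phase-wise, so I would instead restart a fresh phase after every hit. That yields the finite bound $T(r+1,\eps)$ by bounding hits by $k+\lceil k/\eps\rceil$ and phases accordingly. Positivity of all costs and the first chore having all-ones costs are then arranged by normalization.
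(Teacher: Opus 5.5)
Your base case and your handling of a phase in which the simulated $r$-agent adversary concludes (rescale each phase so earlier chores are negligible, then use monotonicity of $\widehat c_i^k$) are sound and match the paper. The gap is the case where agent $r+1$ keeps ending phases, and the construction you settle on does not close it. You correctly observed that if agent $r+1$'s hit costs grow geometrically across phases, then removing her $k$ most expensive chores leaves a negligible residual. Your final design has the same defect. In it, each step's cost for agent $r+1$ is ``a large multiple of her total cost so far'' and ``each hit becomes the new dominant term.'' Her received chores $h_1<h_2<\cdots<h_Q$ then satisfy $h_{\ell+1}\ge M\sum_{\ell'\le\ell}h_{\ell'}$. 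After deleting the top $k$, the remainder is at most $h_{Q-k+1}/M\le c_{r+1}(G')/M$, however many phases you run. Separately, your count of $k+\lceil k/\eps\rceil$ hits is valid only when all received chores have essentially equal cost.

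The missing idea is to separate two scales. Agent $r+1$'s costs should grow geometrically \emph{within} a phase but \emph{reset} at the start of every phase. In the paper, the $s$th chore of each block costs $b_s=q^{s-1}$ to agent $r+1$, with $q=1+1/\eta$ and $\eta=\eps/4$. A block that ends at position $s$ therefore has missed cost $\sum_{h<s}b_h=\eta(b_s-1)<\eta b_s$, and summing over blocks gives total missed cost $O<\eta L$, where $L$ is her received cost. Each block restarts at $b_1=1$ and lasts at most $T_0=T(r,\eta)$ positions, so every received chore costs between $1$ and $b_{T_0}$. Deleting $k$ of them thus removes at most $kb_{T_0}$. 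The number of blocks must depend on this ratio: $Q=\lfloor kb_{T_0}/\eta\rfloor+1$ gives $L>kb_{T_0}/\eta$. Hence $\widehat c_{r+1}^k(A_{r+1})>(1-\eta)L$, and the final ratio exceeds $(1-\eta)/(1+\eta)>1-\eps$. Your proposal needs both the per-phase reset and this ratio-dependent count; without either one, the argument for agent $r+1$ fails.
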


\begin{proof}
We proceed by induction on $r$.

For $r=1$, set $T(1,\eps):=\left\lfloor k/\eps\right\rfloor+1$ and present $T(1,\eps)$ chores, each of cost one. The only agent receives every chore. Since $\eps T(1,\eps)>k$,
\[
    \widehat c_1^k(A_1)
    =
    T(1,\eps)-k
    >
    (1-\eps)T(1,\eps)
    =
    (1-\eps)c_1(G').
\]

Now assume the statement for $r$ agents, and consider agents $1,\dots,r+1$. Set
\[
    \eta:=\frac{\eps}{4},
    \quad
    T_0:=T(r,\eta),
    \quad
    q:=1+\frac{1}{\eta},
    \quad
    b_s:=q^{s-1}\quad(s=1,\dots,T_0).
\]
The strategy proceeds in blocks. At the beginning of a block, let $P_i$ be agent $i$'s total cost for all preceding chores, for each $i\in[r]$, and set $w_i:=\max\{1, P_i/\eta\}$.
Start a new copy of the $r$-agent strategy with parameter $\eta$. If its $s$th chore has costs $(d_1,\dots,d_r)$, present an actual chore with costs $(w_1d_1,\dots,w_rd_r,b_s)$.
If this chore is assigned to agent $r+1$, end the current block and start a new one. Otherwise, pass its recipient to the $r$-agent strategy and continue the current block.

Suppose first that the $r$-agent strategy reaches \eqref{eq:chores-concentration} within a block before agent $r+1$ receives a chore. Let $i\in[r]$ be the agent given by the induction hypothesis, let $X_i$ be the chores allocated to $i$ in the current block up to that point, and let $L_i$ be agent $i$'s total cost for all chores in that part of the block. Scaling agent $i$'s costs by $w_i$ gives
\[
    \widehat c_i^k(X_i)>(1-\eta)L_i.
\]
The first chore of the $r$-agent strategy has cost one to every participating agent, so $L_i\ge w_i$. By the definition of $w_i$, $P_i\le\eta w_i\le\eta L_i$. Since $X_i\subseteq A_i$, the monotonicity shown above gives
\[
\begin{aligned}
    \frac{\widehat c_i^k(A_i)}{c_i(G')} \ge
    \frac{\widehat c_i^k(X_i)}{P_i+L_i} >
    \frac{1-\eta}{1+\eta}
    >1-\eps.
\end{aligned}
\]
The last inequality follows from $\eta=\eps/4$. Thus \eqref{eq:chores-concentration} holds. Consequently, unless the desired inequality has already been reached, agent $r+1$ receives a chore within the first $T_0$ positions of every block.

It remains to consider the case in which agent $r+1$ repeatedly ends the blocks. If the first $s-1$ chores of a block are allocated to agents in $[r]$ and the $s$th chore is allocated to agent $r+1$, then the cost missed by agent $r+1$ in that block is
\begin{equation}
\label{eq:chores-geometric-sum}
    \sum_{h=1}^{s-1}b_h
    = \frac{q^{s-1}-1}{q-1}
    = \eta(b_s-1)
    < \eta b_s.
\end{equation}
Let $Q:= \lfloor k b_{T_0}/\eta \rfloor+1$.
Complete $Q$ blocks, unless \eqref{eq:chores-concentration} has already been reached for an agent in $[r]$. Let $L$ be agent $r+1$'s total cost for the $Q$ chores she receives, and let $O$ be her total cost for the chores allocated to the other agents. Summing \eqref{eq:chores-geometric-sum} over the blocks gives $O<\eta L$. Also, every received chore has cost at most $b_{T_0}$ and at least one, so $L\ge Q>\frac{k b_{T_0}}{\eta}$.
Removing at most $k$ chores can remove cost at most $k b_{T_0}$. Therefore
\[
    \widehat c_{r+1}^k(A_{r+1})
    \ge L-kb_{T_0}
    >(1-\eta)L.
\]
It follows that
\[
    \frac{\widehat c_{r+1}^k(A_{r+1})}{c_{r+1}(G')}
    >
    \frac{1-\eta}{1+\eta}
    >1-\eps.
\]
Thus the induction closes with $T(r+1,\eps):=QT_0$. At the beginning of the first block, every $P_i$ is zero and every $w_i$ is one, so the first chore has cost one for all $r+1$ agents. All other costs in the construction are positive as well.
\end{proof}

\begin{theorem}
\label{thm:chores-propk-impossibility}
Fix $n\ge2$, $k\ge1$, and $\lambda\in[1,n)$. For every (deterministic or randomized) online algorithm, there is an adaptive adversary under which the final allocation is not $\lambda$-PROP$k$; for a randomized algorithm, this holds with probability one. The statement remains true even when both of the following hold:
\begin{enumerate}[(i)]
    \item the final number of chores is known before the first chore arrives; and
    \item every cost lies in $[0,1]$.
\end{enumerate}
The adversary observes only the realized recipients of earlier chores.
\end{theorem}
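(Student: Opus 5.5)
The plan is to apply \Cref{lem:chores-concentration} with $r=n$ and a suitable $\eps$, and then copy the finiteness and horizon argument from the proof of \Cref{thm:online-propk-impossibility}.

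\textbf{Choice of $\eps$.} Since $\lambda<n$, we may choose $\eps\in(0,1)$ with $(1-\eps)n>\lambda$. Run the adversary of \Cref{lem:chores-concentration} for $n$ agents with this $\eps$. For every sequence of recipients, some prefix $G'$ of at most $T(n,\eps)$ chores and some agent $i$ satisfy
\[
    \widehat c_i^k(A_i)>(1-\eps)c_i(G').
\]
At the first such prefix, the adversary stops sending positive-cost chores. Every later chore has cost zero for every agent.

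\textbf{Why the final allocation fails.} Zero-cost chores leave $c_i(G)=c_i(G')$ unchanged. They can only enlarge $A_i$, and by the monotonicity of $\widehat c_i^k$ shown before the lemma, this cannot decrease $\widehat c_i^k(A_i)$. The lemma guarantees that every presented chore has positive cost, so $c_i(G')\ge1>0$. Hence, in the final allocation,
\[
    \widehat c_i^k(A_i)>(1-\eps)c_i(G)>\lambda\,\frac{c_i(G)}{n},
\]
and the allocation is not $\lambda$-PROP$k$.

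\textbf{Known horizon and cost bound.} The argument follows the proof of \Cref{thm:online-propk-impossibility}.
\begin{enumerate}[(i)]
    \item Fix one concrete strategy. Its tree of recipient histories has branching factor at most $n$, and every branch has length at most $T(n,\eps)$. So $M:=T(n,\eps)$ already serves as a uniform bound; König's lemma is not even needed. The adversary announces $M$ in advance and pads every branch with zero-cost chores up to length $M$.
    \item The tree is finite, so it contains finitely many costs. Dividing all costs by the largest one puts them in $[0,1]$. This common scaling preserves every inequality defining $\lambda$-PROP$k$, because $\widehat c_i^k$ scales linearly.
\end{enumerate}
Revealing $M$ or rescaling may change the algorithm's choices. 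Every resulting history is still a branch of the same tree, since the tree was defined for all recipient sequences.

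\textbf{Randomized algorithms.} The adversary uses only the realized recipients of earlier chores, as \Cref{lem:chores-concentration} states. The violation therefore occurs on every realization of the algorithm's randomness, and hence with probability one.

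\textbf{Main obstacle.} The delicate point is checking that stopping at the first violating prefix and then padding cannot undo the violation. This rests on two facts: $c_i(G)$ does not change under zero-cost padding, and $\widehat c_i^k$ is monotone under adding chores. Beyond that, the proof only needs care in choosing $\eps$ strictly, so that $(1-\eps)n>\lambda$ is a strict inequality.
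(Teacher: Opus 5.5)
Your proposal is correct and follows the paper's proof essentially step by step. You choose $\eps$ with $(1-\eps)n>\lambda$ (the paper takes $\eps=(n-\lambda)/(2n)$), apply \Cref{lem:chores-concentration} with $r=n$, and pad every branch with zero-cost chores up to the bound $T(n,\eps)$, which serves as the announced horizon. You then rescale the finitely many costs in the recipient tree. Your explicit appeal to monotonicity of $\widehat c_i^k$ and to $c_i(G')>0$ only spells out what the paper states in one line.
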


\begin{proof}
Set $\eps:=\frac{n-\lambda}{2n}$.
Then $1-\eps>\lambda/n$. Apply \Cref{lem:chores-concentration} with $r=n$. Along every sequence of recipients, some prefix $G'$ and some agent $i$ satisfy
\[
    \widehat c_i^k(A_i)
    >(1-\eps)c_i(G')
    >\lambda\frac{c_i(G')}{n}.
\]
The adversary ends the instance at this prefix. The allocation is therefore not $\lambda$-PROP$k$. Since the argument applies to every sequence of recipients, it applies to every realization of a randomized algorithm's private choices.

The bound $T(n,\eps)$ in \Cref{lem:chores-concentration} depends only on $n$, $k$, and $\lambda$. The adversary may reveal this number before the first chore arrives and, after the displayed inequality is reached, add zero-cost chores until exactly $T(n,\eps)$ chores have arrived. These added chores do not change the inequality.

Finally, consider the finite tree containing every recipient history of length at most $T(n,\eps)$ under the strategy. Only finitely many cost vectors occur in this tree. Multiplying all of them by one common positive constant makes every cost lie in $[0,1]$. Each inequality above is homogeneous, so the same conclusion holds after this scaling.
\end{proof}

\subsection{Consequences for Proportionality and Envy}

We next use direct multiplicative cost versions of standard chore fairness notions \cite{chen2026competitiveanalysis,garg2026relations}. An allocation is $\lambda$-PROP if
\[
    c_i(A_i)\le\lambda\frac{c_i(G)}{n}
    \quad\text{for every }i.
\]
It is $\lambda$-PROP1 when \Cref{def:chores-propk} holds with $k=1$. It is $\lambda$-PROPX if, for every agent $i$ and every chore $g\in A_i$ with $c_i(g)>0$,
\[
    c_i(A_i\setminus\{g\})
    \le
    \lambda\frac{c_i(G)}{n};
\]
the condition is automatic when $A_i$ contains no positive-cost chore. Under additive chore costs, the standard definitions of PROPm and PROPavg reduce to PROPX \cite{garg2026relations}; their direct multiplicative versions therefore coincide with $\lambda$-PROPX. Hence
\begin{equation}
\label{eq:chores-prop-implications}
    \lambda\text{-PROP}
    \Longrightarrow
    \lambda\text{-PROPX}
    \Longrightarrow
    \lambda\text{-PROP1}.
\end{equation}
The same conclusion holds for $\lambda$-PROPm and $\lambda$-PROPavg.

For a nonnegative integer $c$, an allocation is $\lambda$-EF$c$ for chores if, for every ordered pair $i,j$, there is a set $S_{ij}\subseteq A_i$ with $|S_{ij}|\le c$ such that
\begin{equation}
\label{eq:chores-efc}
    c_i(A_i\setminus S_{ij})\le\lambda c_i(A_j).
\end{equation}
Thus $\lambda$-EF is $\lambda$-EF$0$, and $\lambda$-EF1 is the case $c=1$. An allocation is $\lambda$-EFX if, for every ordered pair $i,j$ and every $g\in A_i$ with $c_i(g)>0$,
\[
    c_i(A_i\setminus\{g\})\le\lambda c_i(A_j).
\]
In particular, $\lambda$-EFX implies $\lambda$-EF1. An allocation is epistemic $\lambda$-EF$c$ if, for every agent $i$, there is a complete allocation $B^i$ with $B_i^i=A_i$ in which agent $i$ satisfies $\lambda$-EF$c$. The cases $c=0$ and $c=1$ are denoted EEF and EEF1, respectively; epistemic EFX is denoted EEFX.

Define
\begin{equation}
\label{eq:chores-theta}
    \theta_n(\lambda):=\frac{n\lambda}{\lambda+n-1}.
\end{equation}
For every finite $\lambda\ge1$, we have $\theta_n(\lambda)<n$.

\begin{lemma}
\label{lem:chores-envy-propk}
Fix a nonnegative integer $c$ and a finite $\lambda\ge1$. Every $\lambda$-EF$c$ allocation and every epistemic $\lambda$-EF$c$ allocation is $\theta_n(\lambda)$-PROP$c$, where PROP$0$ means PROP. In particular, every $\lambda$-EFX and every $\lambda$-EEFX allocation is $\theta_n(\lambda)$-PROP1.
\end{lemma}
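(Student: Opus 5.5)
Fix an agent $i$ and write $S:=S_{ij}$-type removal sets. The plan is to reduce the epistemic case to the plain case and then run an averaging argument that mirrors the goods computation in \Cref{lem:efc-propk}. For the epistemic case, take the allocation $B^i$ with $B_i^i=A_i$ in which agent $i$ satisfies $\lambda$-EF$c$. Since $B^i$ is a complete allocation of $G$, we have $\sum_{j}c_i(B_j^i)=c_i(G)$. The inequality for agent $i$ below uses only $A_i$, the other bundles of $B^i$, and $c_i(G)$. Hence it suffices to prove the claim for an allocation in which agent $i$ satisfies $\lambda$-EF$c$ toward every $j\ne i$.

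The main step is the choice of a single removal set. Let $S\subseteq A_i$ with $|S|\le c$ consist of $c$ chores of largest cost to $i$ in $A_i$ (or all of $A_i$ if it has fewer than $c$ chores). Then $c_i(A_i\setminus S)=\widehat c_i^c(A_i)$. For every $j$, the set $S_{ij}$ given by EF$c$ satisfies $c_i(A_i\setminus S)\le c_i(A_i\setminus S_{ij})$, because removing the $c$ most costly chores is optimal. This yields $\widehat c_i^c(A_i)\le\lambda c_i(A_j)$ for every $j\ne i$. Summing over $j\ne i$ gives
\[
(n-1)\widehat c_i^c(A_i)\le\lambda\bigl(c_i(G)-c_i(A_i)\bigr)\le\lambda\bigl(c_i(G)-\widehat c_i^c(A_i)\bigr),
\]
where the last step uses $\widehat c_i^c(A_i)\le c_i(A_i)$. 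Rearranging gives $(\lambda+n-1)\widehat c_i^c(A_i)\le\lambda c_i(G)$, that is, $\widehat c_i^c(A_i)\le\theta_n(\lambda)\,c_i(G)/n$. This is exactly $\theta_n(\lambda)$-PROP$c$, and for $c=0$ it is PROP.

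For the final sentence, note that $\lambda$-EFX implies $\lambda$-EF1 for chores, as stated. Choosing $g$ to be a maximum-cost chore of $A_i$ gives $c_i(A_i\setminus\{g\})\le\lambda c_i(A_j)$. If $A_i$ has no positive-cost chore, the bound is trivial. Likewise, epistemic EFX implies epistemic EF1: apply the same choice of $g$ within $B^i$. The case $c=1$ then gives $\theta_n(\lambda)$-PROP1.

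The only subtle point is the uniform removal set. EF$c$ allows a different $S_{ij}$ for each $j$, while PROP$c$ needs a single set. This is resolved by the observation that the top-$c$ set dominates every choice. The rest is the routine rearrangement above.
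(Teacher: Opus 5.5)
Your proof is correct and follows the paper's argument. You bound $\widehat c_i^c(A_i)\le\lambda c_i(B^i_j)$ for each $j\ne i$, sum using completeness of $B^i$ and $\widehat c_i^c(A_i)\le c_i(A_i)$, and rearrange to $\widehat c_i^c(A_i)\le\frac{\lambda}{\lambda+n-1}c_i(G)$; the only difference is that the top-$c$ removal set is unnecessary, since $\widehat c_i^c$ is defined as a minimum over all sets of at most $c$ chores, so $\widehat c_i^c(A_i)\le c_i(A_i\setminus S_{ij})$ holds directly.
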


\begin{proof}
Fix an agent $i$ and set $r_i:=\widehat c_i^c(A_i)$. For ordinary EF$c$, take $B^i=A$. For epistemic EF$c$, take the allocation $B^i$ from the definition. For each $j\ne i$, \eqref{eq:chores-efc} and the definition of $r_i$ give
\[
    r_i\le\lambda c_i(B_j^i),
\]
so $c_i(B_j^i)\ge r_i/\lambda$. Also $c_i(A_i)\ge r_i$. Therefore
\[
    c_i(G)
    =c_i(A_i)+\sum_{j\ne i}c_i(B_j^i)
    \ge
    r_i+(n-1)\frac{r_i}{\lambda}.
\]
Rearranging gives
\[
    r_i
    \le
    \frac{\lambda}{\lambda+n-1}c_i(G)
    =
    \theta_n(\lambda)\frac{c_i(G)}{n}.
\]
For EFX and EEFX, use the implication to EF1 and take $c=1$.
\end{proof}

When $c=0$, $\theta_n(\lambda)$-PROP is stronger than $\theta_n(\lambda)$-PROP1, so \Cref{thm:chores-propk-impossibility} applies in this case as well. It follows from \Cref{thm:chores-propk-impossibility} and \Cref{lem:chores-envy-propk} that no randomized online algorithm guarantees any finite multiplicative approximation to EF, EF1, EFX, EEF, EEF1, EEFX, EF$c$, or epistemic EF$c$ for any fixed $c$.

\subsection{Consequences for Share-Based Notions}

For a set of chores $H\subseteq G$ and a positive integer $q$, using the partition notation $\Pi_q(H)$ from Appendix~\ref{app:fairness-implications}, define
\begin{equation}
\label{eq:chores-minmax-share}
    \mu_i^{\mathrm{ch}}(H,q)
    :=
    \min_{(X_1,\dots,X_q)\in\Pi_q(H)}
    \max_{t\in[q]}c_i(X_t).
\end{equation}
For the AnyPrice share for chores \cite{babaioff2023fairshare,garg2026relations}, define $\mathrm{APS}_i^{\mathrm{ch}}(\varnothing,q):=0$, and, for $H\ne\varnothing$,
\begin{equation}
\label{eq:chores-aps}
    \mathrm{APS}_i^{\mathrm{ch}}(H,q)
    :=
    \max_{\substack{x\in\mathbb{R}_{\ge0}^{H}\\ \sum_{g\in H}x_g=1}}
    \min_{\substack{Y\subseteq H\\ \sum_{g\in Y}x_g\ge1/q}}
    c_i(Y).
\end{equation}
Agent $i$'s chore maximin share and AnyPrice share are $\mms_i^{\mathrm{ch}}:=\mu_i^{\mathrm{ch}}(G,n)$ and $\mathrm{APS}_i^{\mathrm{ch}}:=\mathrm{APS}_i^{\mathrm{ch}}(G,n)$. An allocation is $\lambda$-MMS or $\lambda$-APS if, respectively,
\[
    c_i(A_i)\le\lambda\mms_i^{\mathrm{ch}}
    \quad\text{or}\quad
    c_i(A_i)\le\lambda\mathrm{APS}_i^{\mathrm{ch}}
    \quad\text{for every }i.
\]
It is $\lambda$-PMMS or $\lambda$-PAPS if, respectively,
\[
    c_i(A_i)
    \le
    \lambda\mu_i^{\mathrm{ch}}(A_i\cup A_j,2)
    \quad\text{or}\quad
    c_i(A_i)
    \le
    \lambda\mathrm{APS}_i^{\mathrm{ch}}(A_i\cup A_j,2)
\]
for every $i\ne j$. It is $\lambda$-GMMS (respectively, $\lambda$-GAPS) if, for every nonempty $S\subseteq N$ and every $i\in S$, the first (respectively, second) inequality below holds:
\[
    c_i(A_i)
    \le
    \lambda\mu_i^{\mathrm{ch}}\left(\bigcup_{j\in S}A_j,|S|\right),
    \quad
    c_i(A_i)
    \le
    \lambda\mathrm{APS}_i^{\mathrm{ch}}\left(\bigcup_{j\in S}A_j,|S|\right).
\]

Let $\mathcal E_i^{\mathrm{ch}}$ and $\mathcal E_i^{1,\mathrm{ch}}$ be the sets of allocations in which agent $i$ satisfies exact EFX and exact EF1 for chores, respectively. Both sets are nonempty because giving agent $i$ the empty bundle makes her EFX-satisfied. Define
\begin{equation}
\label{eq:chores-mxs}
    \mathrm{MXS}_i^{\mathrm{ch}}
    :=
    \max_{B\in\mathcal E_i^{\mathrm{ch}}}c_i(B_i),
    \quad
    \mathrm{M1S}_i^{\mathrm{ch}}
    :=
    \max_{B\in\mathcal E_i^{1,\mathrm{ch}}}c_i(B_i).
\end{equation}
These are the cost forms of the minimum EFX share and minimum EF1 share: utilities for chores are $-c_i$, so minimizing utility over the relevant allocations is equivalent to maximizing cost. An allocation is $\lambda$-MXS or $\lambda$-M1S if, respectively,
\[
    c_i(A_i)\le\lambda\mathrm{MXS}_i^{\mathrm{ch}}
    \quad\text{or}\quad
    c_i(A_i)\le\lambda\mathrm{M1S}_i^{\mathrm{ch}}
    \quad\text{for every }i.
\]

For an agent $i$, let
\begin{equation}
\label{eq:chores-common-upper-bound}
    U_i
    :=
    \frac{c_i(G)}{n}
    +
    \left(1-\frac1n\right)\max_{g\in G}c_i(g).
\end{equation}

\begin{lemma}
\label{lem:chores-share-upper-bound}
For every agent $i$,
\[
    \mms_i^{\mathrm{ch}}\le U_i,
    \quad
    \mathrm{APS}_i^{\mathrm{ch}}\le U_i,
    \quad
    \mathrm{MXS}_i^{\mathrm{ch}}\le U_i,
    \quad
    \mathrm{M1S}_i^{\mathrm{ch}}\le U_i.
\]
\end{lemma}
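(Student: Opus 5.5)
We prove the four bounds separately, writing $C:=c_i(G)$ and $q_i:=\max_{g\in G}c_i(g)$, so that $U_i=C/n+(1-1/n)q_i$. For MMS we construct an explicit $n$-partition. For APS we bound the share directly from its definition. For MXS and M1S we argue by contradiction from the envy condition.

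\emph{MMS.} Build an $n$-partition greedily: add chores one at a time to whichever bundle currently has the least cost for agent $i$ (any order works). Consider the bundle $X_t$ with the largest final cost, and let $g$ be the last chore placed in it. Before $g$ was added, $X_t$ had minimum cost among the $n$ bundles. The total cost at that moment was at most $C-c_i(g)$, so $X_t$ had cost at most $(C-c_i(g))/n$. Hence
\[
c_i(X_t)\le \frac{C-c_i(g)}{n}+c_i(g)=\frac{C}{n}+\Bigl(1-\frac1n\Bigr)c_i(g)\le U_i .
\]
Therefore $\mms_i^{\mathrm{ch}}\le U_i$.

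\emph{APS.} Fix any price vector $x\ge0$ with $\sum_g x_g=1$. It suffices to exhibit a set $Y$ with $x(Y)\ge 1/n$ and $c_i(Y)\le U_i$. Apply the same greedy $n$-partition $X_1,\dots,X_n$ from the MMS step, which does not depend on $x$. Since $\sum_t x(X_t)=1$, some bundle satisfies $x(X_t)\ge1/n$. That bundle also has $c_i(X_t)\le U_i$, because every bundle of the greedy partition does. So the inner minimum is at most $U_i$ for every $x$, and thus $\mathrm{APS}_i^{\mathrm{ch}}\le U_i$. The case $H=\varnothing$ is trivial.

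\emph{MXS and M1S.} Every EFX-satisfied allocation for $i$ is EF1-satisfied, so $\mathrm{MXS}_i^{\mathrm{ch}}\le\mathrm{M1S}_i^{\mathrm{ch}}$, and it suffices to treat M1S. Take $B\in\mathcal E_i^{1,\mathrm{ch}}$. If $B_i$ contains no chore, then $c_i(B_i)=0\le U_i$. Otherwise let $g^*\in B_i$ be a chore of maximum cost to $i$ in $B_i$. For each $j\ne i$, EF1 provides $S_{ij}\subseteq B_i$ with $|S_{ij}|\le1$ and $c_i(B_i\setminus S_{ij})\le c_i(B_j)$. Since $c_i(S_{ij})\le c_i(g^*)$, this gives $c_i(B_j)\ge c_i(B_i)-c_i(g^*)$. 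Summing over all agents,
\[
C=c_i(B_i)+\sum_{j\ne i}c_i(B_j)\ge n\,c_i(B_i)-(n-1)c_i(g^*),
\]
and hence $c_i(B_i)\le C/n+(1-1/n)c_i(g^*)\le U_i$. Taking the maximum over $B$ gives the bound.

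The main point needing care is the APS step: the partition must be chosen independently of the prices $x$, so that a single cost bound applies to whichever bundle the pigeonhole argument selects.
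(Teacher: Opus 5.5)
Your proof is correct and follows the paper's argument: the same greedy least-loaded $n$-partition for MMS, a pigeonhole over prices for APS, and the same EF1 summation for MXS and M1S. Two differences are cosmetic: you apply the APS pigeonhole to the greedy partition directly (the paper uses an MMS-optimal partition and gets $\mathrm{APS}_i^{\mathrm{ch}}\le\mms_i^{\mathrm{ch}}$ along the way), and you reduce MXS to M1S by set inclusion where the paper notes that agent $i$ satisfies EF1 in both cases. One small overstatement in your closing remark: the partition need not be independent of $x$, since for each fixed price vector it suffices to exhibit some set of price at least $1/n$ and cost at most $U_i$.
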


\begin{proof}
For MMS, assign the chores one at a time to a currently least costly one of $n$ bundles, according to $c_i$. Consider a bundle of maximum final cost, and let $g$ be the last chore assigned to it. Immediately before $g$ was assigned, that bundle had cost at most the average current cost, and hence at most $(c_i(G)-c_i(g))/n$. Its final cost is therefore at most
\[
    \frac{c_i(G)-c_i(g)}{n}+c_i(g)
    \le U_i.
\]
This gives an $n$-partition whose maximum cost is at most $U_i$, and hence $\mms_i^{\mathrm{ch}}\le U_i$.

For APS, fix a nonnegative price vector $x$ with total price one, and take an MMS-optimal partition $(X_1,\dots,X_n)$. Some bundle $X_j$ has price at least $1/n$, and every bundle has cost at most $\mms_i^{\mathrm{ch}}$. Hence
\[
    \min_{\substack{Y\subseteq G\\ \sum_{g\in Y}x_g\ge1/n}}c_i(Y)
    \le c_i(X_j)
    \le \mms_i^{\mathrm{ch}}.
\]
Taking the maximum over $x$ gives $\mathrm{APS}_i^{\mathrm{ch}}\le\mms_i^{\mathrm{ch}}\le U_i$.

For MXS and M1S, fix $B$ in $\mathcal E_i^{\mathrm{ch}}$ or $\mathcal E_i^{1,\mathrm{ch}}$, and let $q_i:=\max_{g\in B_i}c_i(g)$, with $q_i=0$ when $B_i=\varnothing$. In either case, agent $i$ satisfies EF1, so
\[
    c_i(B_i)-q_i\le c_i(B_j)
    \quad\text{for every }j\ne i.
\]
Consequently,
\[
    c_i(G)
    =c_i(B_i)+\sum_{j\ne i}c_i(B_j)
    \ge n c_i(B_i)-(n-1)q_i.
\]
Since $q_i\le\max_{g\in G}c_i(g)$, rearranging gives $c_i(B_i)\le U_i$. Taking the maximum over the appropriate set of allocations proves both bounds.
\end{proof}

Define $\psi_n(\lambda) := \frac{\lambda n^2}{n+\lambda(n-1)}$.
If $1\le\lambda<n$, then $\psi_n(\lambda)<n$.

\begin{lemma}
\label{lem:chores-share-prop1}
Fix $1\le\lambda<n$. Every $\lambda$-MMS, $\lambda$-APS, $\lambda$-MXS, or $\lambda$-M1S allocation is $\psi_n(\lambda)$-PROP1. The same conclusion holds for $\lambda$-GMMS and $\lambda$-GAPS.
\end{lemma}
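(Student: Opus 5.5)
The plan is to bound agent $i$'s PROP1 residual cost $r_i:=\widehat c_i^1(A_i)$ in two different ways and then take a convex combination that eliminates the largest chore cost. Fix an agent $i$ and write $C:=c_i(G)$ and $M:=\max_{g\in G}c_i(g)$, with $g^*$ a chore attaining $M$.

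\textbf{First bound.} For each of $\lambda$-MMS, $\lambda$-APS, $\lambda$-MXS and $\lambda$-M1S, the assumption together with \Cref{lem:chores-share-upper-bound} gives
\[
r_i\le c_i(A_i)\le\lambda U_i=\lambda\frac{C}{n}+\lambda\Bigl(1-\frac1n\Bigr)M .
\]
For $\lambda$-GMMS and $\lambda$-GAPS, apply the groupwise condition with $S=N$. This recovers exactly $\lambda$-MMS and $\lambda$-APS, since $\bigcup_{j}A_j=G$ and $|S|=n$, so the same bound holds.

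\textbf{Second bound.} I claim that $r_i\le C-M$ in every case, and I split on where $g^*$ lies.
\begin{enumerate}[(a)]
\item If $g^*\in A_i$, remove it from $A_i$. Then $r_i\le c_i(A_i)-M\le C-M$.
\item If $g^*\notin A_i$, then by additivity and nonnegativity $r_i\le c_i(A_i)\le c_i(G\setminus\{g^*\})=C-M$.
\end{enumerate}
This case split is the only point that needs care. A single bound on $c_i(A_i)$ alone cannot remove $M$, and it is the single chore that PROP1 allows us to remove which makes the bound $C-M$ hold uniformly.

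\textbf{Combining.} Set $w:=\frac{n}{n+\lambda(n-1)}\in(0,1]$. This weight is chosen so that the coefficients of $M$ cancel:
\[
w\,\lambda\frac{n-1}{n}=1-w .
\]
Take $w$ times the first bound plus $(1-w)$ times the second. The $M$ terms cancel, and the remaining bound is
\[
r_i\le w\lambda\frac{C}{n}+(1-w)C=\frac{\lambda+\lambda(n-1)}{n+\lambda(n-1)}\,C=\frac{\lambda n}{n+\lambda(n-1)}\,C=\psi_n(\lambda)\frac{C}{n}.
\]
Since $i$ was arbitrary, the allocation is $\psi_n(\lambda)$-PROP1. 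The degenerate case $C=0$ is immediate, and the hypothesis $\lambda<n$ is needed only to ensure $\psi_n(\lambda)<n$, which is what makes the conclusion nontrivial.
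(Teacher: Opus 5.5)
Your proof is correct and follows the paper's argument. You use the same two bounds: $c_i(A_i)\le\lambda U_i$ via \Cref{lem:chores-share-upper-bound} (with $S=N$ for the groupwise notions), and $r_i\le c_i(G)-\max_{g}c_i(g)$. The only difference is cosmetic: you eliminate the maximum chore cost by an explicit convex combination, whereas the paper normalizes by $c_i(G)$ and evaluates the minimum of the increasing and decreasing linear bounds at their crossing point, which is the same computation.
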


\begin{proof}
Fix an agent $i$ and set $r_i:=\widehat c_i^1(A_i)$. By \Cref{lem:chores-share-upper-bound}, each of the $\lambda$-MMS, $\lambda$-APS, $\lambda$-MXS, and $\lambda$-M1S conditions gives
\begin{equation}
\label{eq:chores-own-cost-upper-bound}
    c_i(A_i)\le\lambda U_i.
\end{equation}
We also have
\begin{equation}
\label{eq:chores-residual-global-max}
    r_i
    \le
    c_i(G)-\max_{g\in G}c_i(g).
\end{equation}
Indeed, if a maximum-cost chore belongs to $A_i$, remove it; otherwise, $c_i(A_i)\le c_i(G)-\max_{g\in G}c_i(g)$ even without removing a chore.

If $c_i(G)=0$, the claim is immediate. Otherwise, set $x:=\max_{g\in G}c_i(g)/c_i(G)$. From \eqref{eq:chores-own-cost-upper-bound}--\eqref{eq:chores-residual-global-max},
\[
    \frac{r_i}{c_i(G)}
    \le
    \min\left\{
        \frac{\lambda}{n}(1+(n-1)x),
        1-x
    \right\}.
\]
The first expression increases with $x$, and the second decreases. They are equal at $x=\frac{n-\lambda}{n+\lambda(n-1)}$,
where their common value is $\frac{\lambda n}{n+\lambda(n-1)}$.
Therefore
\[
    r_i
    \le
    \psi_n(\lambda)\frac{c_i(G)}{n}.
\]
Finally, $\lambda$-GMMS implies $\lambda$-MMS, and $\lambda$-GAPS implies $\lambda$-APS, by taking $S=N$.
\end{proof}

Unlike the goods setting, exact MMS need not imply exact PROP1 for chores \cite{garg2026relations}. The quantitative implication in \Cref{lem:chores-share-prop1} is enough here because $\psi_n(\lambda)<n$ whenever $\lambda<n$. Thus no randomized online algorithm guarantees $\lambda$-MMS, $\lambda$-APS, $\lambda$-MXS, $\lambda$-M1S, $\lambda$-GMMS, or $\lambda$-GAPS for any $1\le\lambda<n$. The MMS conclusion agrees with the independent $(n-\eps)$ online lower bound of Song et al.~\cite{song2025onlinemmschores}.

PMMS has a different universal factor, so we treat it separately. We use the following elementary bound: if a set $H$ has total cost $W$ and maximum single-chore cost $p$, then
\begin{equation}
\label{eq:chores-two-partition-bound}
    \mu_i^{\mathrm{ch}}(H,2)\le\frac{W+p}{2}.
\end{equation}
To see this, assign each chore to a currently less costly one of two bundles. If $g$ is the last chore assigned to a final maximum-cost bundle, its cost before receiving $g$ is at most $(W-c_i(g))/2$.

\begin{lemma}
\label{lem:chores-pmms-prop1}
Fix $1\le\lambda<2$.
\begin{enumerate}[(i)]
    \item If $n=2$, every $\lambda$-PMMS allocation is $\frac{4\lambda}{2+\lambda}$-PROP1.
    \item If $n\ge3$, every $\lambda$-PMMS allocation is $\frac{n\lambda}{2}$-PROP1.
\end{enumerate}
In either case, the resulting PROP1 factor is smaller than $n$.
\end{lemma}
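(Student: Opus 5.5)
We fix an agent $i$, set $r_i:=\widehat c_i^1(A_i)$ and $C:=c_i(G)$, and show $r_i\le\theta\, C/n$ for the stated factor $\theta$. The tool is \eqref{eq:chores-two-partition-bound}. Applied to $H=A_i\cup A_j$ for some $j\ne i$, together with $\lambda$-PMMS, it gives
\[
c_i(A_i)\le\frac{\lambda}{2}\bigl(c_i(A_i\cup A_j)+p_{ij}\bigr),
\]
where $p_{ij}$ is the largest cost to $i$ of a chore in $A_i\cup A_j$. The two cases then use this inequality differently.

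\emph{Case $n=2$.} Here $A_i\cup A_j=G$, so the display reads $c_i(A_i)\le\frac{\lambda}{2}(C+M)$, where $M:=\max_{g\in G}c_i(g)$. As in \eqref{eq:chores-residual-global-max}, we also have $r_i\le C-M$. Set $x:=M/C$; the case $C=0$ is trivial. Then
\[
r_i/C\le\min\{\lambda(1+x)/2,\;1-x\}.
\]
The first term increases in $x$ and the second decreases, so the minimum is largest where they are equal. This happens at $x=(2-\lambda)/(2+\lambda)$, where the common value is $2\lambda/(2+\lambda)$. Hence
\[
r_i\le\frac{4\lambda}{2+\lambda}\cdot\frac{C}{2}.
\]
This is the same balancing argument as in \Cref{lem:chores-share-prop1}.

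\emph{Case $n\ge3$.} The target is $r_i\le\lambda C/2$, which is exactly $\frac{n\lambda}{2}$-PROP1. The key choice, and the main obstacle, is which $j$ to compare against. We take $j\ne i$ minimizing $c_i(A_j)$. Let $g^*$ be a chore of cost $p_{ij}$ in $A_i\cup A_j$, and split on where it lies.
\begin{itemize}
\item[(a)] If $g^*\in A_i$, remove it from $A_i$. Then
\[
r_i\le c_i(A_i)-p_{ij}\le\frac{\lambda}{2}c_i(A_i\cup A_j)+\Bigl(\frac{\lambda}{2}-1\Bigr)p_{ij}\le\frac{\lambda}{2}C,
\]
since $\lambda<2$.
\item[(b)] If $g^*\in A_j$, then $p_{ij}\le c_i(A_j)$. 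Since $n\ge3$, there is a third agent $k\notin\{i,j\}$, and minimality of $j$ gives $c_i(A_j)\le c_i(A_k)$. Therefore
\[
c_i(A_i\cup A_j)+p_{ij}\le c_i(A_i)+c_i(A_j)+c_i(A_k)\le C,
\]
and so $r_i\le c_i(A_i)\le\lambda C/2$.
\end{itemize}

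Finally, we check that both factors are below $n$. For $n=2$, $4\lambda/(2+\lambda)<2$ is equivalent to $\lambda<2$. For $n\ge3$, $n\lambda/2<n$ is again equivalent to $\lambda<2$.
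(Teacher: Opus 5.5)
Your proof is correct. Part (i) matches the paper. The paper observes that PMMS coincides with MMS when $n=2$ and invokes \Cref{lem:chores-share-prop1}. Your two-partition bound $(C+M)/2$ is exactly $U_i$ at $n=2$, so you are writing out the same balancing argument.

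For (ii) you take a genuinely different route. The paper's steps are:
\begin{itemize}
\item It applies PMMS against every $j\ne i$, bounding the largest chore in $A_i\cup A_j$ by $q_i+a_j$.
\item This gives $a_j\ge(ds-q_i)/2$ with $d=(2-\lambda)/\lambda$.
\item Summing over at least two other agents gives $\sum_{j\ne i}a_j\ge(ds-q_i)_+$.
\item It then splits on $q_i\ge ds$ versus $q_i<ds$; the first branch needs $1-d\le\lambda/2$, i.e.\ $(\lambda-2)^2\ge0$.
\end{itemize}

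You instead compare only against the cheapest other bundle and split on where the largest chore $g^*$ of $A_i\cup A_j$ lies. If $g^*\in A_i$, removing it absorbs the extra $p_{ij}$ term because $\lambda\le2$. If $g^*\in A_j$, the third agent's bundle pays for it, since $p_{ij}\le c_i(A_j)\le c_i(A_k)$.

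Your argument is shorter and avoids both the auxiliary parameter $d$ and the quadratic inequality. In case (a) it even gives the slightly sharper bound $r_i\le\frac{\lambda}{2}c_i(A_i\cup A_j)$. The paper's version yields the extra fact that every other bundle costs at least $(ds-q_i)/2$ to agent $i$, but the lemma does not need this. Both arguments use $n\ge3$ in the same essential way, through the existence of a further bundle beyond $A_i$ and one comparison bundle.

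One small omission: $g^*$ is undefined when $A_i\cup A_j=\varnothing$. That case is trivial, since then $c_i(A_i)=0$.
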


\begin{proof}
When $n=2$, PMMS is MMS on the full set of chores, and the first statement is \Cref{lem:chores-share-prop1} with $n=2$.

Suppose $n\ge3$. Fix an agent $i$, and define $s:=c_i(A_i)$, $q_i:=\max_{g\in A_i}c_i(g)$, and $a_j:=c_i(A_j)$ for $j \neq i$,
where $q_i=0$ when $A_i=\varnothing$.
Set $d:=(2-\lambda)/\lambda$. For each $j\ne i$, the largest cost of a chore in $A_i\cup A_j$ is at most $q_i+a_j$. By \eqref{eq:chores-two-partition-bound} and $\lambda$-PMMS, $s \le \frac{\lambda}{2}(s+q_i+2a_j)$,
and hence $a_j\ge\frac{ds-q_i}{2}$.
Since every $a_j$ is nonnegative and there are at least two agents other than $i$,
\begin{equation}
\label{eq:chores-pmms-total-others}
    \sum_{j\ne i}a_j\ge (ds-q_i)_+.
\end{equation}

If $q_i\ge ds$, then
\[
    \widehat c_i^1(A_i)
    =s-q_i
    \le(1-d)s
    \le\frac{\lambda}{2}s
    \le\frac{\lambda}{2}c_i(G),
\]
where $1-d\le\lambda/2$ is equivalent to $(\lambda-2)^2\ge0$. If $q_i<ds$, then \eqref{eq:chores-pmms-total-others} gives
\[
    c_i(G)
    \ge
    s+ds-q_i
    =
    \frac{2s}{\lambda}-q_i.
\]
Therefore
\[
    \frac{\lambda}{2}c_i(G)
    \ge
    s-\frac{\lambda q_i}{2}
    \ge
    s-q_i
    =\widehat c_i^1(A_i),
\]
where the second inequality uses $\lambda\le2$. Thus in all cases
\[
    \widehat c_i^1(A_i)
    \le
    \frac{\lambda}{2}c_i(G)
    =
    \frac{n\lambda}{2}\frac{c_i(G)}{n}. \qedhere
\]
\end{proof}

Combining \Cref{thm:chores-propk-impossibility} and \Cref{lem:chores-pmms-prop1}, no randomized online algorithm guarantees $\lambda$-PMMS for any $1\le\lambda<2$. The same conclusion holds for $\lambda$-PAPS because $\mathrm{APS}_i^{\mathrm{ch}}(H,2)\le\mu_i^{\mathrm{ch}}(H,2)$ for every set $H$: for every price vector, one bundle in a two-way min-max partition has price at least $1/2$ and cost at most $\mu_i^{\mathrm{ch}}(H,2)$. Thus $\lambda$-PAPS implies $\lambda$-PMMS.

An allocation is \emph{$\lambda$-pairwise proportional} ($\lambda$-PPROP) if $c_i(A_i)\le\frac{\lambda}{2}c_i(A_i\cup A_j)$ for every $i \neq j$.
Since $\mu_i^{\mathrm{ch}}(A_i\cup A_j,2)\ge c_i(A_i\cup A_j)/2$, $\lambda$-PPROP also implies $\lambda$-PMMS. Hence no randomized online algorithm guarantees $\lambda$-PAPS or $\lambda$-PPROP for any $1\le\lambda<2$.

\begin{corollary}
\label{cor:chores-other-notions}
Against an adaptive adversary, the following statements hold for randomized online algorithms.
\begin{enumerate}[(i)]
    \item No finite multiplicative approximation is guaranteed for EF, EF1, EFX, EEF, EEF1, EEFX, EF$c$, or epistemic EF$c$ for any fixed $c$.
    \item For every $1\le\lambda<n$, no $\lambda$-approximation is guaranteed for PROP, PROP1, PROP$k$ for any fixed $k$, PROPX, PROPm, PROPavg, MMS, APS, MXS, M1S, GMMS, or GAPS.
    \item For every $1\le\lambda<2$, no $\lambda$-approximation is guaranteed for PMMS, PAPS, or PPROP.
\end{enumerate}
The same statements hold for any simultaneous guarantee containing one of these notions. In each case, the adaptive adversary obtains the stated violation with probability one, and the known-horizon and bounded-cost conclusions of \Cref{thm:chores-propk-impossibility} apply.
\end{corollary}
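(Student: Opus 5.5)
The corollary assembles the chore lower bound with the implication lemmas. The engine is \Cref{thm:chores-propk-impossibility}. For every fixed $k\ge1$ and every $\lambda\in[1,n)$, it gives an adaptive adversary that forces a violation of $\lambda$-PROP$k$ with probability one. It does so with a known horizon and costs in $[0,1]$, and it observes only realized recipients.

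The general scheme is as follows. For each listed notion $\mathcal N$ and approximation factor, I exhibit a map showing that the notion implies $\lambda'$-PROP$k$ for some fixed $k$ and some $\lambda'<n$. Then any online algorithm guaranteeing $\mathcal N$, possibly as part of a simultaneous guarantee, would guarantee $\lambda'$-PROP$k$. Running the adversary of \Cref{thm:chores-propk-impossibility} for $\lambda'$ then yields a violation of $\mathcal N$ on every realization. This transfers the probability-one, known-horizon and bounded-cost conclusions verbatim, because the adversary is the same one.

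For (i), I take a finite $\lambda\ge1$ and apply \Cref{lem:chores-envy-propk}. It says that $\lambda$-EF$c$ and epistemic $\lambda$-EF$c$ imply $\theta_n(\lambda)$-PROP$c$, and that $\lambda$-EFX and $\lambda$-EEFX imply $\theta_n(\lambda)$-PROP1, where $\theta_n(\lambda)<n$. EF, EEF and EF1, EEF1 are the cases $c=0,1$. For $c=0$, PROP is stronger than PROP1, so I use the PROP1 bound with $k=1$. For $c\ge1$, I take $k=c$ directly.

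For (ii), PROP$k$ is the theorem itself. PROP, PROPX, PROPm and PROPavg are handled by the chain \eqref{eq:chores-prop-implications}, which reduces each of them to $\lambda$-PROP1. MMS, APS, MXS, M1S, GMMS and GAPS follow from \Cref{lem:chores-share-prop1}, which gives $\psi_n(\lambda)$-PROP1 with $\psi_n(\lambda)<n$ for $1\le\lambda<n$.

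For (iii), I use \Cref{lem:chores-pmms-prop1}, under which PMMS with $1\le\lambda<2$ gives a PROP1 factor below $n$. I combine this with the remarks preceding the corollary: $\lambda$-PAPS implies $\lambda$-PMMS via $\mathrm{APS}_i^{\mathrm{ch}}(H,2)\le\mu_i^{\mathrm{ch}}(H,2)$, and $\lambda$-PPROP implies $\lambda$-PMMS.

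There is no real obstacle here; the only care needed is bookkeeping. First, each derived factor must be strictly below $n$ so that the theorem applies: $\theta_n$ works for all finite $\lambda$, $\psi_n$ needs $\lambda<n$, and the PMMS bound needs $\lambda<2$. Second, the PROP index $k$ must be fixed and at least one. The simultaneous-guarantee clause is immediate, since a simultaneous guarantee implies each of its components.
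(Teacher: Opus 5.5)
Your proposal is correct and is essentially the paper's argument. Each listed notion is reduced to $\lambda'$-PROP$k$ with fixed $k\ge1$ and $\lambda'<n$, using \Cref{lem:chores-envy-propk}, the chain \eqref{eq:chores-prop-implications}, \Cref{lem:chores-share-prop1}, \Cref{lem:chores-pmms-prop1} and the PAPS/PPROP-to-PMMS remarks; the single adversary of \Cref{thm:chores-propk-impossibility} then gives the probability-one, known-horizon, bounded-cost violation (one bookkeeping detail you could add is that each derived factor is also at least $1$, e.g.\ $\theta_n(1)=1$, so it lies in the range $[1,n)$ the theorem requires).
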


The factors $n$ and $2$ in this corollary are universal. Every allocation is $n$-PROP, and hence also $n$-PROP$k$, $n$-PROPX, $n$-PROPm, and $n$-PROPavg. Also, $\mms_i^{\mathrm{ch}}\ge c_i(G)/n$, so every allocation is $n$-MMS. For additive chores, MMS implies MXS \cite{garg2026relations}, equivalently $\mms_i^{\mathrm{ch}}\le\mathrm{MXS}_i^{\mathrm{ch}}$, and $\mathrm{MXS}_i^{\mathrm{ch}}\le\mathrm{M1S}_i^{\mathrm{ch}}$. Thus every allocation is also $n$-MXS and $n$-M1S. The same averaging argument within every group shows that every allocation is $n$-GMMS.

For APS, if $c_i(H)>0$, set $x_g:=c_i(g)/c_i(H)$ in \eqref{eq:chores-aps}; then every set of price at least $1/q$ has cost at least $c_i(H)/q$. The case $c_i(H)=0$ is immediate. Therefore
\[
    \mathrm{APS}_i^{\mathrm{ch}}(H,q)\ge\frac{c_i(H)}{q}.
\]
It follows that every allocation is $n$-APS and $n$-GAPS, and every allocation is $2$-PAPS. Finally,
\[
    \mu_i^{\mathrm{ch}}(A_i\cup A_j,2)
    \ge
    \frac{c_i(A_i\cup A_j)}{2}
    \ge
    \frac{c_i(A_i)}{2},
\]
so every allocation is $2$-PMMS; every allocation is also $2$-PPROP directly from its definition.

\begin{remark}
\label{rem:chores-sparse-costs}
The restriction in \Cref{thm:online-propk-impossibility} that every good is positively valued by at most two agents has no analogue here when $n\ge3$. If every chore has positive cost for at most two agents, then every arriving chore has a zero-cost agent. Assigning each chore to such an agent gives every agent zero cost for her own bundle and satisfies exact proportionality and envy-freeness.
\end{remark}

\section{Omitted Proofs from  \texorpdfstring{\Cref{sec:miv}}{Section~\ref*{sec:miv}}}

\subsection{Comparing the Potential Changes}
\label{app:miv-comparison}

We prove the comparison used in \eqref{eq:miv-total-change-bound}. The argument uses two bounds on each increase: an absolute bound and a bound relative to the current potential term. Together, they give the total-change inequality needed for $19/30$-PROP1.

In this subsection, let $f:(0,\infty)\to(0,\infty)$ be twice continuously differentiable, strictly decreasing and strictly convex, with $f(0+)=+\infty$ and $f(\infty)=0$. Write $a:=-f'>0$. Assume that $f$ and $a$ are log-convex (that is, their logarithms are convex), that $1/a$ is convex, and that
\begin{equation}
\label{eq:miv-function-condition}
    a(x)\le f(x)(f(x)-f(x+1))
    \quad(0<x\le1).
\end{equation}
Appendix~\ref{app:miv-function} verifies these properties for the explicit function used by the algorithm. Condition \eqref{eq:miv-function-condition} relates the rate of change of $f$ to its decrease over an interval of length one, the largest normalized value of a good.

\begin{lemma}
\label{lem:miv-function-comparison}
Fix $\rho\in(0,1)$ and $M>0$. Let $S>\rho$ satisfy
\[
    f(S-\rho)-f(S+1-\rho)=M,
\]
and set $d_*:=f(S-\rho)-f(S)$. For $0\le z\le1$ and $x>\rho z$, if
\[
    f(x-\rho z)-f(x+(1-\rho)z)\le M,
\]
then
\begin{equation}
\label{eq:miv-two-comparisons}
    f(x-\rho z)-f(x)\le d_*,
    \quad
    \frac{f(x-\rho z)-f(x)}{f(x)}\le\frac{d_*}{f(S)}.
\end{equation}
Moreover,
\begin{equation}
\label{eq:miv-loss-ratio}
    \frac{d_*}{M}\le
    \begin{cases}
        \rho f(S),&S\le1,\\
        \rho f(1),&S\ge1.
    \end{cases}
\end{equation}
Finally, $f(x)>1/x$ for $0<x\le1$.
\end{lemma}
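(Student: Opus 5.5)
The plan is to express both differences as integrals of $a=-f'$ and reduce everything to the single worst case $z=1$, $x=S$. Writing $y:=x-\rho z$ for the left endpoint, we have
\[
    f(x-\rho z)-f(x+(1-\rho)z)=\int_y^{y+z}a,
    \qquad
    f(x-\rho z)-f(x)=\int_y^{y+\rho z}a.
\]
So the score is the total decrease of $f$ over a window of length $z$. The increase is the decrease over the first $\rho$-fraction of that window.

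\textbf{Step 1 (absolute bound).} I first treat $z=1$. Since $f$ is strictly convex, $a$ is decreasing, so both integrals decrease in $y$. The hypothesis $\int_y^{y+1}a\le M=\int_{S-\rho}^{S+1-\rho}a$ then forces $y\ge S-\rho$, which gives $\int_y^{y+\rho}a\le d_*$.

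For $z<1$, I compare $(y,z)$ with a window $(y',1)$ of the same score $\int_{y'}^{y'+1}a=\int_y^{y+z}a\le M$, so that $y'\ge S-\rho$. It then suffices to show that, at a fixed score, the front-fraction integral increases with the window length. Equivalently,
\[
    z\mapsto\frac{\int_y^{y+\rho z}a}{\int_y^{y+z}a}
\]
should be nonincreasing as the window grows and slides right while the score is held fixed. I expect this to follow from log-convexity of $a$. Log-convexity makes the normalized profile $t\mapsto a(y+tz)/a(y)$ more front-loaded for longer windows, so the share carried by the first $\rho$-fraction is controlled by the share at $z=1$. Convexity of $1/a$ is the extra hypothesis I would invoke if the sliding of the left endpoint needs to be handled; it controls how $a$ changes under the reparametrization $y\mapsto y'$.

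\textbf{Step 2 (relative bound).} This is the same argument applied to $\log f$ in place of $f$. The ratio $(f(x-\rho z)-f(x))/f(x)$ equals $\exp\bigl(\int_{x-\rho z}^x a/f\bigr)-1$. Since $f$ is log-convex, $a/f=-(\log f)'$ is decreasing, so the same monotonicity-in-$y$ argument and window comparison apply. The inequality at $(S,1)$ gives exactly $d_*/f(S)$. I would check carefully that the constraint is still the score of $f$, not of $\log f$. The extra step is that along the constant-score curve, $f(x)$ is not smaller than $f(S)$ in the relevant sense, which again follows from monotonicity in $y$.

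\textbf{Step 3 (the loss ratio and $f(x)>1/x$).} For $S\le1$, I apply \eqref{eq:miv-function-condition} pointwise for $y\in[S-\rho,S]$, using that $f(y)-f(y+1)$ is decreasing in $y$. This gives
\[
    \int_{S-\rho}^{S}\frac{a}{f}\le\rho M.
\]
I then convert back to $d_*=f(S)\bigl(e^{\int a/f}-1\bigr)$, or, more directly, bound $a(y)\le f(y)M$ and use log-convexity to replace $f(y)$ by $f(S)$ up to the needed factor. Getting $f(S)$ rather than $f(S-\rho)$ is the delicate point, and I expect it to be the main obstacle. My first attempt is to apply \eqref{eq:miv-function-condition} at the right end of the interval together with the monotone-ratio property from Step 1.

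For $S\ge1$, I show that $S\mapsto d_*/M$ is nonincreasing, again by log-convexity of $a$ (ratios of integrals over nested sliding windows). It then suffices to evaluate at $S=1$, which the first case bounds by $\rho f(1)$.

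Finally, $f(x)>1/x$ on $(0,1]$ is not a consequence of the abstract hypotheses. I would verify it directly from the explicit formula for $f$ in Appendix~\ref{app:miv-function}: $f$ agrees with a $1/x$-type branch near zero and is logarithmic on an interval containing $1$, so the check reduces to a one-variable inequality on each piece.
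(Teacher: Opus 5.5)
\textbf{Gap 1: the case $S\le1$ of \eqref{eq:miv-loss-ratio}.} You leave this case open, and neither of your sketched routes can close it. Bounding $\int_{S-\rho}^{S}a/f\le\rho M$ and exponentiating gives only $d_*\le f(S)(e^{\rho M}-1)$. Bounding $a\le Mf$ pointwise gives only $d_*\le M\int_{S-\rho}^{S}f$. Both exceed $\rho f(S)M$, the second because $f\ge f(S)$ on $[S-\rho,S]$. The exact factor $f(S)$ is essential, because the theorem cancels it against the relative bound $d_*/f(S)$.

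The missing idea is to integrate $1/f$ rather than $f$ or $\log f$. Dividing \eqref{eq:miv-function-condition} by $f^2$ gives $(1/f)'=a/f^2\le1-f(t+1)/f(t)$ for $0<t\le1$. Log-convexity of $f$ makes $f(t+1)/f(t)$ nondecreasing, so on $[S-\rho,S]$ the right side is at most $1-f(S+1-\rho)/f(S-\rho)=M/f(S-\rho)$. Hence $1/f(S)-1/f(S-\rho)\le\rho M/f(S-\rho)$, and multiplying by $f(S)f(S-\rho)$ gives exactly $d_*\le\rho f(S)M$.

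The same integration started at $0$, where $1/f(0+)=0$, gives $1/f(x)\le\int_0^x(1-f(t+1)/f(t))\,dt<x$ for $0<x\le1$. So, contrary to your claim, the final assertion does follow from the abstract hypotheses. Checking it only for the explicit $f$ of Appendix~\ref{app:miv-function} would not prove the lemma as stated.

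\textbf{Gap 2: Step~1 leans on the wrong hypothesis.} Write $\ell:=x-\rho z$ and $u:=\ell+z$. You correctly reduce to showing that, at a fixed score $f(\ell)-f(u)$, the front integral $f(\ell)-f(x)$ grows as the window slides right and lengthens. (Your ``equivalently \dots\ nonincreasing'' sentence states the reverse.) Log-convexity of $a$ does not give this. Where $a(v)=v^{-1/2}$, which is log-convex, a fixed score means $\sqrt u=\sqrt\ell+c$. Then $d\sqrt x/d\sqrt\ell=(\sqrt\ell+\rho c)/\sqrt x<1$, since $x-(\sqrt\ell+\rho c)^2=\rho(1-\rho)c^2>0$. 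So the front integral $2(\sqrt x-\sqrt\ell)$ strictly decreases in $\ell$.

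What works is convexity of $1/a$, used directly. Along the fixed-score curve, $u'(\ell)=a(\ell)/a(u)$ and $x'(\ell)=(1-\rho)+\rho a(\ell)/a(u)$. Thus $\frac{d}{d\ell}(f(\ell)-f(x))=-a(\ell)+a(x)x'(\ell)\ge0$ is exactly $1/a(x)\le(1-\rho)/a(\ell)+\rho/a(u)$.

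\textbf{Step~2.} Your claim that $f(x)$ is not smaller than $f(S)$ fails in general (take $x>S$). Instead, use that $x$ increases along the slide, so $f(x)$ decreases and the ratio only grows. Then, at $z=1$, log-convexity of $f$ makes $f(\ell)/f(\ell+\rho)-1$ nonincreasing in $\ell$; at $\ell=S-\rho$ this equals $d_*/f(S)$.

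Your argument for $S\ge1$ matches the paper's.
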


\begin{proof}
The case $z=0$ in \eqref{eq:miv-two-comparisons} is immediate. Otherwise, write $\ell:=x-\rho z$ and $u:=\ell+z$, so $x=(1-\rho)\ell+\rho u$. At fixed $\ell$, increasing $z$ increases both $f(\ell)-f(x)$ and its ratio to $f(x)$.

First suppose $\ell\ge S-\rho$. Increase $z$ to one. Convexity of $f$ implies that $f(\ell)-f(\ell+\rho)$ is nonincreasing in $\ell$. Log-convexity of $f$ implies the same for $f(\ell)/f(\ell+\rho)-1$. Evaluating at $\ell=S-\rho$ proves both bounds.

Now suppose $\ell<S-\rho$. Since $f(\ell)-f(\ell+1)>M$, increase $z$ until $f(\ell)-f(u)=M$. Along the curve defined by this equality,
\[
    u'(\ell)=\frac{a(\ell)}{a(u)},
    \quad
    x'(\ell)=(1-\rho)+\rho\frac{a(\ell)}{a(u)}>0.
\]
Convexity of $1/a$ gives
\[
    \frac1{a(x)}\le\frac{1-\rho}{a(\ell)}+\frac\rho{a(u)},
\]
and therefore
\[
    \frac{d}{d\ell}(f(\ell)-f(x))
    =-a(\ell)+a(x)\left((1-\rho)+\rho\frac{a(\ell)}{a(u)}\right)\ge0.
\]
Meanwhile $f(x)$ decreases, so the ratio $(f(\ell)-f(x))/f(x)$ also increases. Increase $\ell$ to $S-\rho$, where $u=S+1-\rho$ and $x=S$. This proves \eqref{eq:miv-two-comparisons}. The curve stays within $u-\ell\le1$: this difference increases to one because $a$ is decreasing.

For \eqref{eq:miv-loss-ratio}, first let $S\le1$. By \eqref{eq:miv-function-condition},
\begin{align*}
    \frac1{f(S)}-\frac1{f(S-\rho)}
    &=\int_{S-\rho}^{S}\frac{a(t)}{f(t)^2}\,dt\\
    &\le\int_{S-\rho}^{S}\left(1-\frac{f(t+1)}{f(t)}\right)dt\\
    &\le\rho\left(1-\frac{f(S+1-\rho)}{f(S-\rho)}\right).
\end{align*}
The last inequality holds because log-convexity of $f$ makes $f(t+1)/f(t)$ nondecreasing. Multiplying by $f(S)f(S-\rho)$ and dividing by $M$ gives $d_*/M\le\rho f(S)$.

For every $S>\rho$, the ratio in question can be written as
\[
    \frac{d_*}{M}
    =\frac{\int_0^\rho a(S-\rho+t)\,dt}
           {\int_0^1 a(S-\rho+t)\,dt}.
\]
This ratio is nonincreasing in $S$. Indeed, for $h_2>h_1$, log-convexity of $a$ makes $a(h_2+t)/a(h_1+t)$ nondecreasing in $t$. Hence, whenever $0\le t\le\rho\le u\le1$,
\[
    a(h_2+t)a(h_1+u)\le a(h_1+t)a(h_2+u).
\]
Integration over $t\in[0,\rho]$ and $u\in[\rho,1]$, followed by cross-multiplication, proves the assertion. When $S\ge1$, the ratio is thus at most its value at $S=1$, which is at most $\rho f(1)$ by the preceding case.

Finally, integrating \eqref{eq:miv-function-condition} after division by $f(t)^2$ and letting the lower endpoint tend to zero gives
\[
    \frac1{f(x)}\le\int_0^x\left(1-\frac{f(t+1)}{f(t)}\right)dt<x
    \quad(0<x\le1).
\]
The last inequality is strict since the subtracted ratio is positive. This proves the last assertion.
\end{proof}

To obtain \eqref{eq:miv-total-change-bound}, apply \eqref{eq:miv-two-comparisons} with $x=s_i^{t-1}+\rho$ and $z=z_i$. Summing the first bound gives $\sum_i d_i\le nd_*$. Summing the second and using $\Phi^{t-1}\le nf(1)$ gives $\sum_i d_i\le nf(1)d_*/f(S)$. If $S\le1$, combine the latter with the first case of \eqref{eq:miv-loss-ratio}; if $S\ge1$, combine the former with the second case. In either case, $\sum_i d_i/M\le n\rho f(1)$, as required.

\subsection{The Function Giving the \texorpdfstring{$19/30$}{19/30} Factor}
\label{app:miv-function}

We give an explicit $f$ satisfying the properties in Appendix~\ref{app:miv-comparison} and $f(1)<30/19$. Set
\begin{align}
    b&:=\frac{1431}{1000},\quad
    r:=\frac{1517}{6775},\quad
    K:=\frac{1423249}{1300800},\nonumber\\
    c&:=K\left(1+\log\frac{327121}{47720}\right)-\frac{1157}{480},\nonumber\\[1mm]
    f(x)&:=
    \begin{cases}
        \displaystyle \frac1x+c+\frac{x}{16}-\frac{25x^2}{12}
                         +\frac{875x^3}{256},&0<x\le2/5,\\[2mm]
        \displaystyle K\left(1+\log\frac{b-r}{x-r}\right),&2/5\le x\le b,\\[2mm]
        \displaystyle K\exp\left(-\frac{x-b}{b-r}\right),&x\ge b.
    \end{cases}
\label{eq:miv-explicit-function}
\end{align}
The constants make the pieces and their first two derivatives agree at both endpoints. The logarithmic part gives $1/(-f'(x))=(x-r)/K$, an affine function, so the convexity used to compare potential changes holds there with equality. The reciprocal term makes small slacks contribute large potential values; the polynomial terms let \eqref{eq:miv-function-condition} hold without doubling that reciprocal term. The exponential part makes $f(x+1)$ small enough for the same inequality.

\paragraph{Monotonicity and convexity.}
With $a=-f'$ as above,
\[
    a(x)=
    \begin{cases}
        \displaystyle x^{-2}-\frac1{16}+\frac{25x}{6}-\frac{2625x^2}{256},&0<x\le2/5,\\[2mm]
        \displaystyle \frac K{x-r},&2/5\le x\le b,\\[2mm]
        \displaystyle \frac K{b-r}\exp\left(-\frac{x-b}{b-r}\right),&x\ge b.
    \end{cases}
\]
At $x=2/5$, the first two expressions give
\[
    a(2/5)=\frac{1193}{192}=\frac K{2/5-r},
    \quad
    a'(2/5)=-\frac{6775}{192}=-\frac K{(2/5-r)^2}.
\]
The value of the first expression for $f(2/5)$ is $c+1157/480$, which agrees with the second expression by the definition of $c$. Matching at $b$ is immediate. Thus $a$ is continuously differentiable and $f$ is twice continuously differentiable.

For $x\in[0,2/5]$, define
\begin{align*}
    H(x)&:=1-\frac{x^2}{16}+\frac{25x^3}{6}-\frac{2625x^4}{256},\\
    J(x)&:=2-\frac{25x^3}{6}+\frac{2625x^4}{128},\\
    Q(x)&:=6-\frac{2625x^4}{128}.
\end{align*}
Then $a(x)=H(x)/x^2$, $-a'(x)=J(x)/x^3$, and $a''(x)=Q(x)/x^4$ for $0<x<2/5$. We claim that
\[
    \frac{99}{100}\le H(x)\le\frac{11}{10},\quad
    \frac{19}{10}\le J(x)\le\frac{271}{120},\quad
    \frac{219}{40}\le Q(x)\le6.
\]
For the lower bound on $H$, use $25/6-2625x/256\ge0$ and $x^2/16\le1/100$. Differentiating the two polynomials on the left below gives
\[
    x^3\left(\frac{25}{6}-\frac{2625x}{256}\right)
    \le\frac{25}{24}\left(\frac{32}{105}\right)^3<\frac1{10},
    \quad
    x^3\left(\frac{25}{6}-\frac{2625x}{128}\right)
    \le\frac{25}{24}\left(\frac{16}{105}\right)^3<\frac1{10}.
\]
These prove the upper bound on $H$ and the lower bound on $J$. Since
$J'(x)=x^2(-25/2+2625x/32)$ changes sign only from negative to positive, the maximum of $J$ is at an endpoint; its endpoint values are $2$ and $271/120$. The bounds on $Q$ follow directly from $0\le x\le2/5$. Consequently,
\begin{align*}
    HQ-J^2
    &\ge\frac{99}{100}\frac{219}{40}-\left(\frac{271}{120}\right)^2
      =\frac{23053}{72000}>0,\\
    2J^2-HQ
    &\ge2\left(\frac{19}{10}\right)^2-\frac{11}{10}\cdot6
      =\frac{31}{50}>0.
\end{align*}
Thus $a>0$, $a'<0$, $aa''-(a')^2>0$, and $2(a')^2-aa''>0$ on $(0,2/5)$. These are the required monotonicity of $a$, log-convexity of $a$, and convexity of $1/a$. On $[2/5,b]$, $\log a=\log K-\log(x-r)$ is convex and $1/a=(x-r)/K$ is affine. On $[b,\infty)$, $\log a$ is affine and $1/a$ is exponential. The matching of $a$ and $a'$ makes both convexity properties hold on $(0,\infty)$.

Since $f(x)=\int_x^\infty a(t)\,dt$, the function $f$ is positive, strictly decreasing and strictly convex, with the required limits. Log-convexity of $a$ and H\"older's inequality also give log-convexity of $f$: for $x,y>0$ and $\theta\in(0,1)$,
\begin{align*}
    f(\theta x+(1-\theta)y)
    &\le\int_0^\infty a(x+t)^\theta a(y+t)^{1-\theta}\,dt\\
    &\le f(x)^\theta f(y)^{1-\theta}.
\end{align*}

\paragraph{Bounds on the constants.}
The value at one is
\[
    f(1)=K\left(1+\log\frac{327121}{210320}\right).
\]
For $q\ge1$, put $z=(q-1)/(q+1)$. For any positive integer $h$,
\begin{equation}
\label{eq:miv-log-bounds}
    2\sum_{j=0}^{h-1}\frac{z^{2j+1}}{2j+1}
    \le\log q\le
    2\sum_{j=0}^{h-1}\frac{z^{2j+1}}{2j+1}
      +\frac{2z^{2h+1}}{(2h+1)(1-z^2)}.
\end{equation}
This follows by summing the positive series for $\log((1+z)/(1-z))$ and bounding its remaining terms by a geometric series. Apply \eqref{eq:miv-log-bounds} with $h=10$ to $\log(327121/210320)$ and to the two logarithms in
\[
    \log\frac{327121}{47720}=3\log2-\log\frac{381760}{327121}.
\]
Substitution in the definitions of $f(1)$ and $c$ gives the rational bounds
\begin{equation}
\label{eq:miv-constant-bounds}
    \frac{157741237}{10^8}<f(1)<\frac{157741238}{10^8},
    \quad
    \frac{78990175}{10^8}<c<\frac{78990176}{10^8}.
\end{equation}
In particular, $1<f(1)<30/19$.

\paragraph{Proving \eqref{eq:miv-function-condition} for $0<x\le2/5$.}
The terms of order $1/x^2$ cancel in $f(x)(f(x)-f(x+1))-a(x)$. We bound the remaining expression by a rational polynomial and then by a positive cubic.
Let $C_-:=7887/5000$ and $C_+:=78871/50000$, so $C_-<f(1)<C_+$ by \eqref{eq:miv-constant-bounds}. Define
\begin{align*}
    P(x)&:=\frac{7899}{10000}+\frac{x}{16}-\frac{25x^2}{12}+\frac{875x^3}{256},\\
    U(x)&:=C_+-K\sum_{j=1}^{6}\frac{(-1)^{j+1}}{j}
                           \left(\frac{x}{1-r}\right)^j,\\
    R_0(x)&:=2P(x)-U(x)+x(P(x)^2-P(x)U(x)+P'(x)).
\end{align*}
Since $x+1\le7/5<b$ and $0\le x/(1-r)<1$, the even partial sum for $\log(1+z)$ gives
\[
    f(x+1)=f(1)-K\log\left(1+\frac{x}{1-r}\right)\le U(x)\le C_+.
\]
The last inequality follows by pairing consecutive terms in the sum defining $U$. Also, $P(x)\ge7899/10000-1/3>0$. Write $f(x)=x^{-1}+\widetilde P(x)$, where $\widetilde P=P+c-7899/10000\ge P$. Direct expansion gives
\begin{align*}
    x[f(x)(f(x)-f(x+1))-a(x)]
    ={}&2\widetilde P-f(x+1)\\
       &+x(\widetilde P^2-\widetilde P f(x+1)+P').
\end{align*}
This expression decreases with $f(x+1)$. After replacing $f(x+1)$ by $U$, it increases with $\widetilde P\ge P$, since its derivative with respect to $\widetilde P$ is at least $2-xC_+>0$. Therefore
\begin{equation}
\label{eq:miv-small-polynomial}
    x[f(x)(f(x)-f(x+1))-a(x)]\ge R_0(x).
\end{equation}

Expanding $R_0$ and comparing the rational coefficients of the same powers of $x$ gives, for $x\ge0$,
\begin{align*}
    R_0(x)\ge{}&\frac{119}{50000}+\frac{39x}{40}-\frac{813x^2}{100}
      +\frac{86x^3}{5}-\frac{17x^4}{5}+\frac{117x^5}{10}\\
      &-\frac{193x^6}{10}+\frac{153x^7}{10}-\frac{17x^8}{4}-3x^{10}.
\end{align*}
For $0\le x\le2/5$, the last three terms are at least $13x^7$, because
$153/10-(17/4)x-3x^3\ge153/10-(17/4)(2/5)-3(2/5)^3>13$.
Furthermore, the identity
\begin{align*}
    &\frac25-\frac{17x}{5}+\frac{117x^2}{10}-\frac{193x^3}{10}+13x^4\\
    &\quad=\frac{3x}{125}+\left(\frac25-x\right)^2
       \left[\frac{51}{50}-\frac32\left(\frac25-x\right)
                    +13\left(\frac25-x\right)^2\right]\ge0
\end{align*}
holds because the bracket is at least $21/50$. Combining these two bounds yields
\[
    R_0(x)\ge\frac{119}{50000}+\frac{39x}{40}
                        -\frac{813x^2}{100}+\frac{84x^3}{5}>0.
\]
To justify the strict inequality, for $0\le x\le1/5$ the quadratic
$39/40-813x/100+84x^2/5$ is decreasing and is at least its value $21/1000$ at $1/5$. For $1/5\le x\le2/5$, the same cubic equals
\begin{align*}
    &\left(\frac{84x}{5}+\frac{27}{100}\right)\left(x-\frac14\right)^2
       +\frac3{50}\left(x-\frac14\right)+\frac{101}{200000}\\
    &\quad\ge\frac{363}{100}\left(x-\frac14\right)^2
       +\frac3{50}\left(x-\frac14\right)+\frac{101}{200000}
       \ge\frac{6221}{24200000}>0,
\end{align*}
where the last bound follows by completing the square. This proves \eqref{eq:miv-function-condition} on $(0,2/5]$.

\paragraph{Proving \eqref{eq:miv-function-condition} for $2/5\le x\le1$.}
Here $a(x)=K/(x-r)$. Define
\begin{equation}
\label{eq:miv-middle-polynomial}
    R_1(x):=\frac{x-r}{K}f(x)(f(x)-f(x+1))-1.
\end{equation}
We show that $R_1$ is concave on $[2/5,1]$. Its minimum is then at an endpoint, so only two values need to be bounded. Recall that $6/5<b-r<5/4$.

For $2/5\le x\le b-1$, both function values use the logarithmic expression. Differentiating gives
\[
    R_1''(x)=\frac{K}{x-r}
       \left[\frac{2(x+1-r)-f(x)/K}{(x+1-r)^2}
                    -\log\frac{x+1-r}{x-r}\right]<0.
\]
Indeed, $\log q\ge1-1/q$ for $q\ge1$ gives
\[
    \frac{f(x)}K\ge1+\frac1{b-r}>\frac95>x+1-r,
    \quad
    \log\frac{x+1-r}{x-r}>\frac1{x+1-r},
\]
which prove the sign of the displayed derivative.

For $b-1\le x\le1$, $f(x+1)$ uses the exponential expression. In the derivative calculation, write $y=(x-r)/(b-r)$, so $0<y<1$ and $y\ge1-1/(b-r)$. Then
\[
    R_1''(x)=\frac{K}{x-r}\left[
       2\log y+e^{1-1/(b-r)-y}
                 (1-y^2-(2y-y^2)\log y)\right].
\]
The coefficient of the exponential is positive. Since $e^{-u}\le1/(1+u)$ for $u\ge0$, we obtain
\[
    R_1''(x)\le\frac{K}{x-r}
       \frac{1-y^2+(y^2+2/(b-r))\log y}{y+1/(b-r)}<0.
\]
For the final sign, the first term of \eqref{eq:miv-log-bounds} gives
\[
    -\log y\ge\frac{2(1-y)}{1+y}
       >\frac{1-y^2}{y^2+2/(b-r)}.
\]
The strict inequality follows from
$2(y^2+2/(b-r))-(1+y)^2=(1-y)^2+4/(b-r)-2>0$.
The pieces of $f$ have matching derivatives, so $R_1$ is concave on the whole interval.

For the endpoint at $2/5$, \eqref{eq:miv-constant-bounds} and $\log q\le q-1$ give
\[
    f(2/5)=c+\frac{1157}{480}>\frac{16}{5},
    \quad
    f(7/5)\le K\frac{b-r}{7/5-r}<\frac98.
\]
For the endpoint at one, use $f(1)>C_-$ and the even Taylor polynomial for the exponential:
\[
    f(2)\le K\sum_{j=0}^{4}\frac1{j!}
                 \left(\frac{b-2}{b-r}\right)^j<\frac{6831}{10000}.
\]
Since $u(u-v)$ increases with $u$ whenever $2u>v$, substitution gives
\begin{align*}
    R_1(2/5)&>\frac{2/5-r}{K}\,\frac{16}{5}
                      \left(\frac{16}{5}-\frac98\right)-1>\frac1{20},\\
    R_1(1)&>\frac{1-r}{K}\,C_-
                      \left(C_--\frac{6831}{10000}\right)-1>\frac1{2000}.
\end{align*}
By concavity, $R_1(x)>0$ throughout $[2/5,1]$. This proves \eqref{eq:miv-function-condition} on the remaining interval. All hypotheses of \Cref{lem:miv-function-comparison} have now been established by algebra and calculus, completing the proof of the $19/30$-PROP1 guarantee.

\subsection{The Guarantee When Few Agents Value Each Good}
\label{app:miv-reciprocal}

We prove the $n/(n+\kappa)$ guarantee used in \Cref{thm:miv-constant}(ii). The reciprocal rule assigns each good to a recipient minimizing \eqref{eq:reciprocal-potential}.

\begin{proof}
Set $\rho:=1/(n+\kappa)$ and use $s_i^t$ and $\Psi^t$ from \eqref{eq:miv-margin} and \eqref{eq:reciprocal-potential}. Initially, $s_i^0=1-\rho$ and $\Psi^0\le n/(1-\rho)$.

Assume inductively that $s_i^{t-1}>0$ for every agent with $p_i>0$ and $\Psi^{t-1}\le\frac{n}{1-\rho}$.
This holds at time zero. Since each reciprocal term is at most the sum,
\begin{equation}
\label{eq:margin-lower-bound}
    s_i^{t-1}
    \ge
    \frac{1}{\Psi^{t-1}}
    \ge
    \frac{1-\rho}{n}
\end{equation}
for every agent with $p_i>0$.

For the current good, use $z_i$ from \eqref{eq:miv-normalized-increment} and let $Q_t:=\{i:z_i>0\}$. Only agents in $Q_t$ have changing slacks, and $|Q_t|\le\kappa$. Their updates are given by \eqref{eq:miv-slack-update}.

Equation \eqref{eq:margin-lower-bound} and $z_i\le1$ give us
\begin{equation*}
    s_i^{t-1}-\rho z_i
    \ge
    \frac{1-\rho}{n}-\rho
    =
    \frac{\kappa-1}{n(n+\kappa)}
    >0.
\end{equation*}
Thus, every possible assignment keeps all $s_i^t$ positive.

We now show that some recipient does not increase the potential. Consider the following probability distribution only for this calculation. For each $i\in Q_t$, set $\pi_i
    :=  \rho (1+(1-\rho) \cdot \frac{z_i}{s_i^{t-1}} )$.
Algebraic manipulation gives us
\begin{equation}
\label{eq:reciprocal-identity}
    \frac{1-\pi_i}{s_i^{t-1}-\rho z_i}
    +
    \frac{\pi_i}{s_i^{t-1}+(1-\rho)z_i}
    =
    \frac{1}{s_i^{t-1}}.
\end{equation}
Moreover,
\begin{align*}
\sum_{i\in Q_t}\pi_i
&=
\rho|Q_t|
+
\rho(1-\rho)\sum_{i\in Q_t}\frac{z_i}{s_i^{t-1}} \le
\rho\kappa+\rho(1-\rho)\Psi^{t-1} \le
\rho\kappa+\rho n
=1.
\end{align*}
Assign probability $\pi_i$ to each $i\in Q_t$ and distribute the remaining probability arbitrarily. For each $i\in Q_t$, the probability that $i$ receives the good is then at least $\pi_i$. Since $s_i^{t-1}+(1-\rho)z_i > s_i^{t-1}-\rho z_i >0$, the value of $1/s_i^t$ is smaller when $i$ receives the good than when it does not. Increasing the probability that $i$ receives the good above $\pi_i$ can therefore only decrease its expectation. By \eqref{eq:reciprocal-identity}, we get that
\[
    \mathbb E\left[\frac{1}{s_i^t}\right]
    \le
    \frac{1}{s_i^{t-1}}
    \quad\text{for every }i\in Q_t.
\]
For $i$ with $p_i>0$ and $i\notin Q_t$, $s_i^t=s_i^{t-1}$. Therefore, $\mathbb E[\Psi^t]\le\Psi^{t-1}$ under this distribution over recipients. At least one recipient has $\Psi^t\le\Psi^{t-1}$, and the allocation rule chooses a minimizing recipient. By induction,
\begin{equation}
\label{eq:reciprocal-nonincreasing}
    \Psi^t\le\Psi^0\le\frac{n}{1-\rho}
    \quad \text{for all } t=0,\dots,m.
\end{equation}
In particular, $s_i^m>0$ for every agent with $p_i>0$. Since $D_i^m=v_i(G)$, we get that
\[
    p_i+B_i^m
    >
    \rho v_i(G)
    =
    \frac{n}{n+\kappa}\frac{v_i(G)}{n}.
\]
As explained above, $p_i+B_i^m$ is attainable from the final bundle after adding at most one good. Thus every agent satisfies $n/(n+\kappa)$-PROP1.

The expectation bound above is obtained after fixing the realized history and the current good. It therefore remains valid when the adversary chooses later goods after seeing earlier recipients.
\end{proof}

\subsection{The Potential Decrease for the Simultaneous Guarantee}
\label{app:miv-constant-envy-step}

We prove the assertion used in \Cref{thm:miv-constant-envy}: if the current slacks are positive and $\Omega^{t-1}<1$, some recipient satisfies $\Omega^t\le\Omega^{t-1}$. Throughout, use $f$ from Appendix~\ref{app:miv-function}, $a=-f'$ as in Appendix~\ref{app:miv-comparison}, and $\rho,C,\eta$ from \Cref{subsec:miv-constant-envy}. In particular, $0<\rho<1/3$ and $0<\eta\le10^{-6}$.

\paragraph{Two estimates for one agent.}
For $0\le z\le1$ and $x>\rho z$, write
\[
    \ell:=x-\rho z,\quad u:=x+(1-\rho)z,\quad
    d:=f(\ell)-f(x),\quad \Delta:=f(\ell)-f(u).
\]
Thus $d$ is the increase when the agent does not receive the good, and $\Delta$ is the decrease obtained by changing that decision. Define
\[
    Q:=\Delta-2\eta(1+f(u)),\quad
    \tau:=f(1-\rho)-f(2-\rho).
\]
The value $\tau$ is $\Delta$ at $x=z=1$. The subtraction in $Q$ accounts for the increase in envy toward a recipient: $e^{\eta\widehat v_i(g_t)}-1\le2\eta$, and $1+f(u)$ is the recipient's multiplier after receiving the good. We will show
\begin{align}
    Q\le\tau
    &\quad\Longrightarrow\quad d\le\rho C\Delta,
    \label{eq:constant-envy-small-change}\\
    M\ge\tau,\quad Q\le M
    &\quad\Longrightarrow\quad
    d\le\frac{2001}{2000}\rho f(x)M.
    \label{eq:constant-envy-large-change}
\end{align}
The first estimate will justify uniform averaging. The second allows us to sum the increases with the weights contributed by the envy potential.

First, \Cref{lem:miv-function-comparison} gives
\begin{equation}
\label{eq:constant-envy-relative-comparison}
    \frac{d}{\rho f(x)}\le\max\{\Delta,\tau\}.
\end{equation}
Indeed, apply \eqref{eq:miv-two-comparisons} with $M=\max\{\Delta,\tau\}$; its corresponding $S$ is at most one, so \eqref{eq:miv-loss-ratio} proves the displayed bound. A second consequence is
\begin{equation}
\label{eq:constant-envy-ratio-comparison}
    \left.
    \begin{gathered}
        \rho<S\le1,\\
        \Delta\le f(S-\rho)-f(S+1-\rho)
    \end{gathered}
    \right\}
    \quad\Longrightarrow\quad d\le\rho f(S)\Delta.
\end{equation}
For $z>0$, first apply \eqref{eq:miv-two-comparisons} with $M=\Delta$. Its corresponding value of $S$ is no smaller than the one in \eqref{eq:constant-envy-ratio-comparison}. The ratio
\[
    \frac{f(S-\rho)-f(S)}{f(S-\rho)-f(S+1-\rho)}
\]
is nonincreasing on $S>\rho$, as proved in Appendix~\ref{app:miv-comparison}. Combining this monotonicity with \eqref{eq:miv-loss-ratio} proves \eqref{eq:constant-envy-ratio-comparison}. If $z=0$, then $d=\Delta=0$, and all the required estimates are immediate.

We use the following elementary bounds for the explicit function:
\begin{equation}
\label{eq:constant-envy-function-bounds}
    \begin{gathered}
        f(2499/2500)<C,\quad 1+f(1/100)<102,\\
        f(1)-f(2)>4/5,\quad a(1)-a(2)>4/5.
    \end{gathered}
\end{equation}
For the first bound, monotonicity of $a$ and the logarithmic piece of $f$ give
\[
    f(2499/2500)
    \le f(1)+\frac{K}{2500(2499/2500-r)}<\frac{789}{500},
\]
using the upper bound on $f(1)$ in \eqref{eq:miv-constant-bounds}. The second follows from the first piece of $f$ and $c<4/5$. For the last two, the endpoint estimate in Appendix~\ref{app:miv-function} gives $f(2)<7/10$. Now use $f(1)>3/2$, $a(1)=K/(1-r)>7/5$, and $a(2)=f(2)/(b-r)$ with $b-r>6/5$. All constants are those in \eqref{eq:miv-explicit-function}. Since $a$ is decreasing and $1-\rho>2/5$, these bounds also imply
\begin{equation}
\label{eq:constant-envy-threshold}
    \frac45<f(1)-f(2)\le\tau<f(2/5)<4.
\end{equation}
Finally, direct substitution into the first pieces of $f$ and $a$ gives
\begin{equation}
\label{eq:constant-envy-near-zero}
    f(v)\le\frac2v,\quad
    \frac{99}{100v^2}\le a(v)\le\frac{101}{100v^2}
    \quad(0<v\le1/100).
\end{equation}
For the derivative bounds, multiply by $v^2$ and bound the monomials of $1-v^2/16+25v^3/6-2625v^4/256$ on $[0,1/100]$; the bound on $f$ uses $c<4/5$.

Suppose first that $u\ge1/100$, so $1+f(u)<102$. The positive log-convex function $a$ is convex, and hence $a(v)-a(v+1)$ is nonincreasing. Consequently,
\begin{align*}
    &f(2499/2500-\rho)-f(2499/2500+1-\rho)-\tau\\
    &\quad=\int_{2499/2500}^1
       (a(v-\rho)-a(v+1-\rho))\,dv\\
    &\quad\ge\frac{a(1)-a(2)}{2500}
       >\frac{4}{12500}>204\eta.
\end{align*}
If $Q\le\tau$, then $\Delta\le\tau+204\eta$, so \eqref{eq:constant-envy-ratio-comparison} at $S=2499/2500$, together with $f(2499/2500)<C$, proves \eqref{eq:constant-envy-small-change}.
If $Q\le M$ and $M\ge\tau$, then \eqref{eq:constant-envy-relative-comparison} gives
\[
    \frac{d}{\rho f(x)}\le M+204\eta
       \le\frac{2001}{2000}M,
\]
where the last inequality uses $M>4/5$ and $\eta\le10^{-6}$. This proves \eqref{eq:constant-envy-large-change}.

It remains to consider $0<u<1/100$ and $z>0$. Since $x=\ell+\rho(u-\ell)$, \eqref{eq:constant-envy-near-zero} gives
\[
    \frac d\Delta
    \le\frac{101}{99}
       \frac{\int_\ell^x v^{-2}\,dv}{\int_\ell^u v^{-2}\,dv}
    =\frac{101}{99}\frac{\rho u}{x}.
\]
Using $f(x)>1/x$ and $u(1+f(u))<3$, if $Q\le M$ and $M\ge\tau$ we obtain
\[
    \frac{d}{\rho f(x)}
    \le2u\Delta
    \le2uM+4\eta u(1+f(u))
    <\frac{M}{50}+12\eta<M.
\]
This is stronger than \eqref{eq:constant-envy-large-change}.
If instead $Q\le\tau$, then \eqref{eq:constant-envy-threshold} gives
\[
    \frac{99}{100}\left(\frac u\ell-1\right)
    \le u\Delta
    \le4u+2\eta u(1+f(u))
    <\frac4{100}+6\eta<\frac7{100}.
\]
Thus $u/\ell<11/10$. By monotonicity of $a$ and \eqref{eq:constant-envy-near-zero},
\[
    \frac d{\rho\Delta}
    \le\frac{a(\ell)}{a(u)}
    \le\frac{101}{99}\left(\frac u\ell\right)^2
    <\frac{101}{99}\left(\frac{11}{10}\right)^2
    <\frac32<C.
\]
This proves \eqref{eq:constant-envy-small-change} and completes both estimates.

\paragraph{Comparing recipients.}
Fix the realized history and current good, and suppose the current slacks are positive and $\Omega^{t-1}<1$. For each agent with $p_i>0$, apply the preceding estimates with $x=s_i^{t-1}+\rho$ and $z=z_i$, adding subscript $i$ to $\ell,u,d,\Delta,Q$. Then $\Delta_i$ is exactly \eqref{eq:miv-improved-score}. For $p_i=0$, set $f(\ell_i)=f(u_i)=d_i=\Delta_i=0$ and $Q_i=-2\eta$, consistently with the convention in \Cref{subsec:miv-constant-envy}. In the envy calculation, we always use the actual normalized value $\widehat v_i(g_t)$, not $z_i$. In particular, a first maximum-valued good has $z_i=0$ but $\widehat v_i(g_t)=1$.

\emph{Case 1: $Q_i\le\tau$ for every agent.}
Since $n\rho C=(19/30)C<1$, \eqref{eq:constant-envy-small-change} implies
\[
    d_i-\frac{\Delta_i}{n}
    \le\left(\rho C-\frac1n\right)\Delta_i\le0.
\]
Under the uniform distribution over all $n$ recipients, it follows that $\mathbb E[\Phi^t]\le\Phi^{t-1}$. For distinct agents $i,j$, direct expansion gives
\begin{align*}
    &\frac{\mathbb E\!\left[e^{\eta E_{ij}^t}
                 (1+f(s_j^t+\rho))\right]}{e^{\eta E_{ij}^{t-1}}}\\
    &\quad=1+f(s_j^{t-1}+\rho)+d_j-\frac{\Delta_j}{n}\\
    &\quad\quad+\frac{e^{\eta\widehat v_i(g_t)}-1}{n}(1+f(u_j))
          +\frac{e^{-\eta\widehat v_i(g_t)}-1}{n}(1+f(\ell_j))\\
    &\quad\le\left(1+\frac{2\eta^2}{n}\right)
                   (1+f(s_j^{t-1}+\rho)).
\end{align*}
For the last inequality, use $f(\ell_j)\ge f(u_j)$, $f(u_j)\le f(s_j^{t-1}+\rho)$, and
$e^h+e^{-h}-2\le2\eta^2$ for $0\le h\le\eta\le1$. The negative coefficient of $1+f(\ell_j)$ lets us replace it by $1+f(u_j)$ in an upper bound. These observations also apply when $p_j=0$.
Summing over ordered pairs gives $\mathbb E[F_\eta^t]\le A_\eta F_\eta^{t-1}$, and therefore
\[
    \mathbb E[\Omega^t]
    \le\frac{\Phi^{t-1}}{nC}
       +A_\eta^{m-t+1}e^{-\eta L}F_\eta^{t-1}
    =\Omega^{t-1}.
\]
Some recipient consequently does not increase $\Omega$.

\emph{Case 2: $Q_i>\tau$ for some agent.}
Use the coefficient of the envy potential at the \emph{new} time $t$, and define
\[
    \lambda:=nC A_\eta^{m-t}e^{-\eta L},\quad
    w_j:=1+\lambda\sum_{i\ne j}e^{\eta E_{ij}^{t-1}}\ge1.
\]
With the exponential factors held fixed, $w_j$ is the coefficient of agent $j$'s PROP1 term in $\Phi+\lambda F_\eta$. Moreover,
\begin{equation}
\label{eq:constant-envy-weighted-sum}
    \sum_j w_jf(s_j^{t-1}+\rho)
    \le\Phi^{t-1}+\lambda F_\eta^{t-1}
    \le nC\Omega^{t-1}<nC.
\end{equation}
The middle inequality uses the larger coefficient $A_\eta\lambda$ at time $t-1$.
Choose a recipient $j$ maximizing $w_jQ_j$ and put $M:=w_jQ_j$. Since some $Q_i>\tau$ and every $w_i\ge1$, we have $M>\tau>0$ and $Q_i\le M$ for every agent. Applying \eqref{eq:constant-envy-large-change} with this common $M$ and then \eqref{eq:constant-envy-weighted-sum} gives
\begin{equation}
\label{eq:constant-envy-total-increase}
    \sum_i w_i d_i
    \le\frac{2001}{2000}\rho M\sum_i w_if(s_i^{t-1}+\rho)
    <\frac{2001}{2000}\frac{19}{30}CM<M.
\end{equation}
Here $(2001/2000)(19/30)C=9998997/10000000<1$.

Assign the good to this $j$. Updating the PROP1 terms while holding the exponentials fixed changes $\Phi+\lambda F_\eta$ by exactly $\sum_iw_id_i-w_j\Delta_j$. Updating the exponentials decreases the terms for agent $j$'s envy toward the other agents. The only positive changes are in the terms for other agents' envy toward $j$, and their sum is at most
\[
    \lambda\sum_{i\ne j}e^{\eta E_{ij}^{t-1}}
       (e^{\eta\widehat v_i(g_t)}-1)(1+f(u_j))
    \le2\eta(w_j-1)(1+f(u_j)).
\]
Consequently,
\begin{align*}
    &(\Phi^t+\lambda F_\eta^t)
        -(\Phi^{t-1}+\lambda F_\eta^{t-1})\\
    &\quad\le\sum_iw_id_i-w_j\Delta_j
                     +2\eta(w_j-1)(1+f(u_j))\\
    &\quad\le\sum_iw_id_i-w_jQ_j<0,
\end{align*}
by \eqref{eq:constant-envy-total-increase}. Dividing by $nC$ and again using the larger envy coefficient at time $t-1$ yields $\Omega^t\le\Omega^{t-1}$.

In both cases a nonincreasing recipient exists. The algorithm chooses a recipient minimizing $\Omega^t$, so it has the same property. This proves the required assertion without requiring a single distribution to preserve every PROP1 term at every time.

\section{A Simpler Simultaneous Guarantee with MIV Predictions} \label{app:miv-combined}

The rule in \Cref{subsec:miv-constant-envy} uses envy terms that depend on the PROP1 slacks. Here we give a simpler rule with separate PROP1 and envy potentials, both of which satisfy a uniform averaging inequality. Its PROP1 factor is smaller, but its normalized envy bound has the explicit constant $11$.
For envy, the form of $\Gamma^t$ and its uniform averaging calculation
follow the exponential potential method of Benad\`e et al.~\cite{benade2018envyvanish}. We choose its parameters so that the initial envy potential is below $1/2$, and then add it to the new PROP1 potential. Their common uniform averaging inequality gives both
guarantees for the same allocation.

Set
\begin{equation}
\label{eq:exp-parameters}
    q:=\frac{2n^2-1}{n-1},
    \quad
    \zeta:=\log q,
    \quad
    \psi:=
    \log\left(
        \frac{(n-1)e^{\zeta/n}+e^{-(n-1)\zeta/n}}{n}
    \right),
\end{equation}
and define
\begin{equation}
\label{eq:exp-alpha}
    \alpha_n
    :=
    1-\frac{n\psi}{\zeta}
    =
    \frac{n}{\log q}\log\left(\frac{q}{2n}\right).
\end{equation}
For every agent with $p_i>0$, let
\begin{equation}
\label{eq:exp-miv-potential}
    \phi_i^t
    :=
    \frac{1}{q}
    \exp\left(
        \zeta\left(\frac{D_i^t}{np_i}-\frac{B_i^t}{p_i}\right)
        -\psi\frac{D_i^t}{p_i}
    \right),
    \quad
    \Phi^t:=\sum_{i:p_i>0}\phi_i^t.
\end{equation}
For a possible recipient $a$, write $\Phi^t(a)$ for the value after assigning the current good to $a$.

\begin{lemma}
\label{lem:exp-miv-potential}
The potential in \eqref{eq:exp-miv-potential} has the following properties:
\begin{enumerate}[(i)]
    \item for every realized history and current good, $\frac1n\sum_{a\in N}\Phi^t(a) \le \Phi^{t-1}$;
    \item $\Phi^0\le1/2$;
    \item if $\phi_i^m\le1$, then agent $i$ satisfies $\alpha_n$-PROP1; and
    \item $0<\alpha_n<1$, and $\frac{1}{2\log(4n)}
    \le
    \alpha_n
    \le
    \frac{2}{\log(2n)}$.
\end{enumerate}
\end{lemma}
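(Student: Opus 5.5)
The plan is to treat the four parts separately. All of them come down to how a single factor $\phi_i^t$ changes when one good arrives. Fix an agent $i$ with $p_i>0$, and let $z_i$ be as in \eqref{eq:miv-normalized-increment}.

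\textbf{Effect of one good.} By \eqref{eq:miv-accounting}, $D_i^t/p_i$ increases by exactly $z_i$. At $t=t_i^*$ the predicted value $p_i$ is replaced by the actual value $p_i$, so $D_i^t$ is unchanged, matching $z_i=0$. Likewise, $B_i^t/p_i$ increases by $z_i\mathbf 1[i=a]$ when $a$ is the recipient. Hence assigning the good to $a$ multiplies $\phi_i$ by
\[
  \exp\bigl(z_i(\zeta/n-\psi)-\zeta z_i\mathbf 1[i=a]\bigr).
\]

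\textbf{Part (i).} Averaging over the $n$ possible recipients, agent $i$ is the recipient for exactly one of them. So the average multiplier is $e^{-\psi z}e^{h(z)}$, where $z=z_i$ and
\[
  h(z):=\log\Bigl(\tfrac1n\bigl((n-1)e^{z\zeta/n}+e^{-z(n-1)\zeta/n}\bigr)\Bigr).
\]
The function $h$ is a cumulant generating function (a log-sum-exp of linear functions), so it is convex in $z$. It satisfies $h(0)=0$ and, by \eqref{eq:exp-parameters}, $h(1)=\psi$. Convexity then gives $h(z)\le z\psi$ for $z\in[0,1]$. Therefore each agent's averaged term is at most its current value, and summing over agents gives (i).

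\textbf{Part (ii).} At time $0$ we have $D_i^0=p_i$ and $B_i^0=0$, so $\phi_i^0=q^{-1}e^{\zeta/n-\psi}$. Using $e^{\zeta}=q$, we get
\[
  e^{\psi}=\frac{q^{1/n}\bigl(n-1+1/q\bigr)}{n}.
\]
By the choice of $q$, $(n-1)q+1=2n^2$, so $n-1+1/q=2n^2/q$ and hence $e^{\psi}=2nq^{1/n-1}$. Consequently $\phi_i^0=1/(2n)$, and $\Phi^0\le n\cdot\frac1{2n}=1/2$. The same identity gives the closed form $\psi=\zeta/n+\log(2n/q)$, which proves the second expression for $\alpha_n$ in \eqref{eq:exp-alpha}.

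\textbf{Part (iii).} Taking logarithms, $\phi_i^m\le1$ reads
\[
  \zeta\Bigl(\tfrac{D}{np_i}-\tfrac{B}{p_i}\Bigr)-\psi\tfrac{D}{p_i}\le\zeta,
\]
with $D=D_i^m=v_i(G)$ and $B=B_i^m$. Dividing by $\zeta$ and rearranging gives
\[
  p_i+B_i^m\ge\Bigl(1-\frac{n\psi}{\zeta}\Bigr)\frac{v_i(G)}{n}=\alpha_n\frac{v_i(G)}{n}.
\]
As explained after \eqref{eq:miv-accounting}, the value $p_i+B_i^m$ is attained by $A_i$ together with at most one added good. Hence agent $i$ satisfies $\alpha_n$-PROP1.

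\textbf{Part (iv).} Write
\[
  \frac{q}{2n}=1+\frac{2n-1}{2n(n-1)}=:1+\epsilon_n,
\]
so that $\alpha_n=n\log(1+\epsilon_n)/\log q$. We have $\epsilon_n>0$, which gives $\alpha_n>0$. For the remaining bounds we use the elementary estimates below.
\begin{itemize}
  \item $1/n\le\epsilon_n\le 3/(2n)$ for $n\ge2$. This needs care at $n=2$, where $\epsilon_n=3/4$.
  \item $\log(1+\epsilon)\ge\epsilon/(1+\epsilon)$ and $\log(1+\epsilon)\le\epsilon$.
  \item $2n<q\le 4n$, because $q=2n(1+\epsilon_n)$ with $\epsilon_n\le1$. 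Hence $\log(2n)<\log q\le\log(4n)$.
\end{itemize}
These give $\alpha_n\ge\frac{1}{(1+\epsilon_n)\log(4n)}\ge\frac{1}{2\log(4n)}$ and $\alpha_n\le\frac{n\epsilon_n}{\log(2n)}\le\frac{3/2}{\log(2n)}\le\frac{2}{\log(2n)}$.

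The upper bound $\alpha_n<1$ is the only delicate point, for example at $n=2$. There I would check the closed form directly: $n\log(1+\epsilon_n)<\log q$ is equivalent to $(1+\epsilon_n)^n<2n(1+\epsilon_n)$. This holds because $(1+\epsilon_n)^{n-1}\le e^{(n-1)\epsilon_n}\le e^{3/2}<2n$ for $n\ge3$, and $n=2$ is checked by hand.

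I expect the main obstacle to be this last bookkeeping in (iv), making the constants work uniformly down to $n=2$. Part (i) is the conceptual step, but convexity of $h$ settles it directly.
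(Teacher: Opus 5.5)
Your proposal is correct and follows the paper's proof essentially step for step: the same one-good multiplier, convexity of $h$ with $h(0)=0$ and $h(1)=\psi$ for (i), the identity $(n-1)q+1=2n^2$ giving $\phi_i^0=1/(2n)$ for (ii), the same rearrangement for (iii), and the same substitution $q/(2n)=1+\epsilon_n$ with $\epsilon_n/(1+\epsilon_n)\le\log(1+\epsilon_n)\le\epsilon_n$ for (iv). The step you call delicate is not: since $(n-1)\log(1+\epsilon_n)<(n-1)\epsilon_n=\frac{2n-1}{2n}<1<\log(2n)$ for every $n\ge2$, the bound $\alpha_n<1$ holds uniformly without a separate check at $n=2$, and this is exactly how the paper argues.
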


\begin{proof}
Fix a time $t$ and an agent $i$ with $p_i>0$. If $t=t_i^*$, then $D_i$ and $B_i$ do not change, so $\phi_i^t(a)=\phi_i^{t-1}$ for every recipient $a$.

Suppose $t\ne t_i^*$ and let $z_i:=v_i(g_t)/p_i\in[0,1]$. If $i$ receives the good, then
\[
    \phi_i^t(i)
    =
    \phi_i^{t-1}
    \exp\left(-\frac{n-1}{n}\zeta z_i-\psi z_i\right).
\]
If another agent receives it, then
\[
    \phi_i^t(a)
    =
    \phi_i^{t-1}
    \exp\left(\frac{\zeta z_i}{n}-\psi z_i\right).
\]
Therefore
\[
\frac1n\sum_{a\in N}\phi_i^t(a)
=
\phi_i^{t-1}e^{-\psi z_i}
\frac{e^{-(n-1)\zeta z_i/n}+(n-1)e^{\zeta z_i/n}}{n}.
\]
Let
\[
    H(z):=
    \log\left(
        \frac{e^{-(n-1)\zeta z/n}+(n-1)e^{\zeta z/n}}{n}
    \right).
\]
The function $H$ is convex, $H(0)=0$, and $H(1)=\psi$. Hence $H(z)\le z\psi$ for $z\in[0,1]$. The average candidate value of $\phi_i$ is therefore at most its current value. Summing over agents proves the first statement.

The identity $(n-1)q+1=2n^2$ gives $e^\psi=2nq^{1/n-1}$.
At time zero, $D_i^0=p_i$ and $B_i^0=0$, so $\phi_i^0
    = \frac1q e^{\zeta/n-\psi}
    = \frac{1}{2n}$.
There are at most $n$ agents with positive $p_i$, and therefore $\Phi^0\le1/2$.

At the final time, $D_i^m=v_i(G)$. The inequality $\phi_i^m\le1$ is equivalent to
\[
    p_i+B_i^m
    \ge
    \left(\frac1n-\frac{\psi}{\zeta}\right)v_i(G)
    =
    \frac{\alpha_n}{n}v_i(G).
\]
The interpretation of $p_i+B_i^m$ following \eqref{eq:miv-accounting} proves $\alpha_n$-PROP1.

For the bounds, let $\frac{q}{2n}=1+z$ and $z:=\frac{2n-1}{2n(n-1)}$.
Then
\[
    \alpha_n=\frac{n\log(1+z)}{\log q}.
\]
Moreover, $0<\alpha_n<1$. Positivity follows from $q>2n$. Also,
\[
    (n-1)\log(1+z)
    <
    (n-1)z
    =
    \frac{2n-1}{2n}
    <
    1
    <
    \log(2n).
\]
Thus $(n-1)\log(1+z)<\log(2n)$, which is equivalent to
$\alpha_n<1$.
For $n\ge2$, we have $z\le1$, $1\le nz\le2$, and $2n<q\le4n$. Using $\frac{z}{1+z}\le\log(1+z)\le z$ gives us the stated bounds.
\end{proof}

The lemma already implies that assigning each good to a recipient minimizing $\Phi^t$ gives $\alpha_n$-PROP1. Its main use here is that the averaging inequality can be combined with the next potential.

\paragraph{Adding the envy potential.}

Let $T=m$ be a known horizon, and assume $T\ge n\log n$. We use the normalized valuations $\widehat v_i$ and the quantities $E_{ij}^t$ defined in \Cref{subsec:miv-constant-envy}.
Set $s:=\sqrt{2\log(1+\frac{n\log n}{T})}$, $C:=1+\frac{e^s+e^{-s}-2}{n}$, and $L:=10\sqrt{\frac{T\log n}{n}}+\frac{\log2}{s}$.
Also define
\begin{equation}
\label{eq:envy-potential}
    \xi_{ij}^t
    :=
    C^{T-t}\exp(s(E_{ij}^t-L)),
    \quad
    \Gamma^t:=\sum_{i\ne j}\xi_{ij}^t.
\end{equation}
For a possible recipient $a$, write $\Gamma^t(a)$ for the value after assigning the current good to $a$.

\begin{lemma}
\label{lem:envy-potential}
The potential in \eqref{eq:envy-potential} has the following properties:
\begin{enumerate}[(i)]
    \item for every realized history and current good, $\frac1n\sum_{a\in N}\Gamma^t(a) \le \Gamma^{t-1}$;
    \item $\Gamma^0<1/2$;
    \item if $\xi_{ij}^T\le1$, then $E_{ij}^T\le L$; and
    \item $L\le11\sqrt{\frac{T\log n}{n}}$.
\end{enumerate}
\end{lemma}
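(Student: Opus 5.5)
We prove the four properties of \Cref{lem:envy-potential} in order, following the exponential-potential argument of Benad\`e et al.~\cite{benade2018envyvanish}.

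\emph{Part (i).} Fix the realized history and the current good, and fix an ordered pair $i\ne j$. Write $h:=\widehat v_i(g_t)\in[0,1]$. The term $\xi_{ij}$ changes with the recipient $a$ as follows:
\begin{itemize}
\item if $a=j$, then $E_{ij}$ increases by $h$;
\item if $a=i$, then $E_{ij}$ decreases by $h$;
\item otherwise, $E_{ij}$ is unchanged.
\end{itemize}
Averaging over the $n$ recipients therefore gives
\[
    \frac1n\sum_{a}\xi_{ij}^t(a)=\xi_{ij}^{t-1}\,C^{-1}\Bigl(1+\frac{e^{sh}+e^{-sh}-2}{n}\Bigr).
\]
Since $\cosh$ is increasing on $[0,\infty)$ and $h\le1$, we have $e^{sh}+e^{-sh}-2\le e^s+e^{-s}-2$. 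The factor in parentheses is thus at most $C$, so the average of each term is at most its previous value. Summing over ordered pairs gives (i). The bound holds for every $h\in[0,1]$, so it does not depend on the adversary; this gives validity against adaptive adversaries.

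\emph{Parts (ii) and (iii).} Part (iii) is immediate: $C\ge1$, so $C^{T-T}=1$ and $\xi_{ij}^T\le1$ means $e^{s(E_{ij}^T-L)}\le1$, that is, $E_{ij}^T\le L$.

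For (ii), all $E_{ij}^0=0$, so
\[
    \Gamma^0=n(n-1)\,C^{T}e^{-sL}.
\]
The term $\log2/s$ in $L$ contributes the factor $e^{-\log 2}=1/2$. It remains to show
\[
    n(n-1)\,C^T e^{-10s\sqrt{T\log n/n}}<1 .
\]
We use $\log C\le (e^s+e^{-s}-2)/n$ and, for $s\le2$, the bound $e^s+e^{-s}-2\le s^2(1+s^2/10)$ or a similar one. This gives
\[
    T\log C\lesssim Ts^2/n=\frac{2T}{n}\log\Bigl(1+\frac{n\log n}{T}\Bigr)\le 2\log n .
\]
For the other factor, $T\ge n\log n$ means $x:=n\log n/T\le1$, so $\log(1+x)\ge x/2$ and hence $s\ge\sqrt{x}$. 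Then
\[
    10s\sqrt{T\log n/n}\ge10\sqrt{x}\sqrt{T\log n/n}=10\log n .
\]
Altogether,
\[
    \Gamma^0\le\tfrac12\, n^2 e^{(2+o(1))\log n-10\log n}<\tfrac12 ,
\]
with room to spare. The case $n=2$ has $\log n$ small and is checked directly.

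\emph{Part (iv).} This requires $\log 2/s\le\sqrt{T\log n/n}$. Using $s\ge\sqrt{x}$,
\[
    \frac{\log2}{s}\le\log 2\sqrt{\frac{T}{n\log n}}=\frac{\log 2}{\log n}\sqrt{\frac{T\log n}{n}} ,
\]
which is at most $\sqrt{T\log n/n}$ whenever $\log n\ge\log 2$, i.e.\ for all $n\ge2$.

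The main obstacle is the constant bookkeeping in (ii). We need a clean upper bound on $e^s+e^{-s}-2$ valid for $s\le\sqrt{2\log2}$, and a lower bound on $s$ in terms of $x$. Both are elementary, and the factor-$10$ slack absorbs the losses.
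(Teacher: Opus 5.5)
Your proof is correct and follows the paper's argument essentially step for step: the same averaging computation for (i), the immediate (iii), and the same bound $1/s\le\sqrt{T/(n\log n)}$ via $\log(1+u)\ge u/2$ for (iv). In (ii), the paper uses $\cosh s\le e^{s^2/2}$ to get $T\log C\le 2\log n$ exactly. Your Taylor bound instead loses a factor $1+s^2/10\le 1+\frac{\log 2}{5}$; this is a fixed constant, not $1+o(1)$ as you wrote, since $s^2=2\log 2$ when $T=n\log n$. It still gives $n(n-1)C^Te^{-10s\sqrt{T\log n/n}}<n^{2+2.28-10}<1$ for every $n\ge2$, so no separate check of $n=2$ is needed.
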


\begin{proof}
Fix $t$, distinct agents $i,j$, and let $w:=\widehat v_i(g_t)\in[0,1]$. Assigning $g_t$ to $j$ increases $E_{ij}$ by $w$, assigning it to $i$ decreases $E_{ij}$ by $w$, and every other assignment leaves it unchanged. Hence
\begin{equation*}
\frac1n\sum_{a\in N}\xi_{ij}^t(a)
= C^{T-t}e^{s(E_{ij}^{t-1}-L)} \frac{e^{sw}+e^{-sw}+n-2}{n} \le C^{T-t}e^{s(E_{ij}^{t-1}-L)} \frac{e^s+e^{-s}+n-2}{n} = \xi_{ij}^{t-1}.
\end{equation*}
Summing over all ordered pairs $(i,j)$ with $i\ne j$ proves the first statement.

At time zero,
\[
    \Gamma^0
    =
    \frac12 n(n-1)C^T
    \exp\left(-10s\sqrt{\frac{T\log n}{n}}\right)
    <
    \frac12\exp\left(
        2\log n+T\log C-10s\sqrt{\frac{T\log n}{n}}
    \right).
\]
Using $\log(1+x)\le x$ and $\cosh x\le e^{x^2/2}$, $T\log C\le2\log n$.
Let $x:=T/(n\log n)\ge1$. Since $x\log(1+1/x)\ge\log2$ for $x\ge1$,
\begin{equation*}
    10s\sqrt{\frac{T\log n}{n}}
    =10\sqrt{2x\log(1+1/x)}\,\log n\ge10\sqrt{2\log2}\,\log n
    >4\log n.
\end{equation*}
Thus $\Gamma^0<1/2$.

At time $T$, $\xi_{ij}^T=\exp(s(E_{ij}^T-L))$, so $\xi_{ij}^T\le1$ implies $E_{ij}^T\le L$.

Finally, $\log(1+u)\ge u/2$ for $u\in[0,1]$ gives us
\[
    s^2=2\log(1+1/x)\ge\frac1x,
    \quad
    \frac1s\le\sqrt{\frac{T}{n\log n}}.
\]
Since $\log2\le\log n$, by the definition of $L$, this proves (iv).
\end{proof}

The recipients minimizing the two potentials separately need not be the same. Their common averaging inequality lets us avoid this issue: we minimize the sum $\Phi^t+\Gamma^t$, whose initial value is below one. Keeping this sum below one makes every term in both potentials less than one at the end, so both guarantees hold for the same allocation.

\paragraph{Joint allocation rule.}
When good $g_t$ arrives, assign it to a recipient that minimizes the combined potential $\Omega^t:=\Phi^t+\Gamma^t$.

\begin{theorem}
\label{thm:miv-prop1-envy}
Assume perfect MIV predictions and a known horizon $T=m\ge n\log n$. Against adaptive adversaries, the joint allocation rule returns an allocation $A$ that simultaneously satisfies the following:
\begin{enumerate}[(i)]
    \item $A$ is $\alpha_n$-PROP1, where $\alpha_n$ is defined in \eqref{eq:exp-alpha} and $\frac{1}{2\log(4n)}
        \le
        \alpha_n
        \le
        \frac{2}{\log(2n)}$;
    \item the maximum additive envy under the normalized valuations satisfies $\max_{i,j\in N}(\widehat v_i(A_j)-\widehat v_i(A_i))
        \le
        11\sqrt{\frac{T\log n}{n}}$. Equivalently, for every $i,j\in N$,
        $v_i(A_j)-v_i(A_i) \le 11p_i\sqrt{\frac{T\log n}{n}}$.
\end{enumerate}
\end{theorem}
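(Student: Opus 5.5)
The argument is a standard combination of the two averaging inequalities. I would prove by induction on $t$ that $\Omega^t<1$ for $t=0,\dots,T$, where $\Omega^t=\Phi^t+\Gamma^t$ is the potential the rule minimizes. For the base case, \Cref{lem:exp-miv-potential}(ii) gives $\Phi^0\le1/2$ and \Cref{lem:envy-potential}(ii) gives $\Gamma^0<1/2$, so $\Omega^0<1$.

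For the inductive step, fix the realized history and the current good $g_t$. Adding parts (i) of the two lemmas gives
\[
\frac1n\sum_{a\in N}\Omega^t(a)=\frac1n\sum_{a\in N}\Phi^t(a)+\frac1n\sum_{a\in N}\Gamma^t(a)\le\Phi^{t-1}+\Gamma^{t-1}=\Omega^{t-1}.
\]
Hence some recipient $a$ satisfies $\Omega^t(a)\le\Omega^{t-1}<1$. The rule picks a minimizer, so it also satisfies $\Omega^t<1$. This comparison uses only the realized history and the current good, so it holds against adaptive adversaries. One subtlety needs checking: the index $t$ in $C^{T-t}$ inside $\Gamma^t$ refers to the new time. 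The computation in \Cref{lem:envy-potential}(i) already uses the factor $C^{T-t}$ for the candidate values and absorbs one factor of $C$ to recover $\xi^{t-1}_{ij}$, so the two lemmas combine consistently.

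At time $T=m$ we have $\Omega^T<1$. Every $\phi_i^T$ and every $\xi_{ij}^T$ is nonnegative, so each term is less than one. Then \Cref{lem:exp-miv-potential}(iii) gives $\alpha_n$-PROP1 for every agent with $p_i>0$. Agents with $p_i=0$ value every good at zero and satisfy the condition trivially. \Cref{lem:exp-miv-potential}(iv) supplies the stated bounds on $\alpha_n$, which proves (i).

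For (ii), \Cref{lem:envy-potential}(iii) gives $E_{ij}^T\le L$ for all $i\ne j$, and $E_{ii}^T=0$. Part (iv) of the same lemma, which uses $T\ge n\log n$, gives $L\le 11\sqrt{T\log n/n}$. Multiplying by $p_i$ converts the normalized envy bound to the bound in original values; when $p_i=0$, both sides of that bound are zero.

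I expect no step to be hard, since all the analytic work is in the two lemmas. The only point needing care is the time-indexing consistency noted above, and it is already handled in the proof of \Cref{lem:envy-potential}(i).
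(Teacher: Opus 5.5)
Your proposal is correct and follows the paper's proof: you add the two averaging inequalities, use that the rule picks a minimizer of $\Omega^t$ to get $\Omega^T\le\Omega^0<1$, and then read off both guarantees term by term from \Cref{lem:exp-miv-potential,lem:envy-potential}. Your remarks on the $C^{T-t}$ indexing and on agents with $p_i=0$ are consistent with the lemma proofs; they make explicit points that the paper leaves implicit.
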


\begin{proof}
Both component potentials satisfy the same averaging inequality, so $\frac1n\sum_{a\in N}\Omega^t(a) \le \Omega^{t-1}$.
The joint allocation rule chooses a minimizing recipient, and therefore $\Omega^t\le\Omega^{t-1}$ for every $t$. Initially, $\Omega^0=\Phi^0+\Gamma^0<\frac12+\frac12=1$.
Thus, $\Omega^T<1$. Every term in both potentials is nonnegative, so $\phi_i^T<1$ and $\xi_{ij}^T<1$ for all $i\ne j$. The conclusions follow from \Cref{lem:exp-miv-potential,lem:envy-potential}.
\end{proof}

\end{document}